\newcommand{\NN}{\ensuremath{\mathbb{N}}}
\newcommand{\ZZ}{\ensuremath{\mathbb{Z}}}
\newcommand{\BB}{\ensuremath{\mathbb{B}}}
\newcolumntype{C}{>{$}c<{$}}
\newcolumntype{L}{>{$}l<{$}}
\newcolumntype{R}{>{$}r<{$}}
\newcolumntype{P}[1]{>{\RaggedRight\arraybackslash}p{#1}}
\newcommand{\production}{\Coloneqq}
\newcommand{\hookuparrow}{\mathrel{\smaller{\rotatebox[origin=c]{90}{$\hookrightarrow$}}}}
\lstinline\endcsname{%
  \leavevmode
  \bgroup
}{%
  \leavevmode
  \ifmmode\hbox\fi
  \bgroup
}{}{%
  \typeout{Patching of \string\lstinline\space failed!}%
}
\newcommand{\li}{\let\par\endgraf\lstinline}
\newcommand*\ruleline[1]{\par\noindent\raisebox{.8ex}{\makebox[\linewidth]{\hrulefill\hspace{1ex}\raisebox{-.8ex}{#1}\hspace{1ex}\hrulefill}}}
\newcommand{\jbox}[1]{\begin{flushright}\fbox{#1}\end{flushright}}
\newcommand{\fancystyle}[1]{\ifbool{nofancystyle}{}{#1}}
\definecolor{kwcolor}{HTML}{6F42C1}
\definecolor{dfcolor}{HTML}{D73A49}
\definecolor{typecolor}{HTML}{22863A}
\definecolor{metacolor}{HTML}{005CC5}
\definecolor{cmcolor}{HTML}{6A737D}
\newcommand{\kwstyle}{\fancystyle{\color{kwcolor}}}
\newcommand{\dfstyle}{\fancystyle{\color{dfcolor}}}
\newcommand{\typestyle}{\fancystyle{\color{typecolor}}}
\newcommand{\metastyle}{\fancystyle{\color{metacolor}}}
\newcommand{\cmstyle}{\booltrue{nofancystyle}\color{cmcolor}}
\newcommand{\symbolstyle}{\fancystyle{\color{darkgray}}}
\newcommand{\oblivstyle}[1]{$\widehat{\mbox{#1}}$}
\newcommand{\outputstyle}[1]{$\dot{\mbox{#1}}$}
\definecolor{lightergray}{gray}{0.9}
\newcommand{\loadt}{\ensuremath{\lambda_{\mathtt{OADT}}}\xspace}
\newcommand{\loadtpsi}{\ensuremath{\lambda_{\mathtt{OADT}\Uppsi}}\xspace}
\newcommand{\taype}{\textsc{Taype}\xspace}
\newcommand{\taypsi}{\textsc{Taypsi}\xspace}
\newcommand{\psistructs}{$\Uppsi$-structures\xspace}
\newcommand{\psistruct}{$\Uppsi$-structure\xspace}
\newcommand{\oil}{\textsc{Oil}\xspace}
\newcommand{\tpsi}{\ensuremath{\Uppsi}}
\newcommand{\graylight}[1]%
{{\setlength{\fboxsep}{0pt}\colorbox{lightergray}{\strut #1}}}
\NewDocumentCommand{\fmath}{m}{\TextOrMath{{\footnotesize\ensuremath{#1}}}{#1}}
\newcommand{\empctx}{\fmath{\cdot}}
\newcommand{\gctx}{\fmath{\Sigma}}
\newcommand{\tctx}{\fmath{\Gamma}}
\newcommand{\sctx}{\fmath{\mathcal{S}}}
\newcommand{\lctx}{\fmath{\mathcal{L}}}
\newcommand{\hasview}{\fmath{\preccurlyeq}}
\newcommand{\ple}{\fmath{\sqsubseteq}}
\newcommand{\sctxoadt}{\fmath{\sctx_{\upomega}}}
\newcommand{\sctxjoin}{\fmath{\sctx_{\sqcup}}}
\newcommand{\sctxintro}{\fmath{\sctx_{I}}}
\newcommand{\sctxelim}{\fmath{\sctx_{E}}}
\newcommand{\sctxcoer}{\fmath{\sctx_{\uparrow}}}
\newcommand{\cstrs}{\fmath{\mathcal{C}}}
\newcommand{\cstr}{\fmath{c}}
\newcommand{\fctx}{\fmath{\mathcal{F}}}
\newcommand{\mctx}{\fmath{\mathcal{M}}}
\NewDocumentCommand{\eqcls}{m}{\fmath{[#1]}}
\NewDocumentCommand{\typectxaux}{m}%
{\ifbool{ctxnotfirst}{;}{\booltrue{ctxnotfirst}}#1}
\NewDocumentCommand{\typectx}{>{\SplitList{;}} m}%
{{\ProcessList{#1}{\typectxaux}}}
\NewDocumentCommand{\steparrowaux}{m m}%
{\longrightarrow\IfBooleanT{#1}{^{\ast}}\IfValueT{#2}{^{#2}}}
\NewDocumentCommand{\steparrow}{s o}{\steparrowaux{#1}{#2}}
\NewDocumentCommand{\stepbodyaux}{m m m m}%
{#1 \steparrowaux{#3}{#4} #2}
\NewDocumentCommand{\stepbody}{>{\SplitArgument{1}{~>}} m m m}%
{\stepbodyaux #1 {#2}{#3}}
\NewDocumentCommand{\stepaux}{m m m m}%
{\IfValueTF{#2}{#1 \vdash \stepbody{#2}{#3}{#4}}{\stepbody{#1}{#3}{#4}}}
\NewDocumentCommand{\step}{s o >{\SplitArgument{1}{|-}} m}%
{\fmath{\stepaux #3 {#1}{#2}}}
\NewDocumentCommand{\ovaltyaux}{m m}{#1 \Leftarrow #2}
\NewDocumentCommand{\ovalty}{>{\SplitArgument{1}{<~}} m}%
{\fmath{\ovaltyaux #1}}
\NewDocumentCommand{\indistaux}{m m}{#1 \approx #2}
\NewDocumentCommand{\indist}{>{\SplitArgument{1}{~}} m}%
{\fmath{\indistaux #1}}
\NewDocumentCommand{\haskindaux}{m m}{#1 \dblcolon #2}
\NewDocumentCommand{\haskindinternal}{>{\SplitArgument{1}{::}} m}%
{\haskindaux #1}
\NewDocumentCommand{\haskind}{m}{\fmath{\haskindinternal{#1}}}
\NewDocumentCommand{\typerelaux}{m m}{#1\IfValueT{#2}{ \sim #2}}
\NewDocumentCommand{\typerel}{>{\SplitArgument{1}{~}} m}{\fmath{\typerelaux #1}}
\NewDocumentCommand{\hastypebody}{m m}{#1 : \typerel{#2}}
\NewDocumentCommand{\hastypeaux}{>{\SplitArgument{1}{::}} m m}%
{\hastypebody #1\IfValueT{#2}{ \rhd #2}}
\NewDocumentCommand{\hastypeinternal}{>{\SplitArgument{1}{|>}} m}%
{\hastypeaux #1}
\NewDocumentCommand{\hastype}{m}{\fmath{\hastypeinternal{#1}}}
\NewDocumentCommand{\kindaux}{m m}%
{\IfValueTF{#2}%
  {\typectx{#1} \vdash \haskindinternal{#2}}%
  {\tctx \vdash \haskindinternal{#1}}}
\NewDocumentCommand{\kind}{>{\SplitArgument{1}{|-}} m}%
{\fmath{\kindaux #1}}
\NewDocumentCommand{\typeaux}{m m}%
{\IfValueTF{#2}%
  {\typectx{#1} \vdash \hastypeinternal{#2}}%
  {\tctx \vdash \hastypeinternal{#1}}}
\NewDocumentCommand{\typeinternal}{>{\SplitArgument{1}{|-}} m}%
{\typeaux #1}
\NewDocumentCommand{\type}{m}{\fmath{\typeinternal{#1}}}
\NewDocumentCommand{\typequivbodyaux}{m m}{#1 \equiv #2}
\NewDocumentCommand{\typequivbody}{>{\SplitArgument{1}{==}} m}{\typequivbodyaux #1}
\NewDocumentCommand{\typequivaux}{m m}%
{\IfValueTF{#2}%
  {#1 \vdash \typequivbody{#2}}%
  {\typequivbody{#1}}}
\NewDocumentCommand{\typequiv}{>{\SplitArgument{1}{|-}} m}%
{\fmath{\typequivaux #1}}
\NewDocumentCommand{\paredbodyaux}{m m}{#1 \Rrightarrow #2}
\NewDocumentCommand{\paredbody}{>{\SplitArgument{1}{~>}} m}{\paredbodyaux #1}
\NewDocumentCommand{\paredaux}{m m}%
{\IfValueTF{#2}%
  {#1 \vdash \paredbody{#2}}%
  {\paredbody{#1}}}
\NewDocumentCommand{\pared}{>{\SplitArgument{1}{|-}} m}%
{\fmath{\paredaux #1}}
\NewDocumentCommand{\extctxaux}{m}{\hastypeinternal{#1},}
\NewDocumentCommand{\extctx}{o >{\SplitList{;}} m}%
{\fmath{\ProcessList{#2}{\extctxaux}\IfValueTF{#1}{#1}{\tctx}}}
\NewDocumentCommand{\dtypeaux}{m m}%
{\IfValueTF{#2}{#1 \vdash #2}{\gctx \vdash #1}}
\NewDocumentCommand{\dtype}{>{\SplitArgument{1}{|-}} m}%
{\fmath{\dtypeaux #1}}
\NewDocumentCommand{\interpE}{O{n} m}{\fmath{\mathcal{E}_{#1}\llbracket#2\rrbracket}}
\NewDocumentCommand{\interpV}{O{n} m}{\fmath{\mathcal{V}_{#1}\llbracket#2\rrbracket}}
\NewDocumentCommand{\interpG}{O{n} m}{\fmath{\mathcal{G}_{#1}\llbracket#2\rrbracket}}
\newcommand{\subst}{\fmath{\sigma}}
\NewDocumentCommand{\mergeableaux}{m m}%
{{\curlywedgeuparrow}#1\IfValueT{#2}{\rhd #2}}
\NewDocumentCommand{\mergeable}{>{\SplitArgument{1}{|>}} m}%
{\fmath{\mergeableaux #1}}
\NewDocumentCommand{\coerciblebody}{m m}{#1 \rightarrowtail #2}
\NewDocumentCommand{\coercibleaux}{>{\SplitArgument{1}{~>}} m m}%
{\coerciblebody #1\IfValueT{#2}{\rhd #2}}
\NewDocumentCommand{\coercible}{>{\SplitArgument{1}{|>}} m}%
{\fmath{\coercibleaux #1}}
\newcommand{\liftR}{\type}
\NewDocumentCommand{\ppxaux}{m m}{#1 \rhd #2}
\NewDocumentCommand{\ppx}{>{\SplitArgument{1}{|>}} m}%
{\fmath{\ppxaux #1}}
\NewDocumentCommand{\liftAaux}{m m}{\typeinternal{#1} \mid #2}
\NewDocumentCommand{\liftA}{>{\SplitArgument{1}{~>}} m}%
{\fmath{\liftAaux #1}}
\NewDocumentCommand{\inclsaux}{m m}{#1 \in \eqcls{#2}}
\NewDocumentCommand{\incls}{>{\SplitArgument{1}{~}} m}%
{\fmath{\inclsaux #1}}
\NewDocumentCommand{\freshaux}{m}%
{\ifbool{freshnotfirst}{,}{\booltrue{freshnotfirst}}#1}
\NewDocumentCommand{\fresh}{>{\SplitList{;}} m}%
{\fmath{\ProcessList{#1}{\freshaux}\text{ fresh}}}
\NewDocumentCommand{\cstrsataux}{m m}%
{\IfValueTF{#2}%
  {\typectx{#1} \vDash #2}%
  {\subst \vDash #1}}
\NewDocumentCommand{\cstrsat}{>{\SplitArgument{1}{|-}} m}%
{\fmath{\cstrsataux #1}}
\NewDocumentCommand{\lctxvalid}{o m}{\fmath{\vDash\IfValueT{#1}{_#1}#2}}
\NewDocumentCommand{\lctxderiv}{m}{\fmath{\vdash#1}}
\NewDocumentCommand{\erase}{m}{\fmath{\lfloor#1\rfloor}}
\begin{document}

\title{\taypsi: Static Enforcement of Privacy Policies for Policy-Agnostic Oblivious Computation}

\author{Qianchuan Ye}
\orcid{0000-0002-5977-5236}
\affiliation{%
  \institution{Purdue University}
  \city{West Lafayette}
  \country{USA}
}
\email{ye202@purdue.edu}

\author{Benjamin Delaware}
\orcid{0000-0002-1016-6261}
\affiliation{%
  \institution{Purdue University}
  \city{West Lafayette}
  \country{USA}
}
\email{bendy@purdue.edu}

\begin{abstract}
Secure multiparty computation (MPC) techniques enable multiple parties to
compute joint functions over their private data without sharing that data with
other parties, typically by employing powerful cryptographic protocols to
protect individual's data. One challenge when writing such functions is that
most MPC languages force users to intermix programmatic and privacy concerns in
a single application, making it difficult to change or audit a program's
underlying privacy policy. Prior policy-agnostic MPC languages relied on dynamic
enforcement to decouple privacy requirements from program logic.  Unfortunately,
the resulting overhead makes it difficult to scale MPC applications that
manipulate structured data. This work proposes to eliminate this overhead by
instead transforming programs into semantically equivalent versions that
statically enforce user-provided privacy policies. We have implemented this
approach in a new MPC language, called \taypsi; our experimental evaluation
demonstrates that the resulting system features considerable performance
improvements on a variety of MPC applications involving structured data and
complex privacy policies.
\end{abstract}

\maketitle

\section{Introduction}
\label{sec:intro}

\emph{Secure multiparty computation} (MPC) techniques allow multiple
parties to jointly compute a function over their private data while
keeping that data secure. A variety of privacy-focused applications
can be formulated as MPC problems, including secure auctions, voting,
and privacy-preserving machine learning~\cite{evans2018, hastings2019,
  laud2015}. MPC solutions typically depend on powerful cryptographic
techniques, e.g., Yao's Garbled Circuits~\cite{yao1982} or secret
sharing~\cite{beimel2011}, to provide strong privacy guarantees. These
cryptographic techniques can be difficult for non-experts to use,
leading to the creation of several high-level languages that help
programmers write MPC applications~\cite{malkhi2004, zhang2013,
  rastogi2014, liu2015, zahur2015, rastogi2019, hastings2019,
  darais2020, acay2021, sweet2023, ye2022}. While raising the level of
abstraction, almost all of these languages intermix privacy and
programmatic concerns, requiring the programmers to explicitly enforce
the high-level privacy policies within the logic of the application
itself, using the secure operations provided by the language. As a
consequence, the entire application must be examined in order to audit
its privacy policy, and an application must be rewritten in order to
change its privacy guarantees. This intermixing of policy enforcement
and application logic thus makes it difficult to read, write, and
reason about MPC applications.

This is particularly true for applications with the sorts of complex
requirements that can occur in practice. Within the United States, for
example, the Health Insurance Portability and Accountability Act
(HIPAA) governs how patient data may be used. HIPAA allows
\textit{either} the personally identifiable information (PII)
\textit{or} medical data to be shared, but not both. Notably, this
policy does not simply specify whether some particular field of a
patient's medical record is private or public; rather it is a
\emph{relation} that dictates how a program can access and manipulate
different parts of every individual record. To conform to this policy,
an MPC application must either pay the (considerable) cryptographic
overhead of conservatively securing all accesses to the fields of a
record, or adopt a more sophisticated strategy for monitoring how data
is accessed. These challenges become more acute when dealing with
structured data, e.g., lists or trees, whose policies are necessarily
more complex. Consider a classifier that takes as input a decision
tree and a medical record, each of which is owned by a different
party: if the owner of the tree stipulates that its depth may be
disclosed, the classification function must use secure operations to
ensure that no other information about the tree is leaked, e.g., its
spine or the attributes it uses.  If the owner of this tree is willing
to share such information, however, this function must either be
rewritten to take advantage of the new, more permissive policy, or
continue to pay the cost of providing stricter privacy guarantees.
Thus, most existing MPC languages require users to write different
implementations of essentially the same program for each distinct
privacy policy.

A notable exception is \taype~\cite{ye2023}, a recently proposed language that
decouples privacy policies from programmatic concerns, allowing users to write
applications over structured data that are agnostic to any particular privacy
policy. To do so, \taype implements a novel form of the \emph{tape semantics}
proposed by~\citet{ye2022}. This semantics allows insecure operations
whose evaluation \emph{could} violate a policy to appear in a program, as long
as the results of these operations are \emph{eventually } protected. Under tape
semantics, such operations are lazily deferred until it is safe to execute them,
effectively \emph{dynamically} ``repairing'' potential leaks at runtime. Using
\taype, programmers can thus build a privacy-preserving version of a standard
functional program by composing it with a policy, specified as a \emph{dependent
type} equipped with security labels, relying on tape semantics to enforce the
policy during execution. Unfortunately, while this enforcement strategy
 disentangles privacy concerns from program logic, it also introduces
considerable overhead for applications that construct or manipulate structured
data with complex privacy requirements. Thus, this strategy does not scale to
the sorts of complex applications that could greatly benefit from this
separation of concerns.

This work presents \taypsi, a policy-agnostic language for writing MPC
applications that eliminates this overhead by instead transforming a
non-secure function into a version that \emph{statically} enforces a
user-provided privacy policy. \taypsi extends \taype with a form of
dependent sums, which we call \tpsi-types, that package together the
public and private components of an algebraic data type (ADT).
Each \tpsi-type is equipped with a set of \emph{\psistructs} which
play an important role in our translation, enabling it to, e.g.,
efficiently combine subcomputations that produce ADTs with different
privacy policies. Our experimental evaluation demonstrates that this
strategy yields considerable performance improvements over the
enforcement strategy used by \taype, yielding exponential improvements
on the most complex benchmarks in our evaluation suite.

To summarize, the contributions of this paper are as follows:
\begin{itemize}
\item We present \taypsi, a version of \taype extended with \tpsi-types, a form
  of dependent sums that enables modular translation of non-secure programs into
  efficient, secure versions. This language is equipped with a security type
  system that offers the same guarantees as \taype: after jointly computing a
  well-typed function, neither party can learn more about the other's private
  data than what can be gleaned from their own data and the output of the
  function.
\item We develop an algorithm that combines a program written in the non-secure
  fragment of \taypsi with a privacy policy to produce a secure private version
  that statically enforces the desired policy. We prove that this algorithm
  generates well-typed (and hence secure) target programs that are additionally
  guaranteed to preserve the semantics of the source programs.
\item We evaluate our approach on a range of case studies and microbenchmarks.
  Our experimental results demonstrate exponential performance improvements over
  the previous state-of-the-art (\taype) on several complicated benchmarks,
  while simultaneously showing no performance regression on the remaining
  benchmarks.
\end{itemize}

\section{Overview}
\label{sec:overview}

\begin{wrapfigure}{r}{.42\textwidth}
\vspace{-.55cm}
\begin{lstlisting}
data list = Nil | Cons int list

fn filter : list -> int -> list = ?xs y =>
  match xs with Nil => Nil
  | Cons x xs' =>
    if x <= y then Cons x (filter xs' y)
    else filter xs' y
\end{lstlisting}
  \caption{Filtering a list}
  \label{fig:list-filter}
\end{wrapfigure}
Before presenting the full details of our approach, we begin with an
overview of \taypsi's strategy for building privacy-preserving
applications.  Consider the simple \lstinline|filter| function
in~\Cref{fig:list-filter}, which drops all the elements in a list
above a certain bound.\footnote{\taypsi supports higher-order
  functions, but our overview will use this specialized version for presentation
  purposes.} Suppose Alice owns some integers, and wants to know which
of those integers are less than some threshold integer belonging to
Bob, but neither party wants to share their data with the other. MPC
protocols allow Alice and Bob to encrypt their data and then jointly
compute \lstinline|filter| using secure operations, without leaking
information about the encrypted data beyond what they can infer from
the final disclosed output. One (insecure) implementation strategy is
to simply encrypt everyone's integers and use a secure version of the
\lstinline!<=! operation to compute the resulting list. Under a
standard semi-honest threat model,\footnote{In this threat model, all
  parties can see the execution traces produced by a small-step
  semantics~\cite{ye2022}.} however, this naive strategy can reveal
private information, via the shape of the input and the intermediate program states.

As an example, assume Alice's input list is %
\lstinline|Cons [2] (Cons [7] (Cons [3] Nil))|, and Bob's input is
\lstinline|[5]|, where square brackets denote secure (encrypted) numbers, i.e.,
only the owner of the integer can observe its value. By observing that Alice's
private data is built from three \lstinline|Cons|s, Bob can already tell Alice
owns exactly $3$ integers, information that Alice may want to keep secret. In
addition, both parties can learn information from the control flow of the
execution of \lstinline|filter|: by observing which branch of \lstinline|if| is
executed, for example, Bob can infer that the second element of Alice's list is
greater than $5$. Thus, even if the integers are secure, both parties can still
glean information about the other's private data.

The particular policy that a secure application enforces can greatly impact the
performance of that application, since the control flow of an application cannot
depend on private data. In the case of our example, this means that the number
of recursive calls to \lstinline!filter! depends on the public information Alice
is willing to share. If Alice only wants to share the maximum length of her
list, for example, its encrypted version must be padded with dummy encrypted
values, and a secure version of \lstinline!filter! must recurse over these dummy
elements, in order to avoid leaking information to Bob through its control flow.
On the other hand, if Alice does not mind sharing the exact number of integers
she owns, the joint computation will not have to go over these values, allowing
a secure version of \lstinline!filter! to be computed more efficiently.

\subsection{Encoding Private Data and Policies}

\begin{wrapfigure}{r}{.32\textwidth}
  \vspace{-.55cm}
\begin{lstlisting}
obliv @list<= (k : nat) =
  if k = 0 then unit
  else unit @+ @int * @list<= (k-1)
\end{lstlisting}
\vspace{-.05cm}
\begin{lstlisting}
obliv @list== (k : nat) =
  if k = 0 then unit
  else @int * @list== (k-1)
\end{lstlisting}
\vspace{-.2cm}
\caption{Oblivious lists with maximum and exact length public views}
  \label{fig:list-oadt}
\end{wrapfigure}
\taypsi allows Alice and Bob to use types to describe what
public information can be shared about their private data (i.e., the
policy governing that data), and its type system guarantees that these
policies are not violated when jointly computing a function over that
data. At a high-level, a privacy policy for a structured
(i.e., algebraic) datatype specifies which of its components are
private and which can be freely shared. We call this publicly shared
information a \emph{public view}, reflecting that it is some
projection of the full data. Formally, policies in \taypsi are encoded
as \emph{oblivious algebraic data types} (OADTs)~\cite{ye2022},
dependent types that take a public view as a parameter. The body of an
OADT is the type of the private components of a data type, which are built
using \emph{oblivious} (i.e., secure) type formers, e.g., oblivious
fixed-width integer (\lstinline!@int!) and oblivious sum
(\lstinline!@+!).\footnote{By convention, we use $\hat{\dot}$ to
  denote the oblivious version of something.} An oblivious sum is
similar to a standard sum, but both its tag and ``payload''
(i.e., component) are obfuscated, so that an attacker cannot
distinguish between a left and right injection.  Essentially, an OADT
is a type-level function that maps the public view of a value to its
\emph{private representation}, i.e., the shape of its private
component.

\Cref{fig:list-oadt} shows two OADTs for the type \lstinline!list!:
\lstinline!@list<=!, whose public view is the maximum length of a list, and
\lstinline!@list==!, whose public view is the exact length. A public view can be
any public data type. We say \lstinline!list! is the \emph{public type} or
public counterpart of the OADTs \lstinline!@list<=! and \lstinline!@list==!. The
key invariant of OADTs is that private values with the same public view are
\emph{indistinguishable} to an attacker, as their private representation is
completely determined by the public view. For example, all private lists of type
\lstinline!@list<= 2! have the same private representation, regardless of the
actual length of the list: %
\fmath{\li!@list<= 2! \equiv \li!unit @+ @int * (unit @+ @int * unit)!}. Thus,
an attacker cannot learn anything about the structure of an OADT, outside of
what is entailed by its public view: an empty list, singleton list, or a list
with two elements of type \lstinline|@list<= 2| all appear the same to an
attacker.

\begin{wrapfigure}{r}{.40\textwidth}
\centering
\vspace{-.48cm}
\begin{tikzcd}[row sep=huge,math mode=false]
  \lstinline!int! \arrow[bend left]{d}{\lstinline!@int\#s!} \\
  \lstinline!@int! \arrow[bend left]{u}{\lstinline!@int\#r!}
\end{tikzcd}
\begin{tikzcd}[row sep=huge,math mode=false]
  \fmath{\Set{\hastype{\lstinline!l! :: \lstinline!list!} | \lstinline!length l! \le \lstinline!k!}} \arrow[bend left]{d}{\lstinline!@list<=\#s!} \\
  \lstinline!@list<= k! \arrow[bend left]{u}{\lstinline!@list<=\#r!}
\end{tikzcd}
\vspace{-.2cm}
  \caption{Public and oblivious types}
  \label{fig:oadt-sr}
\end{wrapfigure}
Conceptually, OADTs generalize the notion of secure fixed-width
integers to secure structured data, as illustrated in
\Cref{fig:oadt-sr}. Every fixed-width integer (of type
\lstinline!int!) can be sent to its secure value in \lstinline!@int!
by ``encryption'', and a secure integer can be converted back to
\lstinline!int! by ``decryption''. In \taypsi, these conversion
functions are called \emph{section} (e.g., \lstinline!@int#s!) and
\emph{retraction} (e.g., \lstinline!@int#r!). The names reflect their
expected semantics: applying retraction to the section of a value
should produce the same value. Importantly, while the oblivious
integer type \lstinline!@int! does not appear to have much structure,
it nonetheless has an implicit policy: the public view of an integer
is its bit width. If we use $32$-bit integers, for example,
\lstinline!int! is the set of all integers whose bit width is $32$,
and \lstinline!@int! is the set of their ``encrypted'' values, related
by a pair of conversion functions. Similarly, \lstinline!@list<= k!
consists of the secure encodings of lists that have at most
\lstinline!k! elements. Like \lstinline!@int!, \lstinline!@list<=! is
equipped with a section function, \lstinline!@list<=#s!, and a
retraction function, \lstinline!@list<=#r!, which convert public
values of \lstinline!list! to their oblivious counterparts and
back. Crucially, just as the oblivious integers in \lstinline!@int!
are indistinguishable, the elements of \lstinline!@list<= k! are also
indistinguishable.

In the implementation of \taypsi, oblivious values are represented
using arrays of secure values. To ensure that attackers cannot learn
anything from the ``memory layout'' of an OADT value, the size of this
array is the same for all values of a particular OADT. As an example,
the encoding of the list %
\lstinline!Cons 10 (Cons 20 Nil)! as an oblivious list of type
\lstinline!@list<= 2! %
is \lstinline!@inr ([10], @inr ([20], ()))!, where \lstinline!@inr!
(\lstinline!@inl!) is the oblivious counterpart of standard sum
injection \lstinline!inr! (\lstinline!inl!). ``Under the hood'' this
oblivious value is represented as an array holding four secure values;
in the remainder of this section, we will informally write this value
as \lstinline![Cons,10,Cons,20]!, where \lstinline![Cons]! is a
synonym of the tag \lstinline![inr]! for readability. As another
example, the empty list \lstinline|Nil| is encoded as %
\lstinline!@inl ()!; it is also represented using an array with four
elements, \lstinline![Nil,-,-,-]!, where the last three elements are
dummy encrypted values (denoted by \lstinline!-!). Our compiler uses
the type of \lstinline!@inl! to automatically pad this array with
these values, in order to ensure that it is indistinguishable from
other private values of \lstinline!@list<= 2!.

\subsection{Enforcing Policies}

Although using OADTs ensures that the representation of private information does
not leak anything, both parties can still learn information by observing the
control flow of a program. In order to protect private data from control flow
channels, \taypsi provides oblivious operations to manipulate private data
safely. One such operation is the atomic conditional
\lstinline|mux|,\footnote{The oblivious version of \lstinline!if! in \taypsi is
called \lstinline!mux!, not \lstinline!@if!, in order to be consistent with the
MPC literature.} a version of \lstinline|if| that fully evaluates \emph{both}
branches before producing its final result. To understand why this evaluation
strategy is necessary, consider the following example of what would happen if we
were to evaluate \lstinline!mux! like a standard \lstinline!if! expression:

{\footnotesize
\setlength{\abovedisplayskip}{0pt}
\setlength{\abovedisplayshortskip}{-.5em}
\[
\begin{array}{l}
  \li!mux [true] ([2] @+ [3]) [4]! \steparrow
  \li![2] @+ [3]! \steparrow
  \li![5]!
\end{array}
\]
}%
Even when all the private data (i.e., the integers in square brackets)
is hidden, an attacker can infer that the private condition is
\lstinline!true! by observing that \li!mux! evaluates to the
expression in its then branch.

With the secure semantics of \lstinline!mux!, however, the following execution
trace does not reveal any private information:

{\footnotesize
\setlength{\abovedisplayskip}{0pt}
\setlength{\abovedisplayshortskip}{-.5em}
\[
\begin{array}{l}
  \li!mux ([0] @<= [1]) ([2] @+ [3]) ([4] @+ [5])! \steparrow
  \li!mux [true] ([2] @+ [3]) ([4] @+ [5])! \steparrow \\
  \li!mux [true] [5] ([4] @+ [5])! \steparrow
  \li!mux [true] [5] [9]! \steparrow
  \li![5]!
\end{array}
\]
}%
Since both branches are evaluated regardless of the private condition,
an attacker cannot infer that condition from this execution trace
(again, all secure values are indistinguishable to an attacker). Thus,
falsifying the condition produces an equivalent trace, modulo the
encrypted data:

{\footnotesize
\setlength{\abovedisplayskip}{0pt}
\setlength{\abovedisplayshortskip}{-.5em}
\[
\begin{array}{l}
  \li!mux ([6] @<= [1]) ([2] @+ [3]) ([4] @+ [5])! \steparrow
  \li!mux [false] ([2] @+ [3]) ([4] @+ [5])! \steparrow \\
  \li!mux [false] [5] ([4] @+ [5])! \steparrow
  \li!mux [false] [5] [9]! \steparrow
  \li![9]!
\end{array}
\]
}

The security-type system of \taypsi ensures all operations on private
data are done in a way that does not reveal any private information,
outside the public information specified by the policies.

\subsection{\emph{Automatically} Enforcing Policies}

Users can directly implement privacy-preserving applications in
\taypsi using OADTs and secure operations, but this requires manually
instrumenting programs so that their control flow only depends on
public information. Under this discipline, the implementation of a
secure function intertwines program logic and privacy policies: the
secure version of \lstinline|filter| requires a different
implementation depending on whether Alice is willing to share the
exact length of her list, or an upper bound on that
length. \taype~\cite{ye2023} decouples these concerns by allowing
programs to include unsafe computations and repairing unsafe
computations at runtime, using a novel form of semantics called
\emph{tape semantics}. As an example of this approach, in \taype, a
secure implementation of \lstinline!filter! that allows Alice to only
share an upper bound on the size of her list can be written as:
\begin{lstlisting}
fn @filter<= : (k : nat) -> @list<= k -> @int -> @list<= k =
  ?k @xs @y => @list<=#s k (filter (@list<=#r k @xs) (@int#r @y))
\end{lstlisting}
The type signature of \lstinline!@filter<=! specifies the policy it
must follow. Intuitively, its implementation first ``decrypts'' the
private inputs, applying the standard \lstinline!filter! function to
those values, and then ``re-encrypts'' the filtered list. In this
example, the retractions of the private inputs \lstinline!@xs! and
\lstinline!@y! are unsafe computations that would violate the desired
policy if they were computed naively. Fortunately, using the tape
semantics prevents this from occurring by deferring these computations
until it is safe to do so. Less fortunately, the runtime overhead of
dynamic policy enforcement makes it hard to scale private applications
manipulating structured data. As one data point, the secure version of
\lstinline!filter! produced by \taype takes more than $5$ seconds to
run with an oblivious list \lstinline!@list<=! with sixteen elements,
and its performance grows exponentially worse as the number of
elements increases.

To understand the source of this slowdown, consider a computation that filters a
private list containing $10$ and $20$ with integer $15$: %
\lstinline!@filter<= 2 [Cons,10,Cons,20] [15]!. The first step in evaluating
this function is to compute %
\lstinline!@list<=#r 2 [Cons,10,Cons,20]!. Completely reducing this
expression leaks information, so tape semantics instead stops
evaluation after producing the following computation:\footnote{We
  refer interested readers to \citet{ye2022, ye2023} for a complete
  accounting of tape semantics.}
\begin{lstlisting}
mux [false] Nil (Cons (@int#r [10]) (mux [false] Nil (Cons (@int#r [20]) Nil)))
\end{lstlisting}
\noindent The two \lstinline![false]!s are the results of securely
checking if the two constructors in the input list are
\lstinline!Nil!. Observe that evaluating either \lstinline!mux! or
\lstinline!@int#r! would reveal private information, so the evaluation
of these operations is deferred. This delayed computation can be
thought of as an ``if-tree'' whose internal nodes are the private
conditions needed to compute the final results, and whose leaves hold
the result of the computation along each corresponding control flow
path. To make progress, tape semantics distributes the context
surrounding a delayed computation, \lstinline!filter! and then
\lstinline!@list<=#s! in this example, into each of its leaves; having
done so, those leaves can be further evaluated. Importantly, in our
example, the leaves of this if-tree are eventually re-encrypted using
\lstinline!@list<=#s!. The tape semantics does so in a secure way, so
that \lstinline!@int#r [10]! becomes \lstinline![10]! again, and each
result list is converted to a secure value of the expected OADT. Once
the branches of a \lstinline|mux| node have been reduced to oblivious
values of the same type, the node itself can be securely reduced
using the secure semantics of \lstinline!mux!. Unfortunately, the
if-tree produced by the tape semantics can grow exponentially large
before its \lstinline|mux| nodes can be reduced. For example, after
applying \lstinline!filter! to the if-tree produced by
\lstinline!@list<=#r!, the resulting if-tree has a leaf corresponding
to every possible list that \lstinline!filter! could produce; the
number of these leaves is exponential in the maximum length of the
input list. As any surrounding computation, i.e., \lstinline!@list<=#s!
in our example, can be distributed to each of these leaves, an
exponential number of computations may need to be performed before the
if-tree can be collapsed.

To remedy these limitations, this paper proposes to instead compile an
insecure program into a secure version that \emph{statically} enforces
a specified policy. To do so, we extend \taype, the secure language of
\citet{ye2023} with \emph{\tpsi-types}, a form of \emph{dependent
  sums} (or dependent pairs) that packs public views and the oblivious
data into a uniform representation. For example, \lstinline!&@list<=!
is the oblivious list \lstinline!@list<=! with its public view: %
\lstinline!&(2, @inr ([10], @inr ([20], ())))&! and %
\lstinline!&(2, @inl ())&! are elements of type \lstinline!&@list<=!,
corresponding to the examples in the previous section. The first component
of this pair-like syntax is a public view and the second component is
an OADT whose public view is exactly the first component. This allows
users to again derive a private filter function from its type
signature:
\begin{lstlisting}
fn @filter<= : &@list<= -> @int -> &@list<= = Mlift filter
\end{lstlisting}
Users no longer need to explicitly provide the public views for either
the output or any intermediate subroutines: both are automatically
inferred. As a result, the policy specification of
\lstinline!@filter<=! more directly corresponds to the type signature
of \lstinline!filter!.  In addition, specifying policies using
\tpsi-types avoids mistakes in the supplied public views: using
\taype, if the programmer mistakenly specifies the return type
\lstinline!@list<= (k-1)! for a secure version of \lstinline|filter|,
for example, the resulting implementation may truncate the last
element of the result list. A keyword \lstinline!Mlift! is used to
translate the standard non-secure function \lstinline!filter! to a
private version that respects the policy specification.

To understand how this translation works, consider a naive approach
where each algebraic data type (ADT) is thought of as an abstract
interface, whose operations correspond to the \emph{introduction} and
\emph{elimination} forms of the algebraic data type\footnote{As
  \taypsi already supports general recursion, we use pattern matching
  instead of recursion schemes as our elimination forms.}. An ADT,
e.g., \lstinline!list!, as well as any corresponding \tpsi-type, e.g.,
\lstinline!&@list<=! and \lstinline!&@list==!, are implementations or
\emph{instances} of this interface. For example, an interface for list
operations is:
\begin{lstlisting}
ListLike t = {Nil : unit -> t; Cons : @int * t -> t; Imatch : t -> (unit -> alpha) -> (@int * t -> alpha) -> alpha}
\end{lstlisting}

As long as \lstinline!&@list<=! and \lstinline!&@list==! implement this interface,
we could straightforwardly translate \lstinline!filter! to a secure version:
{
\vspace{-1em}
\begin{center}
\begin{tabular}{w{c}{.47\textwidth}|w{c}{.48\textwidth}}
\begin{lstlisting}
fn @filter<= : &@list<= -> @int -> &@list<= = ?xs y =>
  @list<=#Imatch xs (?_ => @list<=#Nil ())
    (?(x, xs') =>
      mux (x @<= y) (@list<=#Cons x (@filter<= xs' y))
          (@filter<= xs' y))
\end{lstlisting}&
\begin{lstlisting}
fn @filter== : &@list== -> @int -> &@list== = ?xs y =>
  @list==#Imatch xs (?_ => @list==#Nil ())
    (?(x, xs') =>
      mux (x @<= y) (@list==#Cons x (@filter== xs' y))
          (@filter== xs' y))
\end{lstlisting}
\end{tabular}
\end{center}
}

This strategy does not rely on unsafe retractions like
\lstinline!@list<=#r!, as private data always remains in its secure
form, eliminating the need to defer unsafe computations, which is the
source of exponential slowdowns in \taype. Unfortunately, there are
several obstacles to directly implementing this strategy. First, an
ADT and an OADT may not agree on the type signatures of the abstract
interface. \lstinline!ListLike! fixes the argument types of operations
like \lstinline!Cons! and \lstinline!Imatch!, meaning that
\lstinline!list! is not an instance of this abstract interface,
despite \lstinline!list! being a very reasonable (albeit very
permissive) policy! In general, different OADTs may only be
able to implement operations with specific signatures. Second, a
private function may involve a mixture of oblivious types. Thus, some
functions may need to coerce from one type to a ``more'' secure
version. For example, if the policy of \lstinline!@filter<=! is %
\lstinline!&@list<= -> int -> &@list<=!, %
its second argument \lstinline!y! will need to be converted to %
\lstinline!@int! %
in order to evaluate %
\lstinline!x @<= y!. A secure list that discloses its exact length may
similarly need to be converted to one disclosing its maximum
length. Third, this naive translation results in ill-typed programs,
because the branches of a \lstinline!mux! may have mismatched public
views. In \lstinline!@filter<=!, for example, the branches of
\lstinline|mux| may evaluate to %
\lstinline!&(2, [Cons,10,Cons,20])&! and \lstinline!&(1, [Cons,20])&!,
respectively. Thus, \taypsi's secure type system will (rightly) reject
\lstinline!@filter<=! as leaky. Lastly, the signatures that should be
ascribed to any subsidiary function calls may not be obvious.
Consider the following client of \lstinline!filter!:
\begin{lstlisting}
fn filter5 : list -> list = ?xs => filter xs 5
\end{lstlisting}
If \lstinline|filter5| is given a signature %
\lstinline!&@list<= -> &@list<=!, we would like to use a secure
version of the \lstinline!filter!  function with the type %
\lstinline!&@list<= -> int -> &@list<=!, as the threshold argument is
publicly known. In general, a function may have many private versions,
and we should infer which version to use at each callsite: a recursive
function may even recursively call a different ``version'' of itself.

To solve these challenges, we generalize the abstract interface described above
into a set of more flexible structures, which we collectively refer to as \psistructs
(\Cref{sec:struct}).  Intuitively, each category of \psistructs solves one of
the challenges described above. Our translation algorithm (\Cref{sec:lift})
generates a set of typing \emph{constraints} for the intermediate expressions in
a program. These constraints are then solved using the set of available
\psistructs, resulting in multiple private versions of the necessary functions
and ruling out the infeasible ones, e.g., \lstinline!@filter==!.

\begin{figure}[t]
\footnotesize
\ruleline{\textsc{OADT-structure}}
\begin{minipage}[t]{.48\textwidth}
\begin{lstlisting}[aboveskip=-.5em]
fn @list<=#s : (k : nat) -> list -> @list<= k =
  ?k xs =>
    if k = 0 then ()
    else match xs with Nil => @inl ()
         | Cons x xs' =>
           @inr (@int#s x, @list<=#s (k-1) xs')
\end{lstlisting}
\begin{lstlisting}[aboveskip=0pt]
fn @list<=#view : list -> nat = length
\end{lstlisting}
\end{minipage}%
\begin{minipage}[t]{.52\textwidth}
\begin{lstlisting}[aboveskip=-.5em]
unsafe fn @list<=#r : (k : nat) -> @list<= k -> list =
  ?k =>
    if k = 0 then ?_ => Nil
    else ?xs =>
      @match xs with @inl _ => Nil
      | @inr (x, xs') =>
        Cons (@int#r x) (@list<=#r (k-1) xs')
\end{lstlisting}
\end{minipage}

\ruleline{\textsc{Intro/elim-structure}}
\begin{minipage}[t]{.48\textwidth}
\begin{lstlisting}[aboveskip=-.5em]
fn @list<=#Nil : unit -> &@list<= = ?_ => &(0, ())&

fn @list<=#Cons : @int * &@list<= -> &@list<= =
  ?(x, &(k, xs)&) =>
    &(k+1, @inr (x, xs))&
\end{lstlisting}
\end{minipage}%
\begin{minipage}[t]{.52\textwidth}
\begin{lstlisting}[aboveskip=-.5em]
fn @list<=#Imatch :
  &@list<= -> (unit -> alpha) -> (@int * &@list<= -> alpha) -> alpha =
  ?&(k, xs)& f1 f2 =>
    (if k = 0 then ?_ => f1 ()
     else ?xs =>
       @match xs with @inl _ => f1 ()
       | @inr (x, xs') => f2 (x, &(k-1, xs')&)
         : @list<= k -> alpha) xs
\end{lstlisting}
\end{minipage}

\ruleline{\textsc{Join-structure}}
\begin{lstlisting}
fn @list<=#join : nat -> nat -> nat = max
\end{lstlisting}
\begin{lstlisting}[aboveskip=0pt]
fn @list<=#reshape : (k : nat) -> (k' : nat) -> @list<= k -> @list<= k' = ?k k' =>
  if k' = 0 then ?_ => ()
  else if k = 0 then ?_ => @inl ()
       else ?xs => @match xs with @inl _ => @inl ()
                   | @inr (x, xs') => @inr (x, @list<=#reshape (k-1) (k'-1) xs')
\end{lstlisting}

\caption{\psistructs of \lstinline!@list<=!}
\label{fig:obliv-list-struct}
\end{figure}

\Cref{fig:obliv-list-struct} presents the methods of each category of
\psistructs of \lstinline!@list<=!. The first two methods, \lstinline!@list<=#s!
and \lstinline!@list<=#r!, are its section and retraction functions, belonging
to the OADT-structure category. Unlike \taype, these two functions are not
directly used to derive secure implementations of functions. In fact, our type
system guarantees that retraction functions are never used in a secure
computation, because \taypsi does not rely on tape semantics to repair unsafe
computation (the \lstinline!unsafe fn! keyword tells our type checker that
\lstinline!@list<=#r! is potentially leaky). Our implementation of \taypsi
exposes section and retraction functions as part of the API of the secure
library it generates, however, so that client programs can conceal their private
input and reveal the output of secure computations. This structure also includes
a \lstinline!view! method, which our translation uses to select the public view
needed to safely convert a \lstinline!list!  into a \lstinline!&@list<=!.
\Cref{fig:obliv-list-struct} does not show coercion methods, but the
programmers can define a coercion from \lstinline!&@list==! to
\lstinline!&@list<=!, for example.

The next set of methods belong to the intro-structure and elim-structure
category. These introduction (\lstinline!@list<=#Nil! and
\lstinline!@list<=#Cons!) and elimination (\lstinline!@list<=#Imatch!) methods
construct and destruct private list, respectively. As we construct and
manipulate data, these methods build the private version, calculate its public
view, and record that view in \tpsi-types.  Their type signatures are specified
by the programmers, as long as the signatures are \emph{compatible} with
\lstinline!int * list! (\Cref{sec:struct}). 

The \lstinline!join! and \lstinline!reshape! methods in the join-structure
category enable translated programs to include private conditionals whose
branches return OADT values with different public views. As an example, consider
the following private conditional whose branches have \tpsi-types:
\begin{lstlisting}
mux [true] &(2, [Cons,10,Cons,20])& &(1, [Cons,20])&
\end{lstlisting}
To build a version of this program that does not reveal
\lstinline![true]!, \taypsi uses \lstinline!join! to calculate a
common public view that ``covers'' both branches. In this example,
\lstinline!@list<=#join! chooses a public view of $2$, as a list with
at most one element also has at most two elements. Our translation
then uses the \lstinline!reshape! method to convert both branches to
use this common public view. In our example, \lstinline![Cons,20]!, an
oblivious list of maximum length $1$, is converted into the list %
\lstinline![Cons,20,Nil,-]!, which has maximum length $2$. Since both
branches in the resulting program have the same public view, it is safe
to evaluate \lstinline|mux|: the resulting list is equivalent to %
\li!&(2, mux [true] [Cons,10,Cons,20] [Cons,20,Nil,-])&!.
As we will see later, not all OADTs admit join structures, e.g.,
\lstinline!@list==!, but our translation generates constraints that take
advantage of any that are available, failing when these constraints cannot be
resolved in a way that guarantees security. Note that these two methods are key
to avoiding the slowdown exhibited by \taype's enforcement strategy: they allow
functions that may return different private representations to be \emph{eagerly}
evaluated, instead of being lazily deferred in a way that requires an
exponential number of subcomputations to resolve.

In summary, to develop a secure application in \taypsi, programmers
first implement its desired functionality, e.g., \lstinline!filter!,
in the public fragment of \taypsi, independently of any particular
privacy policy. Policies are separately defined as oblivious algebraic
data types, e.g., \lstinline!@list<=!, and their \psistructs. Users
can then automatically derive a secure version of their application by
providing the desired policy in the form of a type signature involving
\tpsi-types, relying on \taypsi's compiler to produce a
privacy-preserving implementation. The type system of \taypsi, like
\taype's, provides a strong security guarantee in the form of an
obliviousness theorem (\Cref{thm:obliviousness}). This obliviousness
theorem is a variant of noninterference~\cite{goguen1982}, and ensures
that well-typed programs in \taypsi are \emph{secure by construction}:
no private information can be inferred even by an attacker capable of
observing every state of a program's execution. Our compilation
algorithm is further guaranteed to generate a secure implementation
that preserves the behavior of the original program
(\Cref{thm:correct}).\footnote{\taypsi's formal guarantees
  (\Cref{sec:meta+lift}) do not cover equi-termination of the source
  and target programs: when the public view lacks sufficient
  information to bound the computation of the original program, the
  secure version will not terminate, in order to avoid leaking
  information through its termination behavior.}

The following three sections formally develop the language \taypsi, the
\psistructs, and our translation algorithm.

\section{\taypsi, Formally}
\label{sec:formal}

This section presents \loadtpsi, the core calculus for secure computation that
we will use to explain our translation. This calculus extends the existing
\loadt~\cite{ye2022} calculus with \tpsi-types, and uses ML-style ADTs in lieu
of explicit \lstinline!fold! and \lstinline!unfold! operations.\footnote{For
simplicity, \loadtpsi does not include public sums and oblivious integers, which
are straightforward to add.}

\subsection{Syntax}

\begin{figure}[t]
\footnotesize

\begin{tabular}{r@{\hspace{2pt}}C@{}L@{\hspace{6pt}}l@{\hspace{6pt}}C@{\hspace{2pt}}L@{\hspace{6pt}}l}
\lstinline|e,tau| &\production&&&&& \textsc{Expressions} \\
&\mid& \lstinline!unit! \mid \lstinline!bool! \mid \lstinline!@bool! \mid%
  \lstinline!tau*tau! \mid \lstinline!tau@+tau! & public \& oblivious types%
&\mid& \graylight{\lstinline!&@T!} & \tpsi-type \\
&\mid& \lstinline!Pix:tau,tau! \mid%
  \lstinline!?x:tau=>e! & dependent function%
&\mid& \lstinline!(e,e)! \mid%
  \graylight{\lstinline!&(e,e)&!} & pair \& \tpsi-pair \\
&\mid& \lstinline!()! \mid \lstinline!b! \mid \lstinline!x! \mid \lstinline!T! %
  & literals \& variables%
&\mid& \graylight{\lstinline!pi_1 e!} \mid%
  \graylight{\lstinline!pi_2 e!} & product and \tpsi-type projection \\
&\mid& \lstinline!let x = e in e! & let binding%
&\mid& \lstinline!@inl<tau> e! \mid%
  \lstinline!@inr<tau> e! & oblivious sum injection \\
&\mid& \lstinline!e e! \mid \lstinline|C e| \mid \lstinline|@T e|%
  & applications%
&\mid& \lstinline!@match e with x=>e|x=>e! & oblivious sum elimination \\
&\mid& \lstinline!if e then e else e! & conditional%
&\mid& \lstinline!match e with !\overline{\lstinline!C x=>e!}%
  & ADT elimination \\
&\mid& \lstinline!mux e e e! & oblivious conditional%
&\mid& \lstinline!@bool#s e! & boolean section \\
\end{tabular}%
\vspace{4pt}
\begin{tabular}{r@{\hspace{2pt}}C@{}L@{\hspace{6pt}}lr@{\hspace{2pt}}C@{\hspace{2pt}}L@{\hspace{6pt}}l}
\lstinline|D| &\production&& \textsc{Global Definitions}%
&\lstinline|@w| &\production& \lstinline!unit! \mid \lstinline!@bool! \mid%
  \lstinline!@w*@w! \mid \lstinline!@w@+@w! &
  \textsc{Obliv. Type Values} \\
&\mid& \lstinline!data T = !\overline{\lstinline!C tau!} &%
  algebraic data type%
&\lstinline|@v| &\production& \lstinline!()! \mid \lstinline![b]! \mid%
  \lstinline!(@v,@v)!&%
  \textsc{Obliv. Values} \\
&\mid& \lstinline!fn x:tau = e! & (recursive) function%
&&\mid&\lstinline![inl<@w> @v]! \mid \lstinline![inr<@w> @v]! & \\
&\mid& \lstinline!obliv @T (x:tau) = tau! &%
  (recursive) obliv. type%
&\lstinline|v| &\production& \lstinline|@v| \mid \lstinline!b! \mid%
  \lstinline!(v,v)! \mid \graylight{\lstinline!&(v,v)&!} \mid%
  \lstinline!?x:tau=>e! \mid \lstinline!C v! & \textsc{Values} \\
\end{tabular}

\caption{\loadtpsi syntax with extensions to \loadt highlighted }
\label{fig:syntax}
\end{figure}

\Cref{fig:syntax} presents the syntax of \loadtpsi. Types and
expressions are in the same syntax class, as \loadtpsi is dependently
typed, but we use \lstinline!e! for expressions and \lstinline!tau!
for types when possible. A \loadtpsi program consists of a set of
\emph{global definitions} of data types, functions and oblivious
types. Definitions in each of these classes are allowed to refer to
themselves, permitting recursive types and general recursion in both
function and oblivious type definitions. We use \lstinline!x!  for
variable names, \lstinline!C! for constructor names, \lstinline!T!
for type names, and \lstinline!@T! for oblivious type names. Each
constructor of an ADT definition takes exactly one argument, but this
does not harm expressivity: this argument is \lstinline!unit! for
constructors that take no arguments, e.g., \lstinline!Nil!, and a
tuple of types for constructors that have more than one argument,
e.g., \lstinline!Cons! takes an argument of type %
\lstinline!int * list!.

In addition to standard types and dependent function types
(\lstinline!Pi!), \loadtpsi includes oblivious booleans
(\lstinline!@bool!) and oblivious sum types (\lstinline!@+!). The
elimination forms of these types are oblivious
conditionals \lstinline!mux! and oblivious case analysis
\lstinline!@match!, respectively. The branches of both expressions
must be private and each branch has to be fully evaluated
before the expression can take an atomic step to a final result. Boolean
section \lstinline!@bool#s! is a primitive operation that ``encrypts''
a boolean expression to an oblivious version. Oblivious injection
\lstinline!@inl! and \lstinline!@inr! are the oblivious counterparts
of the standard constructors for sums. Other terms are mostly
standard, although let bindings (\lstinline!let!),
conditionals (\lstinline!if!) and pattern matching (\lstinline!match!)
are allowed to return a type, as \loadtpsi supports type-level
computation.

The key addition over \loadt is the \emph{\tpsi-type},
\lstinline!&@T!. It is constructed from a pair expression
\lstinline!&(DOT,DOT)&! that packs the public view and the oblivious
data together, and has the same eliminators \lstinline!pi_1!  and
\lstinline!pi_2! as normal products. As an example, %
\lstinline!&(3, @list<=#s 3 (Cons 1 Nil))&! creates a \tpsi-pair of type
\lstinline!&@list<=! with public view $3$, using the section function
from \Cref{fig:obliv-list-struct}. Projecting out the second component
of a pair using \lstinline!pi_2! produces a value of type
\lstinline!@list<= 3!. A \tpsi-type is essentially a dependent sum type
(\lstinline!Sigmax:tau,@T x!), with the restriction that
\lstinline!tau!  is the public view of \lstinline!@T!, and that
\lstinline!@T x! is an oblivious type.

Since \loadtpsi has type-level computation, oblivious types have
normal forms; oblivious type values (\lstinline!@w!) are essentially
polynomials formed by primitive oblivious types.  We also have the
oblivious values of oblivious boolean and sum type. Note that these
``boxed'' values only appear at runtime, our semantics use these to
model encrypted booleans and tagged sums.

\subsection{Semantics}

\begin{figure}[t]
\footnotesize
\jbox{\step{\li!e! ~> \li!e'!}}
\begin{mathpar}
  \inferrule[S-Ctx]{
    \step{\li!e! ~> \li!e'!}
  }{
    \step{\li!ectx[e]! ~> \li!ectx[e']!}
  }

  \inferrule[S-Fun]{
    \li!fn x:tau = e! \in \gctx
  }{
    \step{\li!x! ~> \li!e!}
  }

  \inferrule[S-OADT]{
    \li!obliv @T (x:tau) = tau'! \in \gctx
  }{
    \step{\li!@T v! ~> \li![v/x]tau'!}
  }

  \inferrule[S-App]{
  }{
    \step{\li!(?x:tau=>e) v! ~> \li![v/x]e!}
  }

  \inferrule[S-IfTrue]{
  }{
    \step{\li!if true then e_1 else e_2! ~> \li!e_1!}
  }

  \inferrule[S-MuxTrue]{
  }{
    \step{\li!mux [true] v_1 v_2! ~> \li!v_1!}
  }

  \inferrule[S-Match]{
  }{
    \step{\li!match C_i v with\ !\overline{\li!C x=>e!} ~> \li![v/x]e_i!}
  }

  \inferrule[S-Proj$_1$]{
  }{
    \step{\li!pi_1 (v_1,v_2)! ~> \li!v_1!}
  }

  \inferrule[S-Sec]{
  }{
    \step{\li!@bool\#s b! ~> \li![b]!}
  }

  \inferrule[S-OInl]{
  }{
    \step{\li!@inl<@w> @v! ~> \li![inl<@w> @v]!}
  }

  \inferrule[S-PsiProj$_1$]{
  }{
    \step{\li!pi_1 &(v_1,v_2)&! ~> \li!v_1!}
  }

  \inferrule[S-OMatchL]{
    \ovalty{\li!@v_2! <~ \li!@w_2!}
  }{
    \step{\li!@match [inl<@w_1@+@w_2> @v] with x=>e_1|x=>e_2! ~>%
        \li!mux [true] ([@v/x]e_1) ([@v_2/x]e_2)!
    }
  }
\end{mathpar}

\vspace{4pt}
\begin{tabular}{RCLl}
\lstinline|ectx| &\production&\textsc{Evaluation Context}& \\
&\mid& \lstinline![]*tau! \mid \lstinline!@w*[]! \mid%
  \lstinline![]@+tau! \mid \lstinline!@w@+[]! \mid%
  \lstinline!e []! \mid \lstinline![] v! \mid%
  \lstinline!C []! \mid \lstinline!@T []! \mid%
  \lstinline!if [] then e else e! & \\
&\mid& \lstinline!mux [] e e! \mid%
  \lstinline!mux v [] e! \mid \lstinline!mux v v []! & \\
&\mid& \lstinline!([],e)! \mid \lstinline!(v,[])! \mid%
  \lstinline!&([],e)&! \mid \lstinline!&(v,[])&! \mid%
  \lstinline!pi_1 []! \mid%
  \lstinline!match [] with !\overline{\lstinline!C x => e!} & \\
&\mid& \lstinline!@inl<[]> e! \mid \lstinline!@inl<@w> []! \mid%
  \lstinline!@match [] with x=>e|x=>e! \mid%
  \lstinline!@bool#s []! \mid \ldots &
\end{tabular}

\caption{Selected small-step semantics rules of \loadtpsi}
\label{fig:semantics}

\end{figure}

\Cref{fig:semantics} shows a selection of the small-step semantics rules of
\loadtpsi (the full rules are included in the appendix), with judgment
\step{\gctx |- \lstinline!e! ~> \lstinline!e'!}. The global context \gctx{} is a
map from names to a global definition, which is elided for brevity as it is
fixed in these rules. The semantics of \loadtpsi is similar to \loadt, with the
addition of \textsc{S-PsiProj$_1$} (and \textsc{S-PsiProj$_2$}) to handle the
projection of dependent pairs, which is simply the same as normal projection.
\textsc{S-Ctx} reduces subterms according to the evaluation contexts defined in
\Cref{fig:semantics}. The first few contexts take care of the type-level
reduction of product and oblivious sum type. The type annotation of oblivious
injection \lstinline!@inl! and \lstinline!@inr! is reduced to a type value
first, before reducing the payload. The evaluation contexts for \lstinline!mux!
capture the intuition that all components of a private conditional have to be
normalized to values first to avoid leaking the private condition through
control flow channels.

\textsc{S-OMatchL} (and \textsc{S-OMatchR}) evaluates a pattern
matching expression for oblivious sums. Similar to \lstinline|mux|,
oblivious pattern matching needs to ensure the reduction does not
reveal private information about the discriminee, e.g., whether it is
the left injection or right injection. To do so, we reduce a
\lstinline|@match| to an oblivious conditional that uses the private
tag. The pattern variable in the ``correct'' branch is of course
instantiated by the payload in the discriminee, while the pattern
variable in the ``wrong'' branch is an arbitrary value of the
corresponding type, synthesized from the judgment \ovalty{\li!@v! <~
  \li!@w!}, whose definition is in appendix. When evaluating a
\lstinline|@match| statement whose discriminee is %
\lstinline![inl<@bool@+@bool*@bool> [true]]!, the pattern variable in
the second branch can be substituted by %
\lstinline!([true],[true])!, \lstinline!([false],[true])!, or any
other pair of oblivious booleans.

\subsection{Type System}
Similar to \loadt, types in \loadtpsi are classified by \emph{kinds}
which specify how protected a type is, in addition to ensuring the
types are well-formed. For example, an oblivious type, e.g.,
\lstinline!@bool!, kinded by \lstinline!*@O!, can be used as branches
of an oblivious conditional, but not as a public view, which can only
be kinded by \lstinline!*@P!. A mixed kind \lstinline!*@M! is used to
classify function types and types that consist of both public and
oblivious components, e.g., \lstinline!bool*@bool!. A type with a
mixed kind cannot be used as a public view or in private context.

\begin{figure}[t]
\footnotesize
\jbox{\type{\li!e! :: \li!tau!}}
\begin{mathpar}
  \inferrule[T-Conv]{
    \type{\li!e! :: \li!tau!} \\
    \typequiv{\li!tau! == \li!tau'!} \\
    \kind{\li!tau'! :: \li!*@*!}
  }{
    \type{\li!e! :: \li!tau'!}
  }

  \inferrule[T-Abs]{
    \type{\extctx{\li!x! :: \li!tau_1!} |- \li!e! :: \li!tau_2!} \\
    \kind{\li!tau_1! :: \li!*@*!}
  }{
    \type{\li!?x:tau_1=>e! :: \li!Pix:tau_1,tau_2!}
  }

  \inferrule[T-App]{
    \type{\li!e_2! :: \li!Pix:tau_1,tau_2!} \\
    \type{\li!e_1! :: \li!tau_1!}
  }{
    \type{\li!e_2 e_1! :: \li![e_1/x]tau_2!}
  }

  \inferrule[T-If]{
    \type{\li!e_0! :: \li!bool!} \\\\
    \type{\li!e_1! :: \li![true/y]tau!} \\
    \type{\li!e_2! :: \li![false/y]tau!}
  }{
    \type{\li!if e_0 then e_1 else e_2! :: \li![e_0/y]tau!}
  }

  \inferrule[T-Mux]{
    \type{\li!e_0! :: \li!@bool!} \\
    \kind{\li!tau! :: \li!*@O!} \\\\
    \type{\li!e_1! :: \li!tau!} \\
    \type{\li!e_2! :: \li!tau!}
  }{
    \type{\li!mux e_0 e_1 e_2! :: \li!tau!}
  }

  \inferrule[T-PsiPair]{
    \li!obliv @T (x:tau) = tau'! \in \gctx \\\\
    \type{\li!e_1! :: \li!tau!} \\
    \type{\li!e_2! :: \li!@T e_1!}
  }{
    \type{\li!&(e_1,e_2)&! :: \li!&@T!}
  }

  \inferrule[T-PsiProj$_{1}$]{
    \li!obliv @T (x:tau) = tau'! \in \gctx \\
    \type{\li!e! :: \li!&@T!}
  }{
    \type{\li!pi_1 e! :: \li!tau!}
  }

  \inferrule[T-PsiProj$_{2}$]{
    \li!obliv @T (x:tau) = tau'! \in \gctx \\
    \type{\li!e! :: \li!&@T!}
  }{
    \type{\li!pi_2 e! :: \li!@T (pi_1 e)!}
  }
\end{mathpar}

\footnotesize
\jbox{\kind{\li!tau! :: \li!kappa!}}
\begin{mathpar}
  \inferrule[K-Sub]{
    \kind{\li!tau! :: \li!kappa!} \\
    \li!kappa! \sqsubseteq \li!kappa'!
  }{
    \kind{\li!tau! :: \li!kappa'!}
  }

  \inferrule[K-OADT]{
    \li!obliv @T (x:tau) = tau'! \in \gctx \\
    \type{\li!e! :: \li!tau!}
  }{
    \kind{\li!@T e! :: \li!*@O!}
  }

  \inferrule[K-Pi]{
    \kind{\li!tau_1! :: \li!*@*!} \\
    \kind{\extctx{\li!x! :: \li!tau_1!} |- \li!tau_2! :: \li!*@*!}
  }{
    \kind{\li!Pix:tau_1,tau_2! :: \li!*@M!}
  }

  \inferrule[K-OSum]{
    \kind{\li!tau_1! :: \li!*@O!} \\
    \kind{\li!tau_2! :: \li!*@O!}
  }{
    \kind{\li!tau_1@+tau_2! :: \li!*@O!}
  }

  \inferrule[K-Psi]{
    \li!obliv @T (x:tau) = tau'! \in \gctx
  }{
    \kind{\li!&@T! :: \li!*@M!}
  }

  \inferrule[K-If]{
    \type{\li!e_0! :: \li!bool!} \\
    \kind{\li!tau_1! :: \li!*@O!} \\
    \kind{\li!tau_2! :: \li!*@O!}
  }{
    \kind{\li!if e_0 then tau_1 else tau_2! :: \li!*@O!}
  }
\end{mathpar}
  \caption{Selected typing and kinding rules of \loadtpsi}
  \label{fig:kinding}
\end{figure}

\begin{figure}[t]
\footnotesize
\jbox{\dtype{\li!D!}}
\begin{mathpar}
  \inferrule[DT-Fun]{
    \kind{\empctx |- \li!tau! :: \li!*@*!} \\
    \type{\empctx |- \li!e! :: \li!tau!}
  }{
    \dtype{\li!fn x:tau = e!}
  }

  \inferrule[DT-ADT]{
    \forall i.\; \kind{\empctx |- \li!tau_i! :: \li!*@P!}
  }{
    \dtype{\li!data T =\ !\overline{\li!C tau!}}
  }

  \inferrule[DT-OADT]{
    \kind{\empctx |- \li!tau! :: \li!*@P!} \\
    \kind{\hastype{\li!x! :: \li!tau!} |- \li!tau'! :: \li!*@O!}
  }{
    \dtype{\li!obliv @T (x:tau) = tau'!}
  }
\end{mathpar}
  \caption{\loadtpsi global definitions typing}
  \label{fig:gctx-typing}
\end{figure}

The type system of \loadtpsi is defined by a pair of typing and kinding
judgments, \type{\gctx; \tctx |- \li!e! :: \li!tau!} and \kind{\gctx; \tctx |-
\li!tau! :: \li!kappa!}, with global context \gctx{} (which is again elided for
brevity) and the standard typing context \tctx{}. \Cref{fig:kinding} presents a
subset of our typing and kinding rules; the full rules are in appendix.

The security type system~\cite{sabelfeld2003} of \loadtpsi enforces a few key
invariants. First, oblivious types can only be constructed from oblivious types,
which is enforced by the kinding rules, such as \textsc{K-OSum}. Otherwise, the
attacker could infer the private tag of an oblivious sum, e.g.,
\lstinline!bool@+unit!, by observing its public payload. Second, oblivious
operations, e.g., \lstinline!mux!, require their subterms to be oblivious, to
avoid leaking private information via control flow channels. \textsc{T-Mux}, for
example, requires both branches to be typed by an oblivious type, otherwise an
attacker may infer the private condition by observing the result, as in %
\lstinline!mux [true] true false!. Third, type-level computation is only defined
for oblivious types and cannot depend on private information. Thus,
\textsc{K-If} requires both branches to have oblivious kinds, and the condition
to be a public boolean. The type %
\lstinline!mux [true] unit @bool! is ill-typed, since the ``shape'' of the data
reveals the private condition.

The typing rules for \tpsi-types are defined similarly to the rules of
standard dependent sums. \textsc{T-PsiPair} introduces a dependent
pair, where the type of the second component depends on the first
component. In contrast to standard dependent sum type, \tpsi-type has
the restriction that the first component must be public, and the second
component must be oblivious. This condition is implicitly enforced by
the side condition that \lstinline!@T! is an OADT with public view
type \lstinline!tau!. \Cref{fig:gctx-typing} shows the typing rules
for global definitions; \textsc{DT-OADT} prescribes exactly this
restriction. The rules for the first and second projection of
\tpsi-type, \textsc{T-PsiProj$_1$} and \textsc{T-PsiProj$_2$}, are
very similar to the corresponding rules for standard dependent sum
types. Observe that a \tpsi-type always has mixed kind, as in
\textsc{K-Psi}, because it consists of both public and oblivious
components.

\textsc{T-Conv} allows conversion between equivalent types, such as
\lstinline!if true then @bool else unit! and \lstinline!@bool!. The equivalence
judgment \typequiv{\li!tau! == \li!tau'!} is defined by a set of \emph{parallel
reduction} rules, which we elide here. The converted type is nonetheless
required to be well-kinded.

Note that these rules cannot be used to type check retraction functions, e.g.,
\lstinline!@list<=#r! from \Cref{fig:obliv-list-struct}, and for good reason:
these functions reveal private information. Nevertheless, we still want to check
that these sorts of ``leaky'' functions have standard type safety properties,
i.e., progress and preservation. To do so, we use a version of these rules that
simply omit some security-related side-conditions about oblivious kinding:
removing \kind{\li!tau! :: \li!*@O!} from \textsc{T-Mux} allows the branches of
a \li!mux! to disclose the private condition, for example. The implementation
of \taypsi's type checker uses a ``mode'' flag to indicate whether
security-related side-conditions should be checked. Our implementation ensures
that secure functions never use any leaky functions.

\subsection{Metatheory}
\label{sec:formal:meta}

With our addition of \tpsi-types, \loadtpsi enjoys the standard type safety
properties (i.e., progress and preservation), and, more importantly, the same
security guarantees as \loadt:

\begin{theorem}[Obliviousness]
  \label{thm:obliviousness}
  If\/ \indist{\li!e_1! ~ \li!e_2!} and\/ \type{\empctx |- \li!e_1! ::
  \li!tau_1!} and\/ \type{\empctx |- \li!e_2! :: \li!tau_2!}, then
  \begin{enumerate}
    \item \step[n]{\li!e_1! ~> \li!e_1'!} if and only if\/ \step[n]{\li!e_2! ~>
    \li!e_2'!} for some\/ \lstinline|e_2'|.
    \item if\/ \step[n]{\li!e_1! ~> \li!e_1'!} and\/ \step[n]{\li!e_2! ~>
    \li!e_2'!} , then\/ \indist{\li!e_1'! ~ \li!e_2'!}.
  \end{enumerate}
\end{theorem}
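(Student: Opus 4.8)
The plan is to prove \Cref{thm:obliviousness} in the standard way for a small-step noninterference result: show that the indistinguishability relation $e_1 \approx e_2$ is a \emph{bisimulation up to $\approx$} on well-typed closed terms, and then lift the resulting one-step statement to $n$ steps by induction. Since \loadtpsi only adds $\tpsi$-types together with \textsc{S-PsiProj$_1$}/\textsc{S-PsiProj$_2$} on top of \loadt, most of the argument is inherited from the obliviousness proof of \loadt; I would concentrate on the structural facts about $\approx$ that make the oblivious eliminators go through, and on the new $\tpsi$-type cases, which turn out to be routine precisely because a public view is always public data.

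I would first record the prerequisites. Type safety (progress and preservation) for \loadtpsi I would establish by the usual syntactic method, including the new \textsc{T-PsiPair}/\textsc{T-PsiProj} cases; preservation is what lets me re-apply the bisimulation step all along a reduction sequence. For $\approx$ I need: that it is an equivalence relation and a congruence relating two expressions exactly when they differ only in the payloads of boxed oblivious values (\li![b]!, \li![inl<@w> @v]!, \li![inr<@w> @v]!, and pairs thereof); that it is preserved by substituting $\approx$-related values into $\approx$-related terms; and a decomposition lemma for evaluation contexts, so that whenever $e_1 = \mathcal{E}_1[r_1] \approx e_2$ with $r_1$ a redex, $e_2$ has the form $\mathcal{E}_2[r_2]$ with $\mathcal{E}_1 \approx \mathcal{E}_2$, $r_1 \approx r_2$, and $r_2$ a redex of the same reduction rule --- where ``the same rule'' allows \textsc{S-MuxTrue} to be matched by \textsc{S-MuxFalse}, and \textsc{S-OMatchL} by \textsc{S-OMatchR}, since a boxed \li!true!/\li!false! or a boxed left/right injection may sit at the head of the redex and $\approx$ does not constrain it. From the kinding discipline I would then extract the two facts doing the real work: \textbf{(P)} well-typed closed values of a public-kinded type (and, likewise, oblivious type values such as the annotation in \li!@inl<@w> @v!) contain no boxed oblivious values, so $\approx$ restricted to them is syntactic equality; and \textbf{(O)} \emph{any} two closed values of one and the same oblivious-kinded type are $\approx$-related, proved by canonical forms on the type and induction on the value, the sum case being exactly the clause of $\approx$ that relates \li![inl<@w> @v]! to \li![inr<@w> @v']!.

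The heart of the proof is the one-step lemma: if $e_1 \approx e_2$ are closed and well-typed and $e_1 \longrightarrow e_1'$, then $e_2 \longrightarrow e_2'$ for some $e_2'$ with $e_1' \approx e_2'$ (and symmetrically), and moreover $e_1$ is a value iff $e_2$ is. Using context decomposition I locate the matching redex in $e_2$ and case-split on the rule. The ``public'' rules --- \textsc{S-Fun}, \textsc{S-App}, \textsc{S-IfTrue}/\textsc{S-IfFalse}, \textsc{S-Match}, \textsc{S-OADT}, \textsc{S-Proj$_1$}, \textsc{S-PsiProj$_1$}/\textsc{S-PsiProj$_2$}, \textsc{S-Sec}, \textsc{S-OInl}/\textsc{S-OInr} --- all go through because the public datum that controls the step (the boolean scrutinee, the ADT constructor, the public view of \li!@T v!, the first component of a $\tpsi$-pair, the injection-type annotation) is \emph{equal} on both sides by fact (P), and any non-public subterms carried along are $\approx$-related values pushed through the substitution lemma. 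The interesting rules are the oblivious eliminators. For \textsc{S-MuxTrue}, side~1 reduces \li!mux [true] v_1 v_2! to $v_1$ while side~2 has \li!mux [b] v_1' v_2'! with $v_1 \approx v_1'$ and $v_2 \approx v_2'$; if $b$ is \li!true! we are done, and if $b$ is \li!false! side~2 yields $v_2'$, for which I need $v_1 \approx v_2'$: by preservation this \li!mux! is well typed, so \textsc{T-Mux} forces $v_1$ and $v_2$ to share a single oblivious-kinded type, hence $v_1 \approx v_2$ by fact (O), and transitivity with $v_2 \approx v_2'$ closes the case. \textsc{S-OMatchL}/\textsc{S-OMatchR} is analogous: the two sides may unfold opposite injections, producing \li!mux [true] ([@v/x]e_1) ([@v_2/x]e_2)! against \li!mux [false] ([@v_1'/x]e_1') ([@v'/x]e_2')!; these are $\approx$-related because \li![true]! and \li![false]! are related as boxed booleans, because on each branch the ``real'' payload taken from one side's value and the synthesized inhabitant \ovalty{\li!@v! <~ \li!@w!} produced on the other side have the same oblivious type and so are related by fact (O), and because the branch bodies are $\approx$-related and closed under the substitution lemma.

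Finally I assemble the theorem. For part~(1), the one-step lemma together with preservation lets me replay any length-$n$ reduction of $e_1$ step for step as a length-$n$ reduction of $e_2$, keeping $\approx$ and well-typedness as invariants; if $e_1$ reaches a value before step $n$ then so does $e_2$ (value-hood is $\approx$-stable) and neither can continue, and progress rules out getting stuck at a non-value, so ``$e_1 \longrightarrow^n$ something'' holds iff ``$e_2 \longrightarrow^n$ something'' does. For part~(2), since the small-step relation is deterministic --- the only apparent nondeterminism, the choice of wrong-branch payload in \textsc{S-OMatch}, is benign, as all choices are $\approx$-related by fact (O) --- the length-$n$ reducts $e_1'$ and $e_2'$ are exactly the ones tracked by the replay, so $e_1' \approx e_2'$. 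I expect the main obstacle to be fact (O) in combination with the \textsc{S-Mux}/\textsc{S-OMatch} cases: proving that \emph{all} closed values of a fixed oblivious type are mutually indistinguishable needs the right canonical-forms statement for oblivious-kinded types (including the degenerate cases where a recursive OADT only ``unfolds forever'' and has no closed inhabitants, making the claim vacuous there), and it is precisely this fact that lets \textsc{T-Mux}'s single-type requirement bridge two diverging executions without ever having to relate their types directly.
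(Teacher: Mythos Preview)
The paper does not give a textual proof of \Cref{thm:obliviousness}: it only states that the result, together with type safety, is mechanized in Coq and that \loadtpsi ``enjoys \ldots the same security guarantees as \loadt.'' So there is no paper proof to compare your sketch against line by line.

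That said, your plan is exactly the expected shape for such a mechanization: show that $\approx$ is a one-step bisimulation on well-typed closed terms, using preservation to rethread the invariant, and then lift to $n$ steps. Your two key facts are the right ones --- (P) that $\approx$ collapses to syntactic equality on public-kinded closed values (hence the scrutinees of \textsc{S-If}, \textsc{S-Match}, the public view fed to \textsc{S-OADT}, and the first component of a $\tpsi$-pair coincide on both sides), and (O) that any two closed values of a fixed oblivious type value are $\approx$-related (hence \textsc{S-Mux} and \textsc{S-OMatch} can diverge on the private tag without breaking $\approx$). The new $\tpsi$-cases are indeed benign for the reason you give: \textsc{DT-OADT} forces the public-view type to have kind \li!*@P!, so \textsc{S-PsiProj$_1$} returns equal values by (P), and \textsc{S-PsiProj$_2$} returns $\approx$-related values by congruence. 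Your handling of the nondeterminism in \textsc{S-OMatch} (all synthesized wrong-branch payloads are $\approx$-related by (O)) is also the right way to recover the ``any two $n$-step reducts are related'' statement in part~(2), though in a mechanization one typically proves the stronger universal one-step lemma directly --- if $e_1 \approx e_2$, $e_1 \longrightarrow e_1'$, and $e_2 \longrightarrow e_2'$, then $e_1' \approx e_2'$ --- rather than the existential form plus determinism-up-to-$\approx$; that phrasing avoids the slightly informal ``exactly the ones tracked by the replay'' step.
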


Here, \indist{\li!e! ~ \li!e'!} means the two expressions are
indistinguishable, i.e., they only differ in their unobservable
oblivious values, and \step[n]{\li!e! ~> \li!e'!} means \lstinline!e!
reduces to \lstinline!e'! in exactly $n$ steps. This obliviousness
theorem provides a strong security guarantee: well-typed programs that
are indistinguishable produce traces that are pairwise
indistinguishable. In other words, an attacker cannot infer any
private information even by observing the execution trace of a
program. All these results are mechanized in the Coq theorem prover,
including the formalization of the core calculus and the proofs of
soundness and obliviousness theorems.

\section{\psistructs and Declarative Lifting}
\label{sec:struct}

While our secure language makes it possible to encode structured data
and privacy policies, and use them in a secure way, it does not quite
achieve our main goal yet, i.e., to decouple privacy policies and
programmatic concerns. To do so, we allow the programmers to implement
the functionality of their secure application in a conventional way,
that is using only the public, nondependent fragment of \taypsi. We
make this fragment explicit by requiring such programs to have
\emph{simple types}, denoted by \lstinline!eta!, defined in
\Cref{fig:types-erasure}. For example, \lstinline!filter! has simple
type \lstinline!list -> int -> list!. Programs of simple types are the
source programs to our lifting process that translates them to a
private version against a policy, which stipulates the public
information allowed to disclose in the program input and output. This
policy on private functionality is specified by a \emph{specification
  type}, denoted by \lstinline!theta!, defined also in
\Cref{fig:types-erasure}. For example, \lstinline!@filter<=! has
specification type \lstinline!&@list<= -> @int -> &@list<=!. Note that
dependent types are not directly allowed in specifications,
they are instead encapsulated in \tpsi-types.  Simple types and
specification types are additionally required to be well-kinded under
empty local context, i.e., all ADTs and OADTs appear in them are
defined.

However, not all specification types are valid with respect to a
simple type. It is nonsensical to give \lstinline!@filter! the
specification type \lstinline!@int -> @bool!, for example. The
specification types should still correspond to the simple types in
some way: the specification type corresponding to \lstinline!list!
should at least be ``list-like''. This correspondence is formally
captured in the erasure function in \Cref{fig:types-erasure}, which
maps a specification type to the ``underlying'' simple type. For
example, \lstinline!&@list<=! is erased to \lstinline!list!. This
function clearly induces an equivalence relation: the erasure
\lstinline|&[theta]&| is the representative of the equivalence
class. We call this equivalence class a \emph{compatibility class},
and say two types are \emph{compatible} if they belong to the same
compatibility class. For example, \lstinline!list!, \lstinline!&@list<=!
and \lstinline!&@list==!  are in the same compatibility class
\lstinline![list]!. This erasure operation is straightforwardly
extended to typing contexts, \erase{\tctx}, by erasing every
specification type in \tctx{} and leaving other types untouched.

Our translation transforms source programs with simple types into
target programs with the desired (compatible) specification types. As
mentioned in \Cref{sec:overview}, this \emph{lifting} process depends
on a set of \emph{$\psi$-structures} which explain how to translate
certain operations associated with an OADT.

\begin{figure}[t]
\footnotesize

\begin{minipage}{.4\textwidth}
\raggedright
\textsc{Simple types}

\begin{tabular}{RCLl}
\lstinline|eta| &\production& \lstinline!unit! \mid \lstinline!bool! \mid \lstinline!T! \mid \lstinline!eta*eta! \mid \lstinline!eta->eta! & \\
\end{tabular}

\vspace{4pt}

\textsc{specification types}

\begin{tabular}{RCLl}
\lstinline|theta| &\production& \lstinline!unit! \mid \lstinline!bool! \mid \lstinline!@bool! \mid \lstinline!T! \mid \lstinline!&@T! \mid \lstinline!theta*theta! \mid \lstinline!theta->theta! & \\
\end{tabular}
\end{minipage}%
\begin{minipage}{.6\textwidth}
\jbox{\lstinline!&[theta]&!}
\begin{mathpar}
  \li!&[unit]&! = \li!unit! \and%
  \li!&[bool]&! = \li!&[@bool]&! = \li!bool! \and%
  \li!&[T]&! = \li!T! \quad\text{where \li!T! is an ADT} \\
  \li!&[&@T]&! = \li!T! \quad\text{where \li!@T! is an OADT for \li!T!} \\
  \li!&[theta*theta]&! = \li!&[theta]&*&[theta]&! \and%
  \li!&[theta->theta]&! = \li!&[theta]&->&[theta]&!
\end{mathpar}
\end{minipage}

  \caption{Simple types, specification types and erasure}
  \label{fig:types-erasure}
\end{figure}

\subsection{OADT Structures}

Every global OADT definition \lstinline|@T| must be equipped with an
\emph{OADT-structure}, defined below.

\begin{definition}[OADT-structure]
  \label{def:oadt-struct}

  An OADT-structure of an OADT \lstinline|@T|, with public view
  type \lstinline|tau|, consists of the following (\taypsi) type and functions:
  \begin{itemize}
    \item A public type \haskind{\li!T! :: \li!*@P!}, which is the public
      counterpart of \lstinline|@T|. We say \lstinline|@T| is an OADT
      for \lstinline|T|.
    \item A section function \hastype{\li!s! :: \li!Pik:tau,T->@T k!}, which
      converts a public type to its oblivious counterpart.
    \item A retraction function \hastype{\li!r! :: \li!Pik:tau,@T k->T!}, which
      converts an oblivious type to its public version.
    \item A public view function \hastype{\li!VIEW! :: \li!T->tau!}, which creates
      a valid view of the public type.
    \item A binary relation \hasview{} over values of types \lstinline!T! and
    \lstinline!tau!; \fmath{\li!v! \hasview \li!k!} reads as \lstinline|v| has
    public view \lstinline|k|, or \lstinline|k| is a valid public view of
    \lstinline|v|.
  \end{itemize}

  These operations are required to satisfy the following axioms:
  \begin{itemize}
    \item (\textsc{A-O$_{1}$}) \lstinline|s| and \lstinline|r| are
      a valid section and retraction, i.e., \lstinline|r| is a left-inverse for
      \lstinline|s|, given a valid public view: for any values \hastype{\li!v! ::
      \li!T!}, \hastype{\li!k! :: \li!tau!} and \hastype{\li!@v! :: \li!@T k!},
      if \fmath{\li!v! \hasview \li!k!} and \step*{\li!s k v! ~> \li!@v!}, then
      \step*{\li!r k @v! ~> \li!v!}.
    \item (\textsc{A-O$_{2}$}) the result of \lstinline|r| always has valid
      public view: \step*{\li!r k @v! ~> \li!v!} implies \fmath{\li!v! \hasview
      \li!k!} for all values \hastype{\li!k! :: \li!tau!}, \hastype{\li!@v!
      :: \li!@T k!} and \hastype{\li!v! :: \li!T!}.
    \item (\textsc{A-O$_{3}$}) \lstinline|VIEW| produces a valid public
      view: \step*{\li!VIEW v! ~> \li!k!} implies \fmath{\li!v! \hasview
      \li!k!}, given any values \hastype{\li!v! :: \li!T!} and \hastype{\li!k!
      :: \li!tau!}.
  \end{itemize}
\end{definition}

For example, \lstinline!@list<=! is equipped with the OADT-structure
with the public type \lstinline!list!, section function
\lstinline!@list<=#s!, retraction function \lstinline!@list<=#r! and view
function \lstinline!@list<=#view!, all of which are
shown in \Cref{fig:obliv-list-struct}. \taypsi users do not need to
explicitly give the public type of an OADT-structure, as it can be
inferred from the types of the other functions. The binary relation
\hasview{} is only used in the proof of correctness of our translation,
so \taypsi users can also elide it. In the case of \lstinline!@list<=!,
\hasview{} simply states the length of the list is no larger than the
public view.

\subsection{Join Structures}

In order for \tpsi-types to be flexibly used in the branches of secure
control flow structures, our translation must be able to find a common
public view for both branches, and to convert an OADT to use this
view. To do so, an OADT can \emph{optionally} be equipped with a
\emph{join-structure}.

\begin{definition}[join-structure]
  \label{def:join-struct}

  A join-structure of an OADT \lstinline|@T| for \lstinline|T|, with public view
  type \lstinline|tau|, consists of the following operations:
  \begin{itemize}
  \item A binary relation \ple{} on \lstinline|tau|, used to compare
    two public views.
  \item A join function \hastype{\li!JOIN! :: \li!tau->tau->tau!},
    which computes an upper bound of two public views\footnote{It is a
      bit misleading to call the operation \li!JOIN! ``join'', as it
      only computes an upper bound, not necessarily the lowest
      one. However, it \emph{should} compute a supremum for
      performance reasons: intuitively, larger public view means more
      padding.}.
  \item A reshape function %
    \hastype{\li!RESHAPE! :: \li!Pik:tau,Pik':tau,@T k->@T k'!}, which
    converts an OADT to one with a different public view.
  \end{itemize}
  such that:
  \begin{itemize}
    \item (\textsc{A-R$_{1}$}) \ple{} is a partial order on \lstinline|tau|.
    \item (\textsc{A-R$_{2}$}) the join function produces an upper bound: given
      values \lstinline!k_1!, \lstinline!k_2! and \lstinline!k! of
      type \lstinline!tau!, if \step*{\li!k_1JOINk_2! ~> \li!k!}, then
      \fmath{\li!k_1! \ple \li!k!} and \fmath{\li!k_2! \ple \li!k!}.
    \item (\textsc{A-R$_{3}$}) the validity of public views is monotone with
      respect to the binary relation \ple: for any values \hastype{\li!v! ::
      \li!T!}, \hastype{\li!k! :: \li!tau!} and \hastype{\li!k'! :: \li!tau!},
      if \fmath{\li!v! \hasview \li!k!} and \fmath{\li!k! \ple \li!k'!}, then
      \fmath{\li!v! \hasview \li!k'!}.
    \item (\textsc{A-R$_{4}$}) the reshape function produces equivalent value,
      as long as the new public view is valid: for any values \hastype{\li!v! ::
      \li!T!}, \hastype{\li!k! :: \li!tau!}, \hastype{\li!k'! :: \li!tau!},
      \hastype{\li!@v! :: \li!@T k!} and \hastype{\li!@v'! :: \li!@T k'!}, if
      \step*{\li!r k @v! ~> \li!v!} and \fmath{\li!v! \hasview \li!k'!} and
      \step*{\li!RESHAPE k k' @v! ~> \li!@v'!}, then \step*{\li!r k' @v'! ~>
      \li!v!}.
  \end{itemize}
\end{definition}

\Cref{fig:obliv-list-struct} defines the join and reshape functions
\lstinline!@list<=#join! and \lstinline!@list<=#reshape!. The partial
order for this join structure is simply the total order on integers,
and the join is simply the maximum of the two numbers. Not all OADTs
have a sensible join-structure: oblivious lists using their exact
length as a public view cannot be combined if they have different
lengths. If such lists are the branches of an oblivious conditional,
lifting will either fail or coerce both to an OADT with a
join-structure.

\begin{figure}[t]
\footnotesize
\jbox{\mergeable{\li!theta! |> \li!@ite!}}
\begin{mathpar}
  \inferrule{
    \li!theta! \in \set{\li!unit!, \li!@bool!}
  }{
    \mergeable{\li!theta! |> \li!?@b x y=>mux @b x y!}
  }

  \inferrule{
    \mergeable{\li!theta_1! |> \li!@ite_1!} \\
    \mergeable{\li!theta_2! |> \li!@ite_2!}
  }{
    \mergeable{\li!theta_1*theta_2! |>%
      \li!?@b x y=>(@ite_1 @b (pi_1 x) (pi_1 y),@ite_2 @b (pi_2 x) (pi_2 y))!}
  }

  \inferrule{
    \mergeable{\li!theta_2! |> \li!@ite_2!}
  }{
    \mergeable{\li!theta_1->theta_2! |> \li!?@b x y=>?z=>@ite_2 @b (x z) (y z)!}
  }

  \inferrule{
    (\li!@T!, \li!JOIN!, \li!RESHAPE!) \in \sctxjoin
  }{
    \mergeable{\li!&@T! |>%
      \begin{tabular}{@{}l}
        \li!?@b x y=>let k = pi_1 x JOIN pi_1 y in! \\
        \li!\ \ \ \ \ \ \ \ \&(k,mux @b (RESHAPE (pi_1 x) k (pi_2 x))! \\
        \li!\ \ \ \ \ \ \ \ \ \ \ \ \ \ \ \ \ (RESHAPE (pi_1 y) k (pi_2 y)))\&!
      \end{tabular}
    }
  }
\end{mathpar}
  \caption{Mergeability}
  \label{fig:mergeable}
\end{figure}

Join structures induce a \emph{mergeability} relation, defined in
\Cref{fig:mergeable}, that can be used to decide if a specification
type can be used in oblivious conditionals. We say \lstinline|theta|
is \emph{mergeable} if \mergeable{\li!theta! |> \li!@ite!}, with
witness \lstinline|@ite| of type
\lstinline|@bool->theta->theta->theta|. We will write
\mergeable{\li!theta!} when we do not care about the witness. This
witness can be thought of as a generalized, drop-in replacement of
\lstinline!mux!: we simply translate \lstinline!mux! to the derived
\lstinline!@ite! if the result type is mergeable. The case of
\tpsi-type captures this intuition: we first join the public views,
and reshape all branches to this common public view, before we select
the correct one privately using \lstinline!mux!. This rule looks up the necessary methods from the context of join structures \sctxjoin. Other cases are
straightforward: we simply fall back to \lstinline!mux! for primitive
types, and the derivation for product and function types are done
congruently.

\subsection{Introduction and Elimination Structures}

An ADT is manipulated by its introduction and elimination forms. To successfully
lift a public program using ADTs, we need structures to explain how the
primitive operations of its ADTs are handled in their OADT counterparts. Thus, an
OADT \lstinline|@T| can \emph{optionally} be equipped with an
\emph{introduction-structure} (intro-structure) and an
\emph{elimination-structure} (elim-structure), defined below. These structures
are optional because some programs only consume ADTs, without constructing any
new ADT values (and vice versa): a function that checks membership in a list
only requires an elim-structure on lists, for example. Intuitively, the axioms
of these structures require the introduction and elimination methods of an OADT
to behave like those of the corresponding ADT. This is formalized using a pair
of logical refinement relations on values (\interpV{\li!DOT!}) and expressions
(\interpE{\li!DOT!}); these relations are formally defined in \Cref{sec:logrel}.

\begin{definition}[intro-structure]
  \label{def:intro-struct}

  An intro-structure of an OADT \lstinline|@T| for ADT \lstinline!T!, with
  global definition \lstinline!data T = !$\overline{\li!C eta!}$, consists of a
  set of functions \lstinline|@C_i|, each corresponding to a constructor
  \lstinline|C_i|. The type of \lstinline|@C_i| is \lstinline|theta_i->&@T|,
  where \fmath{\li!&[theta_i]&! = \li!eta_i!} (note that \textsc{DT-ADT}
  guarantees that \lstinline|eta_i| is a simple type). The particular
  \lstinline|theta_i| an intro-structure uses is determined by the author of
  that structure.
  Each \lstinline|@C_i| is required to logically refine the corresponding
  constructor (\textsc{A-I$_1$}): given any values \hastype{\li!v! ::
  \li!&[theta]&!} and \hastype{\li!v'! :: \li!theta!}, if \fmath{(\li!v!,
  \li!v'!) \in \interpV{\li!theta!}}, then \fmath{(\li!C_i v!, \li!@C_i v'!) \in
  \interpE{\li!&@T!}}.
\end{definition}

\begin{definition}[elim-structure]
  \label{def:elim-struct}

  An elim-structure of an OADT \lstinline|@T| for ADT \lstinline!T!, with global
  definition \lstinline!data T = !$\overline{\li!C eta!}$, consists of a family
  of functions \lstinline|@Imatch_alpha|, indexed by the possible return types.
  The type of \lstinline|@Imatch_alpha| is
  \lstinline|&@T->|$\overline{\li!(theta->alpha)!}$\lstinline|->alpha|, where
  \fmath{\li!&[theta_i]&! = \li!eta_i!} for each \lstinline|theta_i| in the
  function arguments corresponding to alternatives.
  Each \lstinline|@Imatch_alpha| is required to logically refine the
  pattern matching expression, specialized with ADT \lstinline|T| and
  return type \lstinline|alpha|. The sole axiom of this structure
  (\textsc{A-E$_1$}) only considers return type \lstinline|alpha|
  being a specification type: given values \hastype{\li!v_i! ::
    \li!eta_i!}, %
  \hastype{\li!&(k,@v)&! :: \li!@T k!}, %
  \hastype{\li!?x=>e_i! :: \li!&[theta_i]&->&[alpha]&!} and %
  \hastype{\li!?x=>e_i'! :: \li!theta_i->alpha!}, %
  if \step*{\li!r k @v! ~> \li!C_i v_i!} and %
  \fmath{(\li!?x=>e_i!, \li!?x=>e_i'!) \in
    \interpV{\li!theta_i->alpha!}} then %
  \fmath{(\li![v_i/x]e_i!,%
    \li!@Imatch &(k,@v)&\ !\overline{\li!(?x=>e')!}) \in
    \interpE{\li!alpha!}}.
\end{definition}

The types of the oblivious introduction and elimination forms in these
structures are only required to be compatible with the public
counterparts. The programmers can choose which specific OADTs to use
according to their desired privacy
policy. \Cref{fig:obliv-list-struct} shows the constructors and
pattern matching functions for \lstinline!@list<=!.

The elim-structure of an OADT consists of a family of destructors, whose return
type \lstinline!alpha! does not necessarily range over all types. For example,
\lstinline!@Imatch_alpha! of \lstinline!@list<=!, \lstinline!@list<=#Imatch! in
\Cref{fig:obliv-list-struct}, requires \lstinline!alpha! to be a mergeable type,
due to the use of \lstinline!@match!, which imposes a restriction similarly to
\lstinline!mux!. Such constraints on \lstinline!alpha! are automatically
inferred and enforced.

\subsection{Coercion Structures}

As discussed in \Cref{sec:overview}, we may need to convert an
oblivious type to another, either due to a mismatch from input to
output, or due to its lack of certain structures. For example,
\lstinline!@list==! does not have join structure, so if the branches of
an oblivious conditional has type \lstinline!&@list==!, they should be
coerced to \lstinline!&@list<=!, when such a coercion is available.

Two compatible OADTs may form a \emph{coercion-structure}, shown below.

\begin{definition}[coercion-structure]
  \label{def:coer-struct}

  A coercion-structure of a pair of compatible OADTs \lstinline|@T|
  and \lstinline|@T'| for \lstinline|T|, with public view type \lstinline|tau|
  and \lstinline|tau'| respectively, consists of a coercion
  function \lstinline|COER| of type \lstinline|&@T->&@T'|. The coercion should
  produce an equivalent value (\textsc{A-C$_1$}): given values \hastype{\li!v!
    :: \li!T!}, \hastype{\li!&(k,@v)&! :: \li!&@T!} and \hastype{\li!&(k',@v')&!
    :: \li!&@T'!}, if \step*{\li!r k @v! ~> \li!v!} and \step*{\li!COER&(k,@v)&!
    ~> \li!&(k',@v')&!}, then \step*{\li!r k' @v'! ~> \li!v!}.
\end{definition}

\begin{figure}[t]
\jbox{\coercible{\li!theta! ~> \li!theta'! |> \li!COER!}}
\footnotesize
\begin{mathpar}
  \inferrule{
  }{
    \coercible{\li!theta! ~> \li!theta! |> \li!?x=>x!}
  }

  \inferrule{
  }{
    \coercible{\li!bool! ~> \li!@bool! |> \li!?x=>@bool\#s x!}
  }

  \inferrule{
    (\li!@T!, \li!T!, \li!s!, \li!r!, \li!VIEW!, \hasview) \in \sctxoadt
  }{
    \coercible{\li!T! ~> \li!&@T! |> \li!?x=>&(VIEW x,s (VIEW x) x)&!}
  }

  \inferrule{
    \hastype{\li!COER! :: \li!&@T->&@T'!} \in \sctxcoer
  }{
    \coercible{\li!&@T! ~> \li!&@T'! |> \li!COER!}
  }

  \inferrule{
    \coercible{\li!theta_1! ~> \li!theta_1'! |> \li!COER_1!} \\
    \coercible{\li!theta_2! ~> \li!theta_2'! |> \li!COER_2!}
  }{
    \coercible{\li!theta_1*theta_2! ~> \li!theta_1'*theta_2'! |>%
      \li!?x=>(COER_1(pi_1 x),COER_2(pi_2 x))!}
  }

  \inferrule{
    \coercible{\li!theta_1'! ~> \li!theta_1! |> \li!COER_1!} \\
    \coercible{\li!theta_2! ~> \li!theta_2'! |> \li!COER_2!}
  }{
    \coercible{\li!theta_1->theta_2! ~> \li!theta_1'->theta_2'! |>%
      \li!?x=>?y=>COER_2(x (COER_1y))!}
  }
\end{mathpar}
  \caption{Coercion}
  \label{fig:coercion}
\end{figure}

This structure only defines the coercion between two \tpsi-types.
\Cref{fig:coercion} generalizes the coercion relation to any (compatible)
specification types. We say \lstinline|theta| is \emph{coercible} to
\lstinline|theta'| if \coercible{\li!theta! ~> \li!theta'! |> \li!COER!}, with
witness \lstinline|COER| of type \lstinline|theta->theta'|. We may write
\coercible{\li!theta! ~> \li!theta'!} when we do not care about the witness. The
rules of this relation are straightforward. The context of coercion structures
\sctxcoer{} and the context of OADT structures \sctxoadt{} are used to look up the
necessary methods in the corresponding rules. The rule for coercing a function
type is contravariant. Note that we can always coerce a public type to an OADT
by running the section function, and the public view can be selected by the view
function in the OADT structure.

\subsection{Declarative Lifting}

\begin{figure}[t]
\footnotesize
\jbox{\liftR{\li!e! :: \li!theta! |> \li!Oe!}}
\begin{mathpar}
  \inferrule[L-Lit]{
  }{
    \liftR{\li!b! :: \li!bool! |> \li!b!}
  }

  \inferrule[L-Var]{
    \hastype{\li!x! :: \li!theta!} \in \tctx
  }{
    \liftR{\li!x! :: \li!theta! |> \li!x!}
  }

  \inferrule[L-Fun]{
    \hastype{\li!x! :: \li!theta! |> \li!Ox!} \in \lctx
  }{
    \liftR{\li!x! :: \li!theta! |> \li!Ox!}
  }

  \inferrule[L-Abs]{
    \liftR{\extctx{\li!x! :: \li!theta_1!} |- \li!e! :: \li!theta_2! |> \li!Oe!}
  }{
    \liftR{\li!?x:&[theta_1]&=>e! :: \li!theta_1->theta_2! |>%
      \li!?x:theta_1=>Oe!}
  }

  \inferrule[L-App]{
    \liftR{\li!e_2! :: \li!theta_1->theta_2! |> \li!Oe_2!} \\
    \liftR{\li!e_1! :: \li!theta_1! |> \li!Oe_1!}
  }{
    \liftR{\li!e_2 e_1! :: \li!theta_2! |> \li!Oe_2 Oe_1!}
  }

  \inferrule[L-Let]{
    \liftR{\li!e_1! :: \li!theta_1! |> \li!Oe_1!} \\
    \liftR{\extctx{\li!x! :: \li!theta_1!} |-%
      \li!e_2! :: \li!theta_2! |> \li!Oe_2!}
  }{
    \liftR{\li!let x = e_1 in e_2! :: \li!theta_2! |> \li!let x = Oe_1 in Oe_2!}
  }

  \inferrule[L-If$_1$]{
    \liftR{\li!e_0! :: \li!bool! |> \li!Oe_0!} \\\\
    \liftR{\li!e_1! :: \li!theta! |> \li!Oe_1!} \\
    \liftR{\li!e_2! :: \li!theta! |> \li!Oe_2!}
  }{
    \liftR{\li!if e_0 then e_1 else e_2! :: \li!theta! |>%
      \li!if Oe_0 then Oe_1 else Oe_2!}
  }

  \inferrule[L-If$_2$]{
    \liftR{\li!e_0! :: \li!@bool! |> \li!Oe_0!} \\
    \mergeable{\li!theta! |> \li!@ite!} \\\\
    \liftR{\li!e_1! :: \li!theta! |> \li!Oe_1!} \\
    \liftR{\li!e_2! :: \li!theta! |> \li!Oe_2!} \\
  }{
    \liftR{\li!if e_0 then e_1 else e_2! :: \li!theta! |>%
      \li!@ite Oe_0 Oe_1 Oe_2!}
  }

  \inferrule[L-Ctor$_1$]{
    \li!data T =\ !\overline{\li!C eta!} \in \gctx \\\\
    \liftR{\li!e! :: \li!eta_i! |> \li!Oe!}
  }{
    \liftR{\li!C_i e! :: \li!T! |> \li!C_i Oe!}
  }

  \inferrule[L-Ctor$_2$]{
    \hastype{\li!@C_i! :: \li!theta_i->&@T!} \in \sctxintro \\\\
    \liftR{\li!e! :: \li!theta_i! |> \li!Oe!}
  }{
    \liftR{\li!C_i e! :: \li!&@T! |> \li!@C_i Oe!}
  }

  \inferrule[L-Match$_1$]{
    \li!data T =\ !\overline{\li!C eta!} \in \gctx \\\\
    \liftR{\li!e_0! :: \li!T! |> \li!Oe_0!} \\
    \forall i.\; \liftR{\extctx{\li!x! :: \li!eta_i!} |-%
      \li!e_i! :: \li!theta'! |> \li!Oe_i!}
  }{
    \liftR{\li!match e_0 with\ !\overline{\li!C x=>e!} :: \li!theta'! |>%
      \li!match Oe_0 with\ !\overline{\li!C x=>Oe!}}
  }

  \inferrule[L-Match$_2$]{
    \hastype{\li!@Imatch! ::%
      \li!&@T->!\overline{\li!(theta->theta')!}\li!->theta'!} \in \sctxelim \\
    \liftR{\li!e_0! :: \li!&@T! |> \li!Oe_0!} \\
    \forall i.\; \liftR{\extctx{\li!x! :: \li!theta_i!} |-%
      \li!e_i! :: \li!theta'! |> \li!Oe_i!}
  }{
    \liftR{\li!match e_0 with\ !\overline{\li!C x=>e!} :: \li!theta'! |>%
      \li!@Imatch Oe_0\ !\overline{\li!(?x:theta=>Oe)!}}
  }

  \inferrule[L-Coerce]{
    \liftR{\li!e! :: \li!theta! |> \li!Oe!} \\
    \coercible{\li!theta! ~> \li!theta'! |> \li!COER!}
  }{
    \liftR{\li!e! :: \li!theta'! |> \li!COEROe!}
  }
\end{mathpar}
  \caption{Selected declarative lifting rules}
  \label{fig:liftingR}
\vspace{-\baselineskip}
\end{figure}

With these \psistructs, we define a declarative lifting relation, which
describes what the lifting procedure is allowed to derive at a high level. This
lifting relation is given by the judgment \liftR{\sctx; \lctx; \gctx; \tctx |-
\li!e! :: \li!theta! |> \li!Oe!}. It is read as the expression \lstinline|e| of
type \lstinline|&[theta]&| is lifted to the expression \lstinline|Oe| of target
type \lstinline|theta|, under various contexts. The \psistruct context \sctx{}
consists of the set of OADT-structures (\sctxoadt{}), join-structures
(\sctxjoin{}), intro-structures (\sctxintro{}), elim-structures (\sctxelim{})
and coercion-structures (\sctxcoer{}), respectively. The global definition
context \gctx{} is the same as the one used in the typing relation. The local
context \tctx{} is also similar to the one in the typing relation, but it keeps
track of the target types of local variables instead of source types. Finally,
the lifting context \lctx{} consists of entries of the form \hastype{\li!x! ::
\li!theta! |> \li!Ox!}, which associates the global function \lstinline|x| of
type \lstinline|&[theta]&| with a generated function \lstinline|Ox| of the
target type \lstinline|theta|. A single global function may have multiple target
types, i.e., multiple private versions, either specified by the users or by the
callsites. For example, \lctx{} may contain \hastype{\li!filter! ::
\lstinline!&@list<= -> @int -> &@list<=! |> \lstinline!@filter_1!} and
\hastype{\li!filter! :: %
\lstinline!&@list== -> @int -> &@list<=! |> \lstinline!@filter_2!}.

\Cref{fig:liftingR} shows a selection of rules of the declarative
lifting relation (the full rules are in appendix). We elide most
contexts as they are fixed, and simply write \liftR{\li!e! ::
  \li!theta! |> \li!Oe!} for brevity. Most rules are simply
congruences and similar to typing rules. \textsc{L-Fun} outsources the
lifting of a function call to the lifting context. \textsc{L-If$_2$}
handles the case when the condition is lifted to an oblivious boolean
by delegating the translation to the mergeability relation. Similarly,
\textsc{L-Ctor$_2$} and \textsc{L-Match$_2$} query the contexts of the
intro-structures and elim-structures, and use the corresponding
instances as the drop-in replacement, when we are constructing or
destructing \tpsi-types. Lastly, \textsc{L-Coerce} coerces an
expression nondeterministically using the coercion relation.

This lifting relation in \Cref{fig:liftingR} only considers one expression. In
practice, the users specify a set of functions and their target types to lift.
The result of our lifting procedure is a lifting context \lctx{} which maps
these functions and target types to the corresponding generated functions, as
well as any other functions and the inferred target types that these functions
depend on. The global context \gctx{} is also extended with the definitions of
the generated functions. To make this more clear, we say a lifting context is
\emph{derivable}, denoted by \lctxderiv{\lctx}, if and only if, for any
\fmath{\hastype{\li!x! :: \li!theta! |> \li!Ox!} \in \lctx}, %
\fmath{\li!fn x:&[theta]& = e! \in \gctx} and %
\fmath{\li!fn Ox:theta = Oe! \in \gctx} for some \lstinline|e| and
\lstinline|Oe|, such that \liftR{\sctx; \lctx; \gctx; \empctx |- \li!e! ::
\li!theta! |> \li!Oe!}. In other words, any definitions of the lifted functions
in \lctx{} can be derived from the lifting relation in \Cref{fig:liftingR}. Note
that the derivation of a function definition is under a lifting context with
possibly an entry of this function itself. This is similar to the role of global
context in type checking, as \taypsi supports mutually recursive functions. The
goal of our algorithm (\Cref{sec:lift}) is then to find such a derivable lifting
context that includes the user-specified liftings.

\subsection{Logical Refinement}
\label{sec:logrel}

The correctness of the lifting procedure is framed as a \emph{logical
  refinement} between expressions of specification types and those of
simple types; this relationship is defined as a step-indexed logical
relation~\cite{ahmed2006}. As is common, this relation is defined via
a pair of set-valued type denotations: a value interpretation
\interpV{\li!theta!} and an expression interpretation
\interpE{\li!theta!}. We say an expression \lstinline|e'| of type
\lstinline|theta| refines \lstinline|e| of type \lstinline|&[theta]&|
(within $n$ steps) if \fmath{(\li!e!, \li!e'!) \in
  \interpE{\li!theta!}}. In other words, \lstinline|e'| preserves the
behavior of \lstinline|e|, in that if \lstinline|e'| terminates at a
value, \lstinline|e| must terminate at an equivalent value. The
equivalence between values is dictated by \interpV{\li!theta!}.

\Cref{fig:log-rel} shows the complete definition of the logical relation. All
pairs in the relations must be closed and well-typed, i.e., their interpretations
have the forms:

{\footnotesize
\setlength{\abovedisplayskip}{0pt}
\setlength{\abovedisplayshortskip}{-.5em}
  \begin{gather*}
    \interpV{\li!theta!} = \Set{(\li!v!, \li!v'!) |%
      \type{\empctx |- \li!v! :: \li!&[theta]&!} \land%
      \type{\empctx |- \li!v'! :: \li!theta!} \land%
      \ldots} \\
    \interpE{\li!theta!} = \Set{(\li!e!, \li!e'!) |%
      \type{\empctx |- \li!e! :: \li!&[theta]&!} \land%
      \type{\empctx |- \li!e'! :: \li!theta!} \land%
      \ldots}
  \end{gather*}
}%
For brevity, we leave this requirement implicit in
\Cref{fig:log-rel}.

\begin{figure}[t]
\footnotesize
\jbox{\interpV{\li!theta!}}
\begin{mathpar}
  \interpV{\li!unit!} =%
  \interpV{\li!bool!} =%
  \interpV{\li!T!} = \Set{(\li!v!, \li!v'!) |%
    0 < n \implies \li!v! = \li!v'!} \and
  \interpV{\li!@bool!} =%
  \Set{(\li!b!, \li![b']!) |%
    0 < n \implies \li!b! = \li!b'!} \and
  \interpV{\li!&@T!} =%
  \Set{(\li!v!, \li!&(k,@v)&!) |%
    0 < n \implies \step*{\li!r k @v! ~> \li!v!}} \and
  \interpV{\li!theta_1*theta_2!} =%
  \Set{(\li!(v_1,v_2)!, \li!(v_1',v_2')!) |%
    (\li!v_1!, \li!v_1'!) \in \interpV{\li!theta_1!} \land%
    (\li!v_2!, \li!v_2'!) \in \interpV{\li!theta_2!}} \and
  \interpV{\li!theta_1->theta_2!} =%
  \Set{(\li!?x:&[theta_1]&=>e!, \li!?x:theta_1=>e'!) |%
    \forall i < n.%
    \forall (\li!v!, \li!v'!) \in \interpV[i]{\li!theta_1!}.%
      (\li![v/x]e!, \li![v'/x]e'!) \in \interpE[i]{\li!theta_2!}}
\end{mathpar}

\jbox{\interpE{\li!theta!}}
\begin{mathpar}
  \interpE{\li!theta!} = \Set{(\li!e!, \li!e'!) |%
    \forall i < n.%
    \forall \li!v'!.\;%
    \step[i]{\li!e'! ~> \li!v'!} \implies%
    \exists \li!v!.\; \step*{\li!e! ~> \li!v!} \land%
    (\li!v!, \li!v'!) \in \interpV[n-i]{\li!theta!}}
\end{mathpar}

\caption{A logical relation for refinement}
\label{fig:log-rel}
\end{figure}

The definitions are mostly standard. The most interesting case is the
value interpretation of \tpsi-type: we say the pair of a public view
and an oblivious value of an OADT is equivalent to a public value of
the corresponding ADT when the oblivious value can be retracted to the
public value. Intuitively, an encrypted value is equivalent to the
value it decrypts to. The base cases of the value interpretation
are also guarded by the condition that we still have steps left,
i.e., greater than $0$. This requirement maintains the pleasant
property that the interpretations \interpV[0]{\li!theta!} and
\interpE[0]{\li!theta!} are total relations on closed values and
expressions, respectively, of type \lstinline|theta|. The proof also
uses a straightforward interpretation of typing context,
\interpG{\tctx}, whose definition is in appendix.

This relation also gives rise to a semantic characterization of the lifting
context. We say a lifting context is $n$-valid, denoted by \lctxvalid[n]{\lctx},
if and only if, for any \fmath{\hastype{\li!x! :: \li!theta! |> \li!Ox!} \in
\lctx}, \fmath{(\li!x!, \li!Ox!) \in \interpE{\li!theta!}}. If
\lctxvalid[n]{\lctx} for any $n$, we say \lctx{} is valid, denoted by
\lctxvalid{\lctx}. The validity is essentially a semantic correctness of
\lctx{}.

\subsection{Metatheory of Lifting}
\label{sec:meta+lift}
The first key property of the lifting relation is well-typedness,
which guarantees the security of translated programs, thanks to
\Cref{thm:obliviousness}.

\begin{theorem}[Regularity of declarative lifting]
  \label{thm:liftingR-reg}
  Suppose\/ \lctx\/ is well-typed and\/ \liftR{\sctx; \lctx; \gctx; \tctx |-
    \li!e! :: \li!theta! |> \li!Oe!}. We have\/ \type{\gctx; \erase{\tctx} |-
    \li!e! :: \li!&[theta]&!} and\/ \type{\gctx; \tctx |- \li!Oe! ::
    \li!theta!}.
\end{theorem}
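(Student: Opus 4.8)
The statement is a syntactic regularity lemma, so the plan is to prove it by induction on the derivation of $\liftR{\sctx; \lctx; \gctx; \tctx |- \li!e! :: \li!theta! |> \li!Oe!}$, establishing the two conclusions --- $\type{\gctx; \erase{\tctx} |- \li!e! :: \li!&[theta]&!}$ and $\type{\gctx; \tctx |- \li!Oe! :: \li!theta!}$ --- simultaneously. Before running that induction I would dispatch three preliminary facts: (a) a standing well-formedness assumption that every structure in $\sctx$ typechecks at its ascribed signature (each section, retraction, view function, each $\li!@C_i!$ of an intro-structure, each $\li!@Imatch!$ of an elim-structure, each $\li!JOIN!$/$\li!RESHAPE!$ of a join-structure, and each $\li!COER!$ of a coercion-structure); (b) a coercion lemma; and (c) a mergeability lemma (both described below). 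The bulk of the main induction is routine: the rules in \Cref{fig:liftingR} are congruences mirroring the typing rules of \loadtpsi, so in each case one applies the induction hypotheses to the premises and closes with the matching typing (or kinding) rule. For the source conclusion, the key observations are that erasure is a structural congruence ($\erase{\li!theta_1->theta_2!} = \li!&[theta_1]&->&[theta_2]&!$, similarly for products, with $\li!&[&@T]&! = \li!T!$ and $\li!&[@bool]&! = \li!&[bool]&! = \li!bool!$), so $\li!e!$ keeps its public, non-dependent shape; that \textsc{L-Var} and \textsc{L-Fun} record $\hastype{\li!x! :: \li!theta!}$ in $\tctx$, hence $\hastype{\li!x! :: \li!&[theta]&!}$ in $\erase{\tctx}$; and that \textsc{L-Ctor$_1$}/\textsc{L-Match$_1$} (and the $_2$-variants, using $\li!&[&@T]&! = \li!T!$) are justified by the ADT declaration of $\li!T!$ in $\gctx$ together with \textsc{DT-ADT}. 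For the target conclusion, \textsc{L-Abs}, \textsc{L-App}, \textsc{L-Let}, \textsc{L-If$_1$}, \textsc{L-Ctor$_1$}, \textsc{L-Match$_1$} follow directly from \textsc{T-Abs}, \textsc{T-App}, etc., with \textsc{L-Abs} additionally invoking $\kind{\empctx |- \li!theta_1! :: \li!*@*!}$, which holds because specification types are required to be well-kinded.

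\textbf{The structure-dependent cases.} These are the only ones not handled by the congruence pattern, and each is closed using one of the preliminary facts. For \textsc{L-Fun}, the target $\li!Ox!$ is well-typed because $\lctx$ is assumed well-typed, which by definition supplies $\type{\gctx; \empctx |- \li!Ox! :: \li!theta!}$. For \textsc{L-Coerce}, I would prove as lemma (b) that a coercion witness is well-typed and that coercion preserves the compatibility class, i.e.\ $\coercible{\li!theta! ~> \li!theta'! |> \li!COER!}$ implies both $\type{\gctx; \tctx |- \li!COER! :: \li!theta->theta'!}$ and $\li!&[theta]&! = \li!&[theta']&!$, by a straightforward induction on \Cref{fig:coercion} using the declared signatures of entries in $\sctxoadt$ and $\sctxcoer$ and the fact that coercion-structures relate only compatible OADTs; then \textsc{T-App} with the induction hypothesis on $\li!e!$ gives the target, and erasure-preservation gives the source. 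For \textsc{L-If$_2$}, I would prove as lemma (c) that $\mergeable{\li!theta! |> \li!@ite!}$ implies $\type{\gctx; \tctx |- \li!@ite! :: \li!@bool->theta->theta->theta!}$, by induction on \Cref{fig:mergeable}; its only non-trivial case is the $\tpsi$-type case, typed with \textsc{T-PsiPair}, \textsc{T-Mux}, \textsc{T-PsiProj$_1$}, \textsc{T-PsiProj$_2$} and the declared types of $\li!JOIN!$ and $\li!RESHAPE!$ drawn from $\sctxjoin$ (the other cases are congruent, falling back to \textsc{T-Mux} at base types). Finally \textsc{L-Ctor$_2$} and \textsc{L-Match$_2$} are applications of \textsc{T-App} against the declared types $\hastype{\li!@C_i! :: \li!theta_i->&@T!} \in \sctxintro$ and $\hastype{\li!@Imatch! :: \li!&@T->!\overline{\li!(theta->theta')!}\li!->theta'!} \in \sctxelim$, after using the induction hypotheses to type the (eta-expanded) branch arguments $\li!?x:theta_i=>Oe_i!$ at $\li!theta_i->theta'!$.

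\textbf{Main obstacle.} I expect no single deep difficulty; the real work is bookkeeping around the $\psi$-structure contexts. The theorem is only true relative to the well-formedness assumption (a) on $\sctx$, so I would state that explicitly and factor out the ``derived-witness is well-typed'' lemmas (b) and (c) --- together with erasure-preservation --- as standalone results established before the main induction; getting their statements and side conditions exactly right (in particular that erasure is preserved under coercion, which is what lets \textsc{L-Coerce} go through on the source side) is the most error-prone part. A secondary, purely technical point: specification types carry no term-level dependency, so the substituted codomain $\li![e_1/x]tau_2!$ in \textsc{T-App} degenerates to $\li!theta_2!$ and no substitution lemma for the lifting relation is needed --- but one should still verify that the elided premises of \textsc{L-App}, \textsc{L-Match$_2$}, etc., line up the argument and result specification types precisely so that the $\loadtpsi$ typing rules apply on the nose.
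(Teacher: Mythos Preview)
Your proposal is correct and matches the paper's own proof essentially line for line: the paper proves the theorem by induction on the lifting derivation, factors out exactly your lemmas (b) and (c) as separate results (called \emph{Regularity of coercibility} and \emph{Regularity of mergeability}), and handles \textsc{L-Ctor$_2$}/\textsc{L-Match$_2$} by appealing to the declared signatures in $\sctxintro$/$\sctxelim$ just as you describe. Your observation that the coercion lemma must also yield $\li!&[theta]&! = \li!&[theta']&!$ (for the source side of \textsc{L-Coerce}) is precisely the content the paper records in that lemma, and your remark that specification types carry no term-level dependency is the reason the dependent typing rules degenerate cleanly.
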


Our lifting relation ensures that lifted expressions refine source
expressions in fewer than $n$ steps, as long as every lifted program
in \lctx{} is also semantically correct in fewer than $n$ steps.  As
is common in logical relation proofs, this proof requires a more
general theorem about open terms.

\begin{theorem}[Correctness of declarative lifting of closed terms]
  \label{thm:correct-closed-exp}
  Suppose\/ \liftR{\sctx; \lctx; \gctx; \empctx |- \li!e! :: \li!theta! |>
    \li!Oe!} and\/ \lctxvalid[n]{\lctx}. We have \fmath{(\li!e!, \li!Oe!) \in
    \interpE{\li!theta!}}.
\end{theorem}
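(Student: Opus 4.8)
The plan is to strengthen the statement to open terms and then instantiate at the empty context. Concretely, I would first prove the following generalization: if $\liftR{\sctx; \lctx; \gctx; \tctx |- \li!e! :: \li!theta! |> \li!Oe!}$ and $\lctxvalid[n]{\lctx}$, then for every pair of closing substitutions $(\subst, \subst') \in \interpG{\tctx}$ we have $(\subst(\li!e!), \subst'(\li!Oe!)) \in \interpE{\li!theta!}$, where $\subst$ closes $\li!e!$ over $\erase{\tctx}$ and $\subst'$ closes $\li!Oe!$ over $\tctx$. \Cref{thm:correct-closed-exp} is the case $\tctx = \empctx$, for which $\interpG{\empctx}$ contains only the pair of empty substitutions. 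The generalization is proved by induction on the derivation of the lifting judgment. Throughout I would use, without further comment, three routine facts about the relation: monotonicity (downward closure) of $\interpV{\li!theta!}$ and $\interpE{\li!theta!}$ in the step index $n$; totality of $\interpV[0]{\li!theta!}$ and $\interpE[0]{\li!theta!}$ on well-typed closed values and expressions; and that $\lctxvalid[n]{\lctx}$ entails $\lctxvalid[m]{\lctx}$ for $m \le n$. The well-typedness side-conditions baked into the logical relation are discharged by \Cref{thm:liftingR-reg}.

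For the congruence rules --- \textsc{L-Lit}, \textsc{L-Var}, \textsc{L-Abs}, \textsc{L-App}, \textsc{L-Let}, \textsc{L-If$_1$}, \textsc{L-Ctor$_1$}, \textsc{L-Match$_1$} --- the argument is the standard compatibility pattern for step-indexed logical relations: apply the induction hypotheses to the subderivations (reindexed downward as needed) and recombine them with an anti-reduction/\emph{bind} lemma for $\interpE{\li!theta!}$, which says that filling related expressions into matching evaluation contexts yields related expressions, adjusting the step budget on the target side only. It is convenient here that $\interpE{\li!theta!}$ budgets only \emph{target} steps --- the source side is allowed an unbounded number of steps --- so source-side evaluation contexts such as \li!C []!, \li!match [] with ...!, and \li!if [] then ... else ...! are essentially free. \textsc{L-Var} then follows from the hypothesis $(\subst, \subst') \in \interpG{\tctx}$; \textsc{L-Fun} follows directly from $\lctxvalid[n]{\lctx}$, which supplies $(\li!x!, \li!Ox!) \in \interpE{\li!theta!}$ for each entry of $\lctx$; \textsc{L-Abs} unfolds $\interpV{\li!theta_1->theta_2!}$ and extends the substitution pair by a related argument; and \textsc{L-If$_1$}, \textsc{L-Match$_1$} additionally use that related public booleans (resp.\ related ADT values) are literally equal when $n > 0$, so source and target select the same branch.

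The cases that actually exercise the \psistructs are \textsc{L-If$_2$}, \textsc{L-Ctor$_2$}, \textsc{L-Match$_2$}, and \textsc{L-Coerce}, and the crux is to establish soundness lemmas for the \emph{derived} combinators against the logical relation, starting from the (un-indexed, big-step) \psistruct axioms. Concretely: (i) \emph{soundness of coercion} --- if $\coercible{\li!theta! ~> \li!theta'! |> \li!COER!}$ then $(\li!v!, \li!Ov!) \in \interpV{\li!theta!}$ implies $(\li!v!, \li!COER Ov!) \in \interpE{\li!theta'!}$ --- proved by induction on the coercion derivation, using \textsc{A-C$_{1}$} for $\sctxcoer$-coercions, \textsc{A-O$_{1}$}/\textsc{A-O$_{3}$} for the \li!T! $\to$ \li!&@T! coercion (the section of a value at its view-computed public view retracts back to it), and (contravariant, for function types) congruence otherwise; (ii) \emph{soundness of mergeability} --- if $\mergeable{\li!theta! |> \li!@ite!}$ then \li!@ite! simulates \li!if!, i.e.\ when the condition pair lies in $\interpV{\li!@bool!}$ and both branch pairs lie in $\interpE{\li!theta!}$, then \li!@ite! applied to them is related to the source branch selected by the hidden condition --- proved by induction on the mergeability derivation, the \li!&@T! case using \textsc{A-O$_{2}$} and \textsc{A-R$_{1}$}--\textsc{A-R$_{4}$} to argue that after joining the two public views via \li!JOIN! and reshaping the selected branch to the common view via \li!RESHAPE!, the resulting oblivious value still retracts to the same underlying value; (iii) \textsc{A-I$_{1}$} discharges \textsc{L-Ctor$_2$} (after a bind on the argument) and \textsc{A-E$_{1}$} discharges \textsc{L-Match$_2$}, once the induction hypotheses on the branches of the source \li!match! are repackaged as the pair of related branch-abstractions in $\interpV{\li!theta_i->theta'!}$ that \textsc{A-E$_{1}$} consumes. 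Given (i) and (ii), \textsc{L-Coerce} and \textsc{L-If$_2$} follow by composing with the bind lemma.

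I expect the main obstacle to be precisely these \psistruct cases, and in particular \textsc{L-Match$_2$} together with the attendant step-index accounting. Axiom \textsc{A-E$_{1}$} presupposes that the discriminee retracts to a value of a \emph{specific} constructor form \li!C_i v_i!, whereas $\interpV{\li!&@T!}$ only yields $\step*{\li!r k @v! ~> \li!v!}$ for \emph{some} value \li!v!; bridging the two requires a canonical-forms argument (the retracted value has ADT type, hence equals \li!C_i v_i! for a unique $i$, using determinism of evaluation) and then matching that $i$ against the branch the source \li!match! actually takes. Layered on top is the bookkeeping of reconciling the number of target steps consumed by the call to \li!@Imatch! (and, before it, by evaluating the discriminee) against the residual budget $\interpE{\li!theta'!}$ demands from \textsc{A-E$_{1}$}; the same juggling recurs, in milder form, in \textsc{L-If$_2$}, where \li!JOIN!, \li!RESHAPE!, the \li!let!, and the final \li!mux! all consume target steps while the pertinent \textsc{A-R} axioms are un-indexed, and in the \li!T! $\to$ \li!&@T! coercion. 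Everything else is routine logical-relations bookkeeping.
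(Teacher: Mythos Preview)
Your proposal is correct and matches the paper's approach essentially point for point: the paper proves the open-term generalization by induction on the lifting derivation, factors out exactly your lemmas (i) and (ii) as separate ``correctness of coercibility'' and ``correctness of mergeability'' lemmas (proved at the value level, then lifted to expressions via the evaluation-context/bind pattern you describe), and dispatches \textsc{L-Ctor$_2$} and \textsc{L-Match$_2$} directly by \textsc{A-I$_1$} and \textsc{A-E$_1$} after the induction hypotheses supply the related branch abstractions. Your anticipated obstacle in \textsc{L-Match$_2$}---that one must first appeal to canonical forms (and determinism) to know the retracted discriminee is some \li!C_i v_i!---is real but, as you guessed, routine; the paper leaves it implicit.
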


Finally, \Cref{thm:correct} provides a strong result of the
correctness of our translation. Any lifting context that is derived
using the rules of \Cref{fig:liftingR} is semantically correct. In other
words, if every pair of source program and lifted program in \lctx{}
are in our lifting relation, they also satisfy our refinement
criteria.

\begin{theorem}[Correctness of declarative lifting]
  \label{thm:correct}
  \lctxderiv{\lctx} implies \lctxvalid{\lctx}.
\end{theorem}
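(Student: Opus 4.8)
The plan is to strengthen the goal to ``\lctxvalid[n]{\lctx} for every $n$'' and prove it by induction on $n$; since \lctxvalid{\lctx} unfolds to $\forall n.\ \lctxvalid[n]{\lctx}$, this suffices. Fix once and for all an arbitrary entry \fmath{\hastype{\li!x! :: \li!theta! |> \li!Ox!} \in \lctx}. By \lctxderiv{\lctx} there are global definitions \fmath{\li!fn x:&[theta]& = e! \in \gctx} and \fmath{\li!fn Ox:theta = Oe! \in \gctx} together with a lifting derivation \liftR{\sctx; \lctx; \gctx; \empctx |- \li!e! :: \li!theta! |> \li!Oe!} --- crucially, over this \emph{same} \lctx. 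The well-typedness conditions that the logical relation builds in (so that \fmath{(\li!x!, \li!Ox!) \in \interpE{\li!theta!}} even type-checks) amount to \type{\empctx |- \li!x! :: \li!&[theta]&!} and \type{\empctx |- \li!Ox! :: \li!theta!}; these hold because \li!x! and \li!Ox! name well-typed global functions of types \li!&[theta]&! and \li!theta! respectively (one can recover the types of the bodies \li!e! and \li!Oe! from the lifting derivation via \Cref{thm:liftingR-reg}).

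\emph{Base case} ($n=0$). We must show \fmath{(\li!x!, \li!Ox!) \in \interpE[0]{\li!theta!}}. The body of \interpE[0]{\li!theta!} quantifies over $i < 0$ and is thus vacuously true, so \interpE[0]{\li!theta!} is just the total relation on closed, well-typed pairs of types \li!&[theta]&! and \li!theta! --- the totality of the $0$-indexed interpretations observed in \Cref{sec:logrel} --- and \fmath{(\li!x!, \li!Ox!)} is such a pair.

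\emph{Inductive step.} Assume \lctxvalid[n]{\lctx}; we derive \lctxvalid[n+1]{\lctx}, i.e.\ \fmath{(\li!x!, \li!Ox!) \in \interpE[n+1]{\li!theta!}}. Unfolding this interpretation, fix $i < n+1$ and a value \li!v'! with \step[i]{\li!Ox! ~> \li!v'!}. A bare global name matches no evaluation-context hole, so the only step available from \li!Ox! is \textsc{S-Fun}, which gives \step{\li!Ox! ~> \li!Oe!} as the unique first reduct; hence $i \ge 1$ and \step[i-1]{\li!Oe! ~> \li!v'!}. Now apply \Cref{thm:correct-closed-exp} to the lifting derivation above and the induction hypothesis \lctxvalid[n]{\lctx}, obtaining \fmath{(\li!e!, \li!Oe!) \in \interpE[n]{\li!theta!}}. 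Since $i-1 < n$ and \step[i-1]{\li!Oe! ~> \li!v'!}, unfolding \interpE[n]{\li!theta!} yields a value \li!v! with \step*{\li!e! ~> \li!v!} and \fmath{(\li!v!, \li!v'!) \in \interpV[n-(i-1)]{\li!theta!}}, and $n-(i-1) = (n+1)-i$ is exactly the index that \interpE[n+1]{\li!theta!} demands. Finally, \step{\li!x! ~> \li!e!} by \textsc{S-Fun}, so \step*{\li!x! ~> \li!v!}; this discharges the proof obligation. Since the entry was arbitrary, \lctxvalid[n+1]{\lctx} holds, completing the induction, and hence \lctxvalid{\lctx}.

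\emph{The main obstacle} is not the outer induction on $n$ --- that is bookkeeping --- but the fact that it is legitimate despite the circularity in \lctxderiv{\lctx}: the lifting derivation of each function body may, via \textsc{L-Fun}, refer back to entries of \lctx, including the very function being defined. What makes the step-indexed argument go through is the ``off-by-one'': unfolding a global definition (\textsc{S-Fun}) costs exactly one evaluation step, so $n$-validity of \lctx is \emph{precisely} enough to establish $(n{+}1)$-validity of the unfolded names. This is the standard L\"ob-style pattern for step-indexed logical relations with recursive definitions, and all the genuinely substantive reasoning --- the induction over lifting derivations, with one case per rule of \Cref{fig:liftingR} matched against the clauses of the logical relation --- has already been carried out in \Cref{thm:correct-closed-exp}. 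The only extra care required here is to note that \li!x! and \li!Ox! reduce solely by \textsc{S-Fun} to the unique reducts \li!e! and \li!Oe!, which is immediate from the grammar of evaluation contexts.
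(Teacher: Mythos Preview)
Your proof is correct and follows essentially the same approach as the paper's own proof: induction on the step index $n$, with the base case trivial because \interpE[0]{\li!theta!} is total on well-typed pairs, and the inductive step obtained by unfolding the global names via \textsc{S-Fun} (costing one step) and then invoking \Cref{thm:correct-closed-exp} with the induction hypothesis \lctxvalid[n]{\lctx}. Your write-up is considerably more detailed than the paper's terse version---in particular you spell out the step arithmetic and the L\"ob-style ``off-by-one'' intuition---but the argument is the same.
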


Our notion of logical refinement only provides partial correctness
guarantees, as can be seen in the definition of \interpE{\cdot}. As a
result, the lifting relation does not guarantee equi-termination: it
is possible that a lifted program will diverge when the source program
terminates. This can occur when an \lstinline!if! is replaced by a
\lstinline!mux!: since the latter fully executes both branches, this
effectively changes the semantics of a conditional from a lazy
evaluation strategy to an eager strategy. Using a public value to
bound the recursion depth in order to guarantee termination is a
common practice in data-oblivious computation, for the reasons
discussed in \Cref{sec:overview}.  While the public view of an OADT
naturally serves as a measure in many cases, including all of the case
studies and benchmarks in our evaluation, in theory it is possible for
a user to provide a policy to a function that results in a
nonterminating lifted version. In this situation, users must either
specify a different policy, or rewrite the functions to recurse on a
different argument, e.g., a fuel value.

\section{Algorithmic Lifting}
\label{sec:lift}

\begin{wrapfigure}{r}{.36\textwidth}
  \vspace{-.3cm}
  \footnotesize
  \tikzstyle{cat} = [text centered, align=center]
  \tikzstyle{block} = [text centered, align=center, rectangle, draw]
  \tikzstyle{label} = [fill=white, align=center, anchor=center]
  \tikzstyle{line} = [->]
  \begin{tikzpicture}[auto, node distance=20pt and 0pt]
    \node[block](g) {goals from \lstinline[basicstyle=\ttfamily\scriptsize]!Mlift!};
    \node[block, below=of g](f) {functions to lift};
    \node[below=27pt of f](p0) {};
    \node[block, left=of p0](l0) {lifted functions with\\macros \& type var.};
    \node[block, right=of p0](c) {constraints over\\type var.};
    \node[block, below=of c](a) {type assignments};
    \node[below=33pt of p0](p1) {};
    \node[block, below=of p1](l1) {lifted functions\\with macros};
    \node[block, below=of l1](l2) {well-typed \& correct\\lifted functions};

    \draw[line] (g) -- node[label] {dependency analysis} (f);
    \draw[line] (f) -- (l0);
    \draw[line] (f) -- (c);
    \path (f) -- node[label,yshift=3pt] {lifting} (p0);
    \draw[line] (c) -- node[label] {constraint solving} (a);
    \draw[line] (a) -- (l1);
    \draw[line] (l0) -- (l1);
    \path (p1) -- node[label] {instantiation} (l1);
    \draw[line] (l1) -- node[label] {elaboration} (l2);
  \end{tikzpicture}
  \caption{Translation pipeline}
  \label{fig:pipeline}
\end{wrapfigure}
\Cref{fig:pipeline} presents the overall workflow of our lifting
algorithm. This algorithm starts with a set of \emph{goals},
i.e., pairs of source functions tagged with the \lstinline!Mlift!
keywords and their desired specification types. We then run our
lifting algorithm on all the functions in these goals, as well as any
functions they depend on, transforming each of these functions to an
oblivious version parameterized by \emph{typed macros} and type
variables, along with a set of constraints over these type variables.
After solving the constraints, we obtain a set of type assignments for
each function. Note that a single function may have multiple type
assignments, one for each occurence in a goal and
callsite. For example, \lstinline!filter!  may have the type
assignment for the goal %
\lstinline!&@list<= -> @int -> &@list<=! generated by \lstinline!Mlift!, and
the assignment for \lstinline!&@list<= -> int -> &@list<=! generated
by the call in \lstinline!filter5! from \Cref{sec:overview}. Finally, we
generate the private versions of all the lifted functions by
instantiating their type variables and expanding away any macros. The
lifting context from the last section is simply these lifted functions
and their generated private versions.

The lifting algorithm is defined using the judgment \liftA{\gctx;
  \tctx |- \li!e! :: \li!eta! ~ \li!X! |> \li!Oe! ~> \cstrs}. It reads
as the source expression \lstinline|e| of type \lstinline|eta| is
lifted to the target expression \lstinline|Oe| whose type is a
\emph{type variable} \lstinline|X| as a placeholder for the
specification type, and generates constraints \cstrs. The source
expression \lstinline|e| is required to be in \emph{administrative
  normal form} (ANF)~\cite{flanagan1993}, which is guaranteed by our
type checker. In particular, type annotations are added to
\lstinline|let|-bindings, and the body of every \lstinline|let| is
either another \lstinline|let| or a variable. Importantly, this means
the last expression of a sequence of \lstinline|let| must be a
variable. The output of this algorithm is an expression \lstinline|Oe|
containing macros (which will be discussed shortly), and the
constraints \cstrs. Unlike the declarative rules, this algorithm keeps
track of the source type \lstinline|eta|, which is used to restrict
the range of the type variables. Consequently, every entry of the
typing context \tctx{} has the form \hastype{\li!x! :: \li!eta! ~
  \li!X!}, meaning that local variable \lstinline|x| has type
\lstinline|eta| in the source program and type \lstinline|X| in the target
program. For example, after the lifting algorithm has processed the
function arguments of \lstinline!filter! in \Cref{fig:list-filter},
the typing context contains entries \hastype{\li!xs! :: \li!list! ~
  \li!X_1!} and \hastype{\li!y! :: \li!int! ~ \li!X_2!}, with freshly
generated type variables \lstinline!X_1! and \lstinline!X_2!.

The typed macros, defined in \Cref{fig:ppx}, are an essential part of
the output of the lifting algorithm, and permit a form of ad-hoc
polymorphism, that allows the algorithm to cleanly separate constraint
solving from program generation. These macros take types as parameters
and elaborate to expressions, under the contexts \sctx, \lctx{} and
\gctx{} implicitly. These macros are effectively thin ``wrappers'' of
their corresponding language constructs and the previously defined
relations. The conditional macro \lstinline|Mite|, for example,
corresponds to the \lstinline|if| expression, but the condition may be
oblivious. The constructor macro \lstinline|MC| is a ``smart''
constructor that may construct a \tpsi-type. The pattern matching
macro \lstinline|Mmatch| is similar to \lstinline|MC| but for
eliminating a type compatible with an ADT. Lastly, \lstinline|MCOER|
and \lstinline|Mx| is simply a direct wrapper of the mergeable
relation and the lifting context \lctx, respectively. Note that the
derivation of these macro are completely determined by the type
parameters.

\Cref{fig:constraints} defines the constraints used in the algorithm,
where \lstinline|thetaV| is the specification types extended with type
variables.  The constraint \incls{\li!X! ~ \li!eta!} means type
variable \lstinline|X| belongs to the compatibility class of
\lstinline|eta|. In other words, \fmath{\li!&[X]&!  = \li!eta!}. Each
macro is accompanied by a constraint on its type parameters. These
constraints mean that the corresponding macros are resolvable. More
formally, this means they can elaborate to some expressions according
to the rules in \Cref{fig:ppx} for any expression arguments. As a
result, after solving all constraints and concretizing the type
variables, all macros in the lifted expression \lstinline|Oe| can be
fully elaborated away.

\begin{figure}[t]
\footnotesize
\jbox{\ppx{\li!Mite(theta_0,theta;e_0,e_1,e_2)! |> \li!Oe!}}
\begin{mathpar}
  \inferrule{
  }{
    \ppx{\li!Mite(bool,theta;e_0,e_1,e_2)! |> \li!if e_0 then e_1 else e_2!}
  }

  \inferrule{
    \mergeable{\li!theta! |> \li!@ite!}
  }{
    \ppx{\li!Mite(@bool,theta;e_0,e_1,e_2)! |> \li!@ite e_0  e_1 e_2!}
  }
\end{mathpar}

\jbox{\ppx{\li!MC(theta,theta';e)! |> \li!Oe!}}
\begin{mathpar}
  \inferrule{
    \li!data T =\ !\overline{\li!C eta!} \in \gctx
  }{
    \ppx{\li!MC_i(eta_i,T;e)! |> \li!C_i e!}
  }

  \inferrule{
    \hastype{\li!@C_i! :: \li!theta_i->&@T!} \in \sctxintro
  }{
    \ppx{\li!MC_i(theta_i,&@T;e)! |> \li!@C_i e!}
  }
\end{mathpar}

\jbox{\ppx{\li!Mmatch(theta_0,!\overline{\li!theta!}\li!,theta';%
    e_0,!\overline{\li!e!}\li!)! |> \li!Oe!}}
\begin{mathpar}
  \inferrule{
    \li!data T =\ !\overline{\li!C eta!} \in \gctx
  }{
    \ppx{\li!Mmatch(T,!\overline{\li!eta!}\li!,theta';%
      e_0,!\overline{\li!e!}\li!)! |>%
      \li!match e_0 with\ !\overline{\li!C x=>e!}}
  }

  \inferrule{
    \hastype{\li!@Imatch! ::%
      \li!&@T->!\overline{\li!(theta->theta')!}\li!->theta'!} \in \sctxelim
  }{
    \ppx{\li!Mmatch(&@T,!\overline{\li!theta!}\li!,theta';%
      e_0,!\overline{\li!e!}\li!)! |>%
      \li!@Imatch e_0\ !\overline{\li!(?x:theta=>e)!}}
  }
\end{mathpar}

\begin{minipage}{.5\textwidth}
\jbox{\ppx{\li!MCOER(theta,theta';e)! |> \li!Oe!}}
\begin{mathpar}
  \inferrule{
    \coercible{\li!theta! ~> \li!theta'! |> \li!COER!}
  }{
    \ppx{\li!MCOER(theta,theta';e)! |> \li!COERe!}
  }
\end{mathpar}
\end{minipage}%
\begin{minipage}{.5\textwidth}
\jbox{\ppx{\li!Mx(theta)! |> \li!Oe!}}
\begin{mathpar}
  \inferrule{
    \hastype{\li!x! :: \li!theta! |> \li!Ox!} \in \lctx
  }{
    \ppx{\li!Mx(theta)! |> \li!Ox!}
  }
\end{mathpar}
\end{minipage}
  \caption{Typed macros}
  \label{fig:ppx}
\end{figure}

\Cref{fig:liftingA} shows a selection of lifting algorithm
rules. Coercions only happen when we lift variables, as in
\textsc{A-Var}. This works because the source program is in ANF, so
each expression is bound to a variable which has the opportunity to
get coerced. For example, the argument to a function or constructor,
in \textsc{A-App} and \textsc{A-Ctor}, is always a variable in ANF,
and recursively lifting it allows the application of
\textsc{A-Var}. On the other hand, the top-level program is always in
\lstinline!let!-binding form, whose last expression is always a
variable too, allowing coercion of the whole program. However, not all
variables are subject to coercions: the function \lstinline!x_2! in
\textsc{A-App}, the condition \lstinline!x_0! in \textsc{A-If} and the
discriminee \lstinline!x_0! in \textsc{A-Match} are kept as they are,
for example. Coercing these variables would be unnecessary and
undesirable. For example, coercing the condition in a conditional only
makes the generated program more expensive: there is no reason to
coerce from \lstinline!bool! to \lstinline!@bool!, and use
\lstinline!mux! instead of \lstinline!if!. Another key invariant we
enforce in our algorithmic rules is that every fresh variable is
``guarded'' by a compatibility class constraint.  For example, in
\textsc{A-Abs}, the freshly generated variables \lstinline!X_1!  and
\lstinline!X_2! belong to the classes \lstinline!eta_1! and
\lstinline!eta_2!, respectively. This constraint ensures that every
type variable can be finitely enumerated, as every compatibility class
is a finite set, bounded by the number of available OADTs. As a
result, constraint solving in our context is decidable. Finally, if an
expression is translated to a macro, a corresponding constraint is
added to ensure this macro is resolvable.

\begin{figure}[t]
\footnotesize
\raggedright
\textsc{Constraints}

\begin{tabular}{RC>{\footnotesize\(}l<{\)}l}
\cstr &\production& \incls{\li!X! ~ \li!eta!} \mid%
  \li!thetaV! = \li!thetaV! \mid \li!Mite(thetaV,thetaV)! \mid%
  \li!MC(thetaV,thetaV)! \mid%
  \li!Mmatch(thetaV,!\overline{\li!thetaV!}\li!,thetaV)! \mid%
  \li!MCOER(thetaV,thetaV)! \mid \li!Mx(thetaV)! & \\
\end{tabular}

  \caption{Constraints}
  \label{fig:constraints}
\end{figure}

\begin{figure}[t]
\footnotesize
\jbox{\liftA{\li!e! :: \li!eta! ~ \li!X! |> \li!Oe! ~> \cstrs}}
\begin{mathpar}
  \inferrule[A-Lit]{
  }{
    \liftA{\li!b! :: \li!bool! ~ \li!X! |> \li!b! ~> \li!X! = \li!bool!}
  }

  \inferrule[A-Var]{
    \hastype{\li!x! :: \li!eta! ~ \li!X!} \in \tctx
  }{
    \liftA{\li!x! :: \li!eta! ~ \li!X'! |> \li!MCOER(X,X';x)! ~> \li!MCOER(X,X')!}
  }

  \inferrule[A-Fun]{
    \li!fn x:eta = e! \in \gctx
  }{
    \liftA{\li!x! :: \li!eta! ~ \li!X! |> \li!Mx(X)! ~> \li!Mx(X)!}
  }

  \inferrule[A-Abs]{
    \fresh{\li!X_1!; \li!X_2!} \\
    \liftA{\extctx{\li!x! :: \li!eta_1! ~ \li!X_1!} |-%
      \li!e! :: \li!eta_2! ~ \li!X_2! |> \li!Oe! ~> \cstrs}
  }{
    \liftA{\li!?x:eta_1=>e! :: \li!eta_1->eta_2! ~ \li!X! |>%
      \li!?x:X_1=>Oe! ~>%
      \incls{\li!X_1! ~ \li!eta_1!}, \incls{\li!X_2! ~ \li!eta_2!},%
      \li!X! = \li!X_1->X_2!, \cstrs}
  }

  \inferrule[A-App]{
    \fresh{\li!X_1!} \\
    \hastype{\li!x_2! :: \li!eta_1->eta_2! ~ \li!X!} \in \tctx \\
    \liftA{\li!x_1! :: \li!eta_1! ~ \li!X_1! |> \li!Oe_1! ~> \cstrs}
  }{
    \liftA{\li!x_2 x_1! :: \li!eta_2! ~ \li!X_2! |> \li!x_2 Oe_1! ~>%
      \incls{\li!X_1! ~ \li!eta_1!}, \li!X! = \li!X_1->X_2!, \cstrs}
  }

  \inferrule[A-Let]{
    \fresh{\li!X_1!} \\
    \liftA{\li!e_1! :: \li!eta_1! ~ \li!X_1! |> \li!Oe_1! ~> \cstrs_1} \\
    \liftA{\extctx{\li!x! :: \li!eta_1! ~ \li!X_1!} |-%
      \li!e_2! :: \li!eta_2! ~ \li!X_2! |> \li!Oe_2! ~> \cstrs_2}
  }{
    \liftA{\li!let x:eta_1 = e_1 in e_2! :: \li!eta_2! ~ \li!X_2! |>%
      \li!let x:X_1 = Oe_1 in Oe_2! ~>%
      \incls{\li!X_1! ~ \li!eta_1!}, \cstrs_1, \cstrs_2}
  }

  \inferrule[A-If]{
    \hastype{\li!x_0! :: \li!bool! ~ \li!X_0!} \in \tctx \\
    \liftA{\li!e_1! :: \li!eta! ~ \li!X! |> \li!Oe_1! ~> \cstrs_1} \\
    \liftA{\li!e_2! :: \li!eta! ~ \li!X! |> \li!Oe_2! ~> \cstrs_2} \\
  }{
    \liftA{\li!if x_0 then e_1 else e_2! :: \li!eta! ~ \li!X! |>%
      \li!Mite(X_0,X;x_0,Oe_1,Oe_2)! ~> \li!Mite(X_0,X)!, \cstrs_1, \cstrs_2}
  }

  \inferrule[A-Ctor]{
    \li!data T =\ !\overline{\li!C eta!} \in \gctx \\
    \fresh{\li!X_i!} \\
    \liftA{\li!x! :: \li!eta_i! ~ \li!X_i! |> \li!Oe! ~> \cstrs}
  }{
    \liftA{\li!C_i x! :: \li!T! ~ \li!X! |> \li!MC_i(X_i,X;Oe)! ~>%
      \incls{\li!X_i! ~ \li!eta_i!}, \li!MC_i(X_i,X)!, \cstrs}
  }

  \inferrule[A-Match]{
    \li!data T =\ !\overline{\li!C eta!} \in \gctx \\
    \fresh{\overline{\li!X!}} \\
    \hastype{\li!x_0! :: \li!T! ~ \li!X_0!} \in \tctx \\
    \forall i.\; \liftA{\extctx{\li!x! :: \li!eta_i! ~ \li!X_i!} |-%
      \li!e_i! :: \li!eta'! ~ \li!X'! |> \li!Oe_i! ~> \cstrs_i}
  }{
    \liftA{\li!match x_0 with\ !\overline{\li!C x=>e!} :: \li!eta'! ~ \li!X'! |>%
      \li!Mmatch(X_0,!\overline{\li!X!}\li!,X';%
      x_0,!\overline{\li!Oe!}\li!)! ~>%
      \overline{\incls{\li!X! ~ \li!eta!}},%
      \li!Mmatch(X_0,!\overline{\li!X!}\li!,X')!,%
      \overline{\cstrs}}
  }
\end{mathpar}

  \caption{Selected algorithmic lifting rules}
  \label{fig:liftingA}
\end{figure}

We use the judgment \cstrsat{\sctx; \lctx; \gctx; \subst |- \cstrs} to
mean the assignment \subst{} satisfies a set of constraints \cstrs{},
under the context of \psistruct, lifting context and global definition
context.  The constraints generated by our lifting algorithm use type
variables \lstinline!X! as placeholders for the target type of the
function being lifted. To solve a goal with a particular target type
\lstinline!theta!, we add a constraint to \cstrs{} that equates the
placeholder with the stipulated type, i.e., \lstinline!X = theta!. Our
constraint solver then attempts to find type assignments that satisfy
the constraints in \cstrs{}; the resulting assignment is used to
generate private versions of all the functions in the set of goals, as
well as the accompanying lifting context.

At a high level,\footnote{The full details of our constraint solver are given in
the appendix.} our solver reduces all constraints, except for function call
constraints (\lstinline!Mx!), to quantifier-free formulas in a finite domain
theory, which can be efficiently solved using an off-the-shelf solver. Function
call constraints are recursively solved once their type arguments have been
concretized by discharging the other constraints. When a function call
constraint is unsatisfiable, we add a new refutation constraint and invoke the
solver again to find a new instantiation of type parameters. As an example of
this process, in order to ascribe \lstinline!filter! the type %
\lstinline!&@list== -> @int -> &@list<=!, we first add the
constraint %
\lstinline!X = &@list== -> @int -> &@list<=! to the constraints generated by the
lifting algorithm \liftA{\empctx |- \li!...! :: \li!list -> int -> list! ~ \li!X!
|> \li!Oe! ~> \cstrs}. Solving the other constraints may concretize the type
variable of function call constraint \lstinline!Mfilter(X)!, i.e., the type of
the recursive call to %
\lstinline!filter!, to %
\lstinline!Mfilter(&@list== -> @int -> &@list<=)!.  Recursively
solving this subgoal assuming the original goal is solved, i.e.,
extending the lifting context with the original goal, results in
immediate success, as the subgoal is simply in the lifting context. On
the other hand, if the type of the recursive call is instantiated as %
\lstinline!Mfilter(&@list== -> @int -> &@list==)!, the same
constraints generated by lifting \lstinline!filter! are solved, with
an additional constraint %
\lstinline!X = &@list== -> @int -> &@list==!. However, this set of
constraints is unsatisfiable, as \lstinline!@list==! has no join
structure, so we add a refutation constraint to the context that
forces the solver to not generate this assignment again. In general,
the type of the recursive call to \lstinline!filter! may be
concretized to any types compatible with %
\lstinline!list -> int -> list!. The number of such compatible types
is bounded, as the number of arguments of this function and the number
of OADTs are themselves bounded. The function \lstinline!filter! has
$3 \times 2 \times 3 = 18$ possible type assignments. In the worst
case scenario, the algorithm eventually terminates after exhausting
all $18$ combinations.

The lifting algorithm enjoys a soundness theorem with respect to the
declarative lifting relation. As a result, our algorithm inherits the
well-typedness and correctness properties of the declarative
version. The statement of this theorem follows how the algorithm is
used: if the generated constraints, equating the function type
variable with the specification type, are satisfiable by the type
assignment \subst{}, instantiating the lifted expression with \subst{}
and elaborating the macros results in a target expression that is
valid under the declarative lifting relation:

\begin{theorem}[Soundness of algorithmic lifting]
  \label{thm:sound-algo}
  Suppose\/ \liftA{\gctx; \empctx |- \li!e! :: \li!eta! ~ \li!X! |> \li!Oe! ~>
    \cstrs}. Given a specification type \li!theta!, if\/ \cstrsat{\sctx; \lctx;
    \gctx; \subst |- \li!X! = \li!theta!, \cstrs}, then\/
  \fmath{\subst(\li!Oe!)} elaborates to an expression \li!Oe'!, such that\/
  \liftR{\sctx; \lctx; \gctx; \empctx |- \li!e! :: \li!theta! |> \li!Oe'!}.
\end{theorem}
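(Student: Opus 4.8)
The plan is to prove, by induction on the derivation of \liftA{\gctx; \empctx |- \li!e! :: \li!eta! ~ \li!X! |> \li!Oe! ~> \cstrs}, a statement generalized to open terms. First I would carry an arbitrary algorithmic context \tctx{} whose entries have the form \hastype{\li!x! :: \li!eta! ~ \li!X!}, and write $\subst(\tctx)$ for the declarative context obtained by replacing each such \li!X! with $\subst(\li!X!)$. The generalized claim is: if \cstrsat{\sctx; \lctx; \gctx; \subst |- \cstrs} and, in addition, $\subst$ satisfies the compatibility guard \incls{\li!X! ~ \li!eta!} recorded for every entry of \tctx{}, then $\subst(\li!Oe!)$ elaborates --- via the rules of \Cref{fig:ppx} --- to some \li!Oe'! with \liftR{\sctx; \lctx; \gctx; \subst(\tctx) |- \li!e! :: \subst(\li!X!) |> \li!Oe'!}. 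The theorem is the instance in which \tctx{} is empty and the constraint \li!X! $=$ \li!theta! is added, so that $\subst(\li!X!) = \li!theta!$. Before running the induction I would record the standing invariant that the algorithm of \Cref{fig:liftingA} maintains: every type variable occurring in \cstrs{}, in \li!Oe!, or in \tctx{} is guarded by a compatibility-class constraint \incls{\cdot ~ \cdot}. Together with \cstrsat{\sctx; \lctx; \gctx; \subst |- \cstrs}, this forces $\subst$ to send each such variable into a genuine compatibility class, hence to an honest specification type; in particular $\subst(\li!X!)$ erases back to the source type \li!eta!, which is exactly what makes the source annotation \li!eta_1! in a lifted abstraction \li!?x:eta_1=>e! match the \li!&[theta_1]&! demanded by \textsc{L-Abs}, and what makes $\subst(\tctx)$ a well-formed declarative context.

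The congruence rules --- \textsc{A-Lit}, \textsc{A-Abs}, \textsc{A-App}, \textsc{A-Let} --- are routine: satisfaction of a comma-separated constraint list is plain conjunction, so $\subst$ satisfies each embedded sub-constraint set; the induction hypothesis supplies declarative derivations for the immediate subexpressions under $\subst$ of the (possibly extended) context; and the equality constraint a rule emits --- for instance \li!X! $=$ \li!X_1->X_2! in \textsc{A-Abs} and \textsc{A-App} --- pins down the shape of $\subst(\li!X!)$ so that the matching declarative rule (\textsc{L-Abs}, \textsc{L-App}, \textsc{L-Let}) applies, with the function variable in \textsc{A-App} lifted to itself by \textsc{L-Var}. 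The heart of the argument is a uniform lemma about the typed macros: whenever the algorithm emits a macro occurrence together with its attendant constraint and $\subst$ satisfies that constraint, the $\subst$-instantiated occurrence is \emph{resolvable} --- by the meaning of constraint satisfaction it elaborates by some rule of \Cref{fig:ppx} --- and each elaboration rule of \Cref{fig:ppx} is the image of a declarative rule of \Cref{fig:liftingR} with identical side conditions. Concretely: \li!Mite! elaborates to an \li!if! or to \li!@ite! exactly as \textsc{L-If$_1$}/\textsc{L-If$_2$} prescribe (the \li!@bool! case of both shares the premise \mergeable{\li!theta! |> \li!@ite!}); \li!MC_i! mirrors \textsc{L-Ctor$_1$}/\textsc{L-Ctor$_2$} (the intro-structure lookup is shared); \li!Mmatch! mirrors \textsc{L-Match$_1$}/\textsc{L-Match$_2$} (the elim-structure lookup and the compatibility conditions are shared); \li!Mx! mirrors \textsc{L-Fun} through \lctx{}; and \li!MCOER! mirrors \textsc{L-Coerce} through the shared premise \coercible{\li!theta! ~> \li!theta'!}, with the reflexivity rule for coercions covering the case $\subst(\li!X!) = \subst(\li!X'!)$ so that \textsc{A-Var} collapses to \textsc{L-Var} (or \textsc{L-Fun}) wrapped in a trivial coercion. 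Feeding the induction hypotheses into the premises of the corresponding declarative rule yields the required derivation, and since a macro's elaboration is determined by its type parameters, \li!Oe'! is unique. Composing this conclusion with \Cref{thm:liftingR-reg} and \Cref{thm:correct} then gives, as a corollary, that the generated target programs are well-typed and semantically correct.

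I expect the main obstacle to lie in the bookkeeping around type variables, freshness, and the administrative-normal-form discipline, rather than in any semantic subtlety. Three points need care. First, the fresh variables introduced by \textsc{A-Abs}, \textsc{A-App}, \textsc{A-Let}, \textsc{A-Ctor} and \textsc{A-Match} must be shown to lie in the domain of $\subst$ and not to escape their binders, so that $\subst$ commutes with context extension and with macro elaboration; the guardedness invariant carries most of this, but the binder cases still require the usual care. Second, one must check that satisfaction of the full constraint set genuinely entails satisfaction of each embedded sub-constraint set and of each individual macro constraint --- i.e.\ that the constraint list behaves as an unordered conjunction and that $\subst \vDash \cstrs$ is monotone in \cstrs{}. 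Third, and most delicate, one must argue that confining coercions to variable positions, as \textsc{A-Var} does, loses no declarative derivations: because the ANF source binds every function argument, conditional scrutinee and discriminee to a variable and ends every \li!let!-chain in a variable, any use of \textsc{L-Coerce} needed to reach the target type \li!theta! can be pushed to exactly such a variable occurrence, where \textsc{A-Var}'s \li!MCOER! macro provides it. Once these invariants are in place, each inductive case reduces to a direct match between an algorithmic rule, its associated macro-elaboration rule(s), and the corresponding declarative rule(s).
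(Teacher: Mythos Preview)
Your proposal is correct and follows essentially the same route as the paper: generalize to open algorithmic contexts, write the declarative context as $\subst(\tctx_2)$, induct on the algorithmic derivation, and in each macro case invert the elaboration rule of \Cref{fig:ppx} to land on the matching declarative rule (\textsc{L-Var}/\textsc{L-Coerce} for \textsc{A-Var}, \textsc{L-If$_1$}/\textsc{L-If$_2$} for \textsc{A-If}, etc.).

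Two small corrections. First, the extra hypothesis you add---that $\subst$ satisfies the compatibility guards for every entry of \tctx{}---is unnecessary; the paper remarks explicitly that compatibility of the context entries need not be assumed, since it is already entailed by the generated constraints when one starts from the empty context. Carrying it does no harm, but you can drop it. Second, your ``third obstacle'' (that confining coercions to variable positions in ANF loses no declarative derivations) is a \emph{completeness} concern, not a soundness one: for this theorem you only need that whatever the algorithm emits is declaratively derivable, not that every declarative derivation is reachable. That point plays no role here and can be omitted.
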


The proof of this theorem is available in the appendix.

\section{Implementation and Evaluation}
\label{sec:eval}

Our compilation pipeline takes as input a source program, including any OADTs,
\psistructs, and macros (e.g., \lstinline!Mlift!), in the public fragment of
\taypsi and privacy policies (i.e., security-type signatures) for all target
functions. After typing the source program using a bidirectional type checker,
our lifting pass generates secure versions of the specified functions and their
dependencies, using Z3~\cite{demoura2008} as its constraint solver. The
resulting \taypsi functions are translated into \oil~\cite{ye2023}, an ML-style
functional language equipped with oblivious arrays and secure array operations:
OADTs are converted to serialized versions which are stored in secure arrays,
and all oblivious operations are translated into secure array operations. After
applying some optimizations, our pipeline outputs an OCaml library providing
secure implementations of all the specified functions, including section and
retraction functions for encrypting private data and decrypting the results of a
joint computation. After linking this library to a \emph{driver} that provides
the necessary cryptographic primitives (i.e., secure integer arithmetic),
programmers can build secure MPC applications on top of this API.  The following
evaluation uses a driver implemented using the popular open-source EMP
toolkit~\cite{emp-toolkit}.

\paragraph{Optimizations} Our implementation of \taypsi implements three
optimizations which further improve the performance of the programs it
generates.\footnote{Our appendix describes each of these optimizations in more
detail.} The \emph{reshape guard} optimization instruments reshape instances to
first check if the public views of two private values are identical, omitting
the reshape operation if so. The \emph{memoization} optimization caches the
sizes of the private representation of data in order to avoid recalculating this
information, which is needed to create and slice oblivious arrays. The final,
\emph{smart array} optimization supports zero-cost array slicing and
concatenation, and eliminates redundant operations over the serialized
representation of oblivious data. One observation underlying this optimization
is that evaluating a \lstinline!mux! whose branches are encrypted versions of
publicly-known values is unnecessary: %
\lstinline!mux [b] (@bool#s true) (@bool#s false)!  is equivalent to
\lstinline![b]!, for example. This situation frequently occurs in map-like
functions, where the constructor used in each branch of a function is publicly
known. Under the hood, the serialized encoding of the result of \lstinline!map!
uses a boolean tag to indicate which constructor was used to build it, i.e.,
\lstinline!Nil! or \lstinline!Cons!; this boolean is determined by the tag of
the input list, e.g., \lstinline!mux [tag] [true] [false]!. Of course, the tag
used in each branch is publicly known: \lstinline!map! always returns
\lstinline!Nil! if the input list is empty, and returns a \lstinline!Cons!
otherwise. Thus, we can safely reuse the \lstinline![tag]! of the input list to
label the result of \lstinline!map!, for similar reasons as the previous
example. The smart array optimization exploits this observation by marking when
section functions are applied to public values instead of, for example,
immediately evaluating \lstinline!@bool#s true! to the encrypted value
\lstinline![true]!. Then, when performing a \lstinline!mux!, the smart array
first checks if both branches are ``fake'' private values, safely reducing the
\lstinline!mux! to its private condition if so, without actually performing any
cryptographic operations.

Our evaluation considers the following research questions:
\begin{description}
\item[RQ1] How does the performance of \taypsi's transformation-based
approach compare to the dynamic enforcement strategy of \taype?
\item[RQ2] What is the compilation overhead of \taypsi's translation strategy?
\end{description}

\subsection{Microbenchmark Performance}

\begin{figure}[t]
  \centering
  \footnotesize
  \begin{tabular}{llll}
Benchmark & \taype (ms) & \taype-SA (ms) & \taypsi (ms) \\
\midrule
\verb|elem_1000|$^\dag$ & 8.15 & 8.11 & 8.02 \hfill(98.47\%, 98.89\%) \\
\verb|hamming_1000|$^\dag$ & 15.09 & 15.21 & 14.46 \hfill(95.79\%, 95.04\%) \\
\verb|euclidean_1000|$^\dag$ & 67.43 & 67.55 & 67.32 \hfill(99.84\%, 99.66\%) \\
\verb|dot_prod_1000|$^\dag$ & 66.12 & 66.19 & 66.41 \hfill(100.43\%, 100.33\%) \\
\verb|nth_1000|$^\dag$ & 11.98 & 12.05 & 12.04 \hfill(100.54\%, 99.93\%) \\
\verb|map_1000| & 2139.55 & 5.07 & 5.14 \hfill(0.24\%, 101.44\%) \\
\verb|filter_200| & \textcolor{red}{\bf failed} & \textcolor{red}{\bf failed} & 86.86 \hfill(N/A, N/A) \\
\verb|insert_200| & 5796.69 & 88.92 & 88.07 \hfill(1.52\%, 99.04\%) \\
\verb|insert_list_100| & \textcolor{red}{\bf failed} & \textcolor{red}{\bf failed} & 4667.66 \hfill(N/A, N/A) \\
\verb|append_100| & 4274.7 & 45.09 & 44.18 \hfill(1.03\%, 97.99\%) \\
\verb|take_200| & 169.07 & 3.05 & 3.09 \hfill(1.83\%, 101.15\%) \\
\verb|flat_map_200| & \textcolor{red}{\bf failed} & \textcolor{red}{\bf failed} & 7.3 \hfill(N/A, N/A) \\
\verb|span_200| & 13529.34 & 124.79 & 91.22 \hfill(0.67\%, 73.09\%) \\
\verb|partition_200| & \textcolor{red}{\bf failed} & \textcolor{red}{\bf failed} & 176.49 \hfill(N/A, N/A) \\
\midrule
\midrule
\verb|elem_16|$^\dag$ & 446.81 & 459.1 & 404.9 \hfill(90.62\%, 88.19\%) \\
\verb|prob_16|$^\dag$ & 13082.52 & 12761.7 & 12735.16 \hfill(97.34\%, 99.79\%) \\
\verb|map_16| & 4414.69 & 262.14 & 215.67 \hfill(4.89\%, 82.27\%) \\
\verb|filter_16| & 8644.14 & 452.04 & 433.7 \hfill(5.02\%, 95.94\%) \\
\verb|swap_16| & \textcolor{red}{\bf failed} & \textcolor{red}{\bf failed} & 4251.36 \hfill(N/A, N/A) \\
\verb|path_16| & \textcolor{red}{\bf failed} & 6657.07 & 894.88 \hfill(N/A, 13.44\%) \\
\verb|insert_16| & 83135.81 & 8093.81 & 1438.87 \hfill(1.73\%, 17.78\%) \\
\verb|bind_8| & 21885.65 & 494.98 & 532.86 \hfill(2.43\%, 107.65\%) \\
\verb|collect_8| & \textcolor{red}{\bf failed} & \textcolor{red}{\bf failed} & 143.38 \hfill(N/A, N/A) \\
\end{tabular}

  \caption{Running times for each benchmark in milliseconds. The \taypsi column
    also reports the percentage of running time relative to \taype and
    \taype-SA. A \textcolor{red}{\bf failed} entry indicates the benchmark
    either timed out after $5$ minutes or exceeded the memory bound of
    \qty{8}{\giga\byte}. List and tree benchmarks appear above and below the
    double line, respectively.}
  \label{fig:bench}
\end{figure}

To answer \textbf{RQ1}, we have evaluated the performance of a set of
microbenchmarks compiled with both \taypsi and \taype. Both
approaches are equipped with optimizations that are unique to their
enforcement strategies: \taypsi's reshape guard optimization is not
applicable to \taype, and \taype features an \textit{early tape}
optimization that does not make sense for \taypsi.\footnote{\taype
  also implements a tupling optimization, but this is analogous to
  \taypsi's memoization optimization.} Our evaluation also includes
a version of \taype that implements \taypsi's smart array optimization
(\taype-SA), in order to provide a comparison of the two approaches at
their full potential.

Our benchmarks are a superset of the benchmarks from~\citet{ye2023}.
\Cref{fig:bench} presents the experimental results.\footnote{All
  results are averaged across $5$ runs, on an M1 MacBook Pro with
  \qty{16}{\giga\byte} memory. All parties run on the same host with local network
  communication.} These experiments fix the public views of private
lists and trees to be their maximum length and maximum depth,
respectively; the suffix of each benchmark name indicates the public
view used. The benchmarks annotated with $\dag$ simply traverse the
data type in order to produce a primitive value, e.g., an integer;
these include membership (\verb|elem|), hamming distance
(\verb|hamming|), minimum euclidean distance (\verb|euclidean|), dot
product (\verb|dot_prod|), secure index look up (\verb|nth|) and
computing the probability of an event given a probability tree diagram
(\verb|prob|). The programs generated by \taype, \taype-SA and \taypsi
all exhibit similar performance on these benchmarks.
The remaining benchmarks all construct structured data values,
i.e., the sort of application on which \taypsi is expected to
shine. In addition to standard list operations, the list benchmarks
include insertion into a sorted list (\verb|insert|) and insertion of
a list of elements into a sorted list (\verb|insert_list|) (both lists
have public view $100$). The tree examples include a filter function
that removes all nodes (including any subtrees) greater than a given
private integer (\verb|filter|), swapping subtrees if the node matches
a private integer (\verb|swap|), computing a subtree reached following
a list of ``going left'' and ``going right'' directions (\verb|path|),
insertion into a binary search tree (\verb|insert|), replacing the
leaves of a tree with a given tree (\verb|bind|), and collecting all
nodes smaller than a private integer into a list (\verb|collect|).

Dynamic policy enforcement either fails to finish within $5$ minutes or exceeds
an \qty{8}{\giga\byte} memory bound on almost half of the last set of
benchmarks, due to the exponential blowup discussed in \Cref{sec:overview}. For
those benchmarks that do finish, \taypsi's enforcement strategy results in a
fraction of the total execution time compared to \taype. Compared to the version
of \taype using smart arrays, \taypsi still performs comparably or better,
although the gap is somewhat narrowed: functions like \lstinline!map! do not
suffer from exponential blowup, so these benchmarks benefit mostly from the
smart array optimization. In summary, these results demonstrate that a static
enforcement strategy performs considerably better than a dynamic one on many
benchmarks, and works roughly as well on the remainder.

\subsection{Impact of Optimization}

\begin{figure}[t]
  \centering
  \footnotesize
  \begin{tabular}{llll}
Benchmark & No Smart Array (ms) & No Reshape Guard (ms) & No Memoization (ms) \\
\midrule
\verb|elem_1000| & 18.37 \hfill(2.29x) & 8.06 \hfill(1.0x) & 17.76 \hfill(2.21x) \\
\verb|hamming_1000| & 51.73 \hfill(3.58x) & 14.53 \hfill(1.01x) & 35.5 \hfill(2.46x) \\
\verb|euclidean_1000| & 79.07 \hfill(1.17x) & 67.31 \hfill(1.0x) & 76.36 \hfill(1.13x) \\
\verb|dot_prod_1000| & 87.77 \hfill(1.32x) & 66.15 \hfill(1.0x) & 77.33 \hfill(1.16x) \\
\verb|nth_1000| & 22.69 \hfill(1.88x) & 12.18 \hfill(1.01x) & 20.53 \hfill(1.7x) \\
\verb|map_1000| & 2106.43 \hfill(409.89x) & 139.91 \hfill(27.23x) & 37.71 \hfill(7.34x) \\
\verb|filter_200| & 5757.28 \hfill(66.29x) & 93.93 \hfill(1.08x) & 114.7 \hfill(1.32x) \\
\verb|insert_200| & 255.43 \hfill(2.9x) & 94.61 \hfill(1.07x) & 89.32 \hfill(1.01x) \\
\verb|insert_list_100| & 22806.87 \hfill(4.89x) & 5186.07 \hfill(1.11x) & 4771.28 \hfill(1.02x) \\
\verb|append_100| & 4226.32 \hfill(95.66x) & 50.79 \hfill(1.15x) & 61.77 \hfill(1.4x) \\
\verb|take_200| & 169.45 \hfill(54.91x) & 12.92 \hfill(4.19x) & 4.68 \hfill(1.52x) \\
\verb|flat_map_200| & 5762.63 \hfill(789.08x) & 16.99 \hfill(2.33x) & 60.03 \hfill(8.22x) \\
\verb|span_200| & 5924.1 \hfill(64.95x) & 99.83 \hfill(1.09x) & 120.09 \hfill(1.32x) \\
\verb|partition_200| & 11528.0 \hfill(65.32x) & 185.16 \hfill(1.05x) & 231.06 \hfill(1.31x) \\
\midrule
\midrule
\verb|elem_16| & 433.73 \hfill(1.07x) & 404.05 \hfill(1.0x) & 402.15 \hfill(0.99x) \\
\verb|prob_16| & 13019.56 \hfill(1.02x) & 12746.24 \hfill(1.0x) & 12731.89 \hfill(1.0x) \\
\verb|map_16| & 4410.84 \hfill(20.45x) & 635.18 \hfill(2.95x) & 213.96 \hfill(0.99x) \\
\verb|filter_16| & 8674.71 \hfill(20.0x) & 1131.02 \hfill(2.61x) & 440.16 \hfill(1.01x) \\
\verb|swap_16| & 8671.52 \hfill(2.04x) & 5471.4 \hfill(1.29x) & 4246.39 \hfill(1.0x) \\
\verb|path_16| & 9108.54 \hfill(10.18x) & 1083.21 \hfill(1.21x) & 888.95 \hfill(0.99x) \\
\verb|insert_16| & 19101.36 \hfill(13.28x) & 2151.83 \hfill(1.5x) & 1432.92 \hfill(1.0x) \\
\verb|bind_8| & 19647.83 \hfill(36.87x) & 870.93 \hfill(1.63x) & 534.3 \hfill(1.0x) \\
\verb|collect_8| & 11830.6 \hfill(82.51x) & 152.29 \hfill(1.06x) & 186.92 \hfill(1.3x) \\
\end{tabular}

  \caption{Impact of turning off the smart array (No Smart Array), reshape guard
    (No Reshape Guard), and public view memoization (No Memoization)
    optimizations. Each column presents running time in milliseconds and the
    slowdown relative to that of the fully optimized version reported in
    \Cref{fig:bench}.}
  \label{fig:ablation}
\end{figure}

To evaluate the performance impact of \taypsi's three optimizations, we
conducted an ablation study on their effect. The results, shown in
\Cref{fig:ablation}, indicate that our smart array optimization is the most
important, providing up to almost $800$x speedup in the best case. As suggested
by \Cref{fig:bench}, this optimization also helps significantly with the
performance of \taype, although not enough to outweigh the exponential blowup
innate in its dynamic approach. The other optimizations also improve
performance, albeit not as significantly. As our memoization pass caches public
views of arbitrary type, we have also conducted an ablation study for these list
and tree examples using ADT public views instead, e.g., using Peano number to
encode the maximum length of a list. In this study, we observe up to $9$ times
speed up in list examples, with minimal regression in tree examples. The full
results of this study are included in the appendix.

\subsection{Compilation Overhead}

To measure the overhead of \taypsi's use of an external solver to
resolve constraints, we have profiled the compilation of a set of
larger programs drawn from \taype's benchmark suite, plus an
additional secure dating application.\footnote{The full details of the
  additional case study can be found in the appendix.}
The first two benchmark suites (List and Tree) in
\Cref{fig:bench-solver} include all the microbenchmarks from previous
section. The next benchmark, List (stress), consists of the same
microbenchmarks as List with $5$ additional list OADTs. The purpose of
this synthetic suite is to examine the impact of the number of OADTs
on the search space. The remaining benchmarks represent larger, more
realistic applications which demonstrate the expressivity and
usability of \taypsi.

\begin{wrapfigure}{r}{.56\textwidth}
  \vspace{-.3cm}
  \centering
  \footnotesize
  \begin{tabular}{l|rrr|rrr}
Suite & \#Fn & \#Ty & \#At & \#Qu & Tot (s) & Slv (s) \\
\midrule
List & 20 & 7 & 70 & 84 & 0.47 & 0.081 \\
Tree & 14 & 9 & 44 & 31 & 0.47 & 0.024 \\
List (stress) & 20 & 12 & 70 & 295 & 3.45 & 2.8 \\
Dating & 4 & 13 & 16 & 10 & 0.58 & 0.019 \\
Medical Records & 20 & 19 & 58 & 51 & 0.48 & 0.072 \\
Secure Calculator & 2 & 9 & 6 & 5 & 1.34 & 0.013 \\
Decision Tree & 2 & 13 & 6 & 16 & 0.28 & 0.016 \\
K-means & 16 & 11 & 68 & 86 & 1.62 & 0.95 \\
Miscellaneous & 11 & 7 & 42 & 47 & 0.26 & 0.065 \\
\end{tabular}

  \caption{Impact of constraint solving on compilation.}
  \label{fig:bench-solver}
\end{wrapfigure}
The last three columns of \Cref{fig:bench-solver} report the results
of these experiments: total compilation time (Tot), time spent on
constraint solving (Slv) and the number of solver queries (\#Qu). The
group of columns in the middle of the table describes features that
can impact the performance of our constraint-based approach: the
number of functions (\#Fn) being translated, the number of atomic
types (\#Ty), and the total number of atomic types used in function
types (\#At).  For example, the \textsf{List} benchmark features $7$
atomic types: public and oblivious booleans, integers and lists, as
well as an unsigned integer type (i.e. natural numbers). The number of
atomic types in the function
\lstinline!filter : list -> int -> list! is $3$. In the worst case
scenario, our constraint solving algorithm will explore every
combination of types that are compatible with this signature,
resulting in the constraints associated with \lstinline!filter! being
solved $2*2*2 = 8$ times. Exactly how many compatible types the
constraint solving algorithm explores depends on many factors: the
user-specified policies, the complexity of the functions, the calls to
other functions and so on. We chose these $3$ metrics as a coarse
approximation of the solution space. Our results show that the solver
overhead is quite minimal for most benchmarks, and in general solving
time per query is low thanks to our encoding of constraints in an
efficiently decidable logic.

\section{Related Work}

The problem of secure computation was first formally introduced by
\citet{yao1982}, who simultaneously proposed garbled circuits as a
solution.  Subsequently, a number of other solutions have been
proposed~\cite{evans2018,hazay2010}. Solutions categorized as
multiparty computation are usually based on cryptographic protocols,
e.g.,
secret-sharing~\cite{beimel2011,goldreich1987,maurer2006}. Outsourced
computation is another type of secure computation that includes both
cryptographic solutions, e.g., fully homomorphic
encryption~\cite{gentry2009,acar2018}, and solutions based on
virtualization~\cite{barthe2014c,barthe2019a} or secure
processors~\cite{hoekstra2015}.

\taypsi features a \emph{security-type system}~\cite{sabelfeld2003,
zdancewic2002} based on the type system of \loadt. While most security-type
systems tag types with labels classifying the sensitivity of data, our dependent
type system tags kinds instead. The obliviousness guarantee provided by
\Cref{thm:obliviousness} is a form of \emph{noninterference}~\cite{goguen1982}
that generalizes \emph{memory trace obliviousness} (MTO)~\cite{liu2013}. MTO
considers traces of memory accesses, while the traces in
\Cref{thm:obliviousness} include \emph{every} intermediate program state under a
small-step operational semantics. This reflects MPC's stronger threat model, in
which all parties can observe the complete execution of a program, including
each instruction executed. As a consequence, our type system also protects
against timing channels, similar to other \emph{constant-time}
languages~\cite{cauligi2019}.

Numerous high-level programming languages for writing secure
multiparty computation applications have been
proposed~\cite{hastings2019}. Most prior languages either do not
support structured data, or require all structural information to be
public, e.g., Obliv-C~\cite{zahur2015} and ObliVM~\cite{liu2015}. To
the best of our knowledge \taype is the only existing language for MPC applications that natively supports decoupling
privacy policies from program logic. On the other hand, there are many
aspects of MPC tackled by prior languages that we do not consider
here. Wysteria and Wys$^\ast$~\cite{rastogi2014, rastogi2019}, for
example, focus on \emph{mixed-mode computation} which allows certain
computation to be executed locally. Symphony~\cite{sweet2023}, a
successor of Wysteria, supports first-class shares and first-class
party sets for coordinating many parties, enabling more reactive
applications.  \citet{darais2020} developed $\lambda_{\text{obliv}}$,
a probabilistic functional language for oblivious computations that
can be used to safely implement a variety of cryptography algorithms,
including Oblivious RAM.

Several prior works have considered how to compile secure programs into more
efficient secure versions. Viaduct~\cite{acay2021,acay2024} is a compiler that
transforms high-level programs into secure distributed versions by intelligently
selecting an efficient combination of protocols for subcomputations. The HyCC
toolchain~\cite{buscher2018} similarly transforms a C program into a version
that combines different MPC protocols to optimize performance. The HACCLE
toolchain~\cite{bao2021} uses staging to generate efficient garbled circuits
from a high-level language. Compiler techniques, e.g., vectorization, have been
studied for optimizing fully homomorphic encryption (FHE)
applications~\cite{dathathri2020,cowan2021,malik2021,malik2023,viand2023}.

Jeeves~\cite{yang2012} and \taypsi have a shared goal of decoupling
security policies from program logic. While they both employ a similar
high-level strategy of relying on the language to automatically
enforce policies, their different settings result in very different
solutions. In Jeeves' programming model, each piece of data is
equipped with a pair of high- and low-level views: a username, for
example, may have a high confidentiality view of ``Alice'', but a low
view of ``Anonymous''. The language then uses the view stipulated by
the privacy policy and current execution context, ensuring that
information is only visible to observers with the proper authority.
In the MPC setting, however, no party is allowed to observe the
private data of other parties. Thus, no party can view all the data
necessary for the computation, making it impossible to compute a
correct result by simply replacing data with some predetermined value,
like ``Anonymous''.

\section{Conclusion}
Secure multiparty computation allows joint computation over private
data from multiple parties, while keeping that data secure. Previous
work has considered how to make languages for MPC more accessible by
allowing privacy requirements to be decoupled from functionality,
relying on dynamic enforcement of polices. Unfortunately, the
resulting overhead of this strategy made it difficult to scale
applications manipulating structured data. This work presents \taypsi,
a policy-agnostic language for oblivious computation that transforms
programs to instead \emph{statically} enforce a user-provided privacy
policy. The resulting programs are guaranteed to be both well-typed,
and hence secure, and equivalent to the source program. Our
experimental results show this strategy yields considerable
performance improvements over prior approaches, while maintaining a
clean separation between privacy and programmatic concerns.

\bibliography{ref}

\appendix
\section{Syntax}

\Cref{ap:fig:syntax} defines the syntax of \loadtpsi.

By convention, We use \lstinline!x!  for variable names, \lstinline!C! for
constructor names, \lstinline!T! for type names, and \lstinline!@T! for
oblivious type names. We use a boolean subscript to indicate if it is the left
injection (projection) or right injection (projection). We may write the more
conventional \lstinline|@inl| (\lstinline|pi_1|) and \lstinline|@inr|
(\lstinline|pi_2|), instead of \lstinline|@inj_true| (\lstinline|pi_true|) and
\lstinline|@inj_false| (\lstinline|pi_false|).

\paragraph{Remark} Our Coq formalization distinguishes between public product
type and oblivious product type. Here we reuse a single product type for both,
to simplify the syntax for better presentation. A product type (pair) is
considered oblivious product (oblivious pair) if both components are oblivious.

\begin{figure}[h]
\raggedright

\textsc{Expressions}

\begin{tabular}{RCLl}
\lstinline|e, tau| &\production&& \\
&\mid& \lstinline!unit! \mid \lstinline!bool! \mid \lstinline!@bool! \mid%
  \lstinline!tau*tau! \mid \lstinline!tau@+tau! & simple types \\
&\mid& \lstinline!Pix:tau,tau! & dependent function type \\
&\mid& \lstinline!&@T! & \tpsi-type \\
&\mid& \lstinline!x! \mid \lstinline!T! & variable and type names \\
&\mid& \lstinline!()! \mid \lstinline!b! & unit and boolean constants \\
&\mid& \lstinline!?x:tau=>e! & function abstraction \\
&\mid& \lstinline!let x = e in e! & let binding \\
&\mid& \lstinline!e e! \mid \lstinline|C e| \mid \lstinline|@T e|%
  & function, constructor and type application \\
&\mid& \lstinline!if e then e else e! & conditional \\
&\mid& \lstinline!mux e e e! & oblivious conditional \\
&\mid& \lstinline!(e,e)! & pair \\
&\mid& \lstinline!&(e,e)&! & dependent pair (\tpsi-pair) \\
&\mid& \lstinline!pi_b e! & product and \tpsi-type projection \\
&\mid& \lstinline!@inj_b<tau> e! & oblivious sum injection \\
&\mid& \lstinline!@match e with x=>e|x=>e! & oblivious sum elimination \\
&\mid& \lstinline!match e with !\overline{\lstinline!C x=>e!} & ADT elimination \\
&\mid& \lstinline!@bool#s e! & boolean section \\
&\mid& \lstinline![b]! \mid \lstinline![inj_b<@w> @v]! & runtime boxed values \\
\end{tabular}

\textsc{Global Definitions}

\begin{tabular}{RCLl}
\lstinline|D| &\production& & \\
&\mid& \lstinline!data T = !\overline{\lstinline!C tau!} &%
  algebraic data type definition \\
&\mid& \lstinline!fn x:tau = e! & (recursive) function definition \\
&\mid& \lstinline!obliv @T (x:tau) = tau! & (recursive) oblivious type definition \\
\end{tabular}

\textsc{Oblivious Type Values}

\begin{tabular}{RCLl}
\lstinline|@w| &\production& \lstinline!unit! \mid \lstinline!@bool! \mid%
  \lstinline!@w*@w! \mid \lstinline!@w@+@w! & \\
\end{tabular}

\textsc{Oblivious Values}

\begin{tabular}{RCLl}
\lstinline|@v| &\production& \lstinline!()! \mid \lstinline![b]! \mid%
  \lstinline!(@v,@v)! \mid \lstinline![inj_b<@w> @v]! & \\
\end{tabular}

\textsc{Values}

\begin{tabular}{RCLl}
\lstinline|v| &\production& \lstinline|@v| \mid \lstinline!b! \mid%
  \lstinline!(v,v)! \mid \lstinline!&(v,v)&! \mid%
  \lstinline!?x:tau=>e! \mid \lstinline!C v! \\
\end{tabular}

\caption{\loadtpsi syntax}
\label{ap:fig:syntax}
\end{figure}

\FloatBarrier

\section{Semantics}

\Cref{ap:fig:ectx} and \Cref{ap:fig:ovalty} present the auxiliary definitions,
evaluation context and oblivious value synthesis. \Cref{ap:fig:semantics}
defines the small-step operational semantics of \loadtpsi. The judgment
\step{\gctx |- \li!e! ~> \li!e'!} means \lstinline|e| steps to \lstinline|e'|
under the global context \gctx. We often abbreviate this judgment as
\step{\li!e! ~> \li!e'!}, as the global context is fixed.

\begin{figure}[h]
\raggedright

\textsc{Evaluation Context}

\begin{tabular}{RCLl}
\lstinline|ectx| &\production&& \\
&\mid& \lstinline![]*tau! \mid \lstinline!@w*[]! \mid%
  \lstinline![]@+tau! \mid \lstinline!@w@+[]! & \\
&\mid& \lstinline!let x = [] in e! \mid%
  \lstinline!e []! \mid \lstinline![] v! \mid%
  \lstinline!C []! \mid \lstinline!@T []! & \\
&\mid& \lstinline!if [] then e else e! \mid%
  \lstinline!mux [] e e! \mid%
  \lstinline!mux v [] e! \mid \lstinline!mux v v []! & \\
&\mid& \lstinline!([],e)! \mid \lstinline!(v,[])! \mid%
  \lstinline!&([],e)&! \mid \lstinline!&(v,[])&! \mid \lstinline!pi_b []! & \\
&\mid& \lstinline!@inj_b<[]> e! \mid \lstinline!@inj_b<@w> []! \mid%
  \lstinline!@match [] with x=>e|x=>e! & \\
&\mid& \lstinline!match [] with !\overline{\lstinline!C x => e!} \mid%
  \lstinline!@bool#s []! & \\
\end{tabular}

  \caption{\loadtpsi evaluation context}
  \label{ap:fig:ectx}
\end{figure}

\begin{figure}[h]
\footnotesize
\jbox{\ovalty{\li!@v! <~ \li!@w!}}
\begin{mathpar}
  \inferrule[OT-Unit]{
  }{
    \ovalty{\li!()! <~ \li!unit!}
  }

  \inferrule[OT-OBool]{
  }{
    \ovalty{\li![b]! <~ \li!@bool!}
  }

  \inferrule[OT-Prod]{
    \ovalty{\li!@v_1! <~ \li!@w_1!} \\
    \ovalty{\li!@v_2! <~ \li!@w_2!}
  }{
    \ovalty{\li!(@v_1,@v_2)! <~ \li!@w_1*@w_2!}
  }

  \inferrule[OT-OSum]{
    \ovalty{\li!@v! <~ \li!ite(b,@w_1,@w_2)!}
  }{
    \ovalty{\li![inj_b<@w_1@+@w_2> @v]! <~ \li!@w_1@+@w_2!}
  }
\end{mathpar}
  \caption{\loadtpsi oblivious value synthesis}
  \label{ap:fig:ovalty}
\end{figure}

\begin{figure}[h]
\footnotesize
\jbox{\step{\li!e! ~> \li!e'!}}
\begin{mathpar}
  \inferrule[S-Ctx]{
    \step{\li!e! ~> \li!e'!}
  }{
    \step{\li!ectx[e]! ~> \li!ectx[e']!}
  }

  \inferrule[S-Fun]{
    \li!fn x:tau = e! \in \gctx
  }{
    \step{\li!x! ~> \li!e!}
  }

  \inferrule[S-OADT]{
    \li!obliv @T (x:tau) = tau'! \in \gctx
  }{
    \step{\li!@T v! ~> \li![v/x]tau'!}
  }

  \inferrule[S-App]{
  }{
    \step{\li!(?x:tau=>e) v! ~> \li![v/x]e!}
  }

  \inferrule[S-Let]{
  }{
    \step{\li!let x = v in e! ~> \li![v/x]e!}
  }

  \inferrule[S-If]{
  }{
    \step{\li!if b then e_1 else e_2! ~> \li!ite(b,e_1,e_2)!}
  }

  \inferrule[S-Mux]{
  }{
    \step{\li!mux [b] v_1 v_2! ~> \li!ite(b,v_1,v_2)!}
  }

  \inferrule[S-Match]{
  }{
    \step{\li!match C_i v with\ !\overline{\li!C x=>e!} ~> \li![v/x]e_i!}
  }

  \inferrule[S-Proj]{
  }{
    \step{\li!pi_b (v_1,v_2)! ~> \li!ite(b,v_1,v_2)!}
  }

  \inferrule[S-PsiProj]{
  }{
    \step{\li!pi_b &(v_1,v_2)&! ~> \li!ite(b,v_1,v_2)!}
  }

  \inferrule[S-Sec]{
  }{
    \step{\li!@bool\#s b! ~> \li![b]!}
  }

  \inferrule[S-OInj]{
  }{
    \step{\li!@inj_b<@w> @v! ~> \li![inj_b<@w> @v]!}
  }

  \inferrule[S-OMatch]{
    \ovalty{\li!@v_1! <~ \li!@w_1!} \\
    \ovalty{\li!@v_2! <~ \li!@w_2!}
  }{
    \step{\li!@match [inj_b<@w_1@+@w_2> @v] with x=>e_1|x=>e_2! ~>%
      \begin{tabular}{@{}l}
        \li!mux [b] ite(b,[@v/x]e_1,[@v_1/x]e_1)! \\
        \li!\ \ \ \ \ \ \ \ ite(b,[@v_2/x]e_2,[@v/x]e_2)!
      \end{tabular}
    }
  }
\end{mathpar}
  \caption{\loadtpsi operational semantics}
  \label{ap:fig:semantics}
\end{figure}

\FloatBarrier

\section{Typing}

\Cref{ap:fig:kind} defines \loadtpsi kinds and the semi-lattice that they form.
The type system of \loadtpsi consists of a pair of kinding and typing judgments,
\kind{\gctx; \tctx |- \li!tau! :: \li!kappa!} and \type{\gctx; \tctx |- \li!e!
:: \li!tau!}, respectively. \Cref{ap:fig:kinding} and \Cref{ap:fig:typing}
define these rules. We elide \gctx{} from the rules, as they both assume a fixed
global context. Some kinding side conditions in the typing rules are elided for
brevity; the complete rules can be found in the Coq development.

The rule \textsc{T-Conv} in \Cref{ap:fig:typing} relies on a type equivalence
relation \typequiv{\gctx |- \li!tau! == \li!tau'!}(or simply
\typequiv{\li!tau! == \li!tau'!}). This relation is defined in terms of parallel
reduction \pared{\gctx |- \li!tau! ~> \li!tau'!} which is elided as its
definition is mostly standard. Interested readers may consult our Coq
formalization.

\Cref{ap:fig:gctx-typing} gives the rules for typing global definitions.

\begin{figure}[h]
\raggedright

\begin{minipage}{.5\textwidth}
\textsc{Kinds}

\begin{tabular}{rCLl}
\lstinline|kappa| &\production&& \\
&\mid& \lstinline!*@A! & Any \\
&\mid& \lstinline!*@P! & Public \\
&\mid& \lstinline!*@O! & Oblivious \\
&\mid& \lstinline!*@M! & Mixed \\
\end{tabular}
\end{minipage}%
\begin{minipage}{.5\textwidth}
\begin{tikzpicture}[->,-stealth]
  \node (M) {M};
  \node (P) [below left of=M] {P};
  \node (O) [below right of=M] {O};
  \node (A) [below right of=P] {A};

  \draw (A) edge (P)
        (A) edge (O)
        (P) edge (M)
        (O) edge (M);
\end{tikzpicture}
\end{minipage}
  \caption{\loadtpsi kinds}
  \label{ap:fig:kind}
\end{figure}

\begin{figure}[h]
\footnotesize
\jbox{\kind{\li!tau! :: \li!kappa!}}
\begin{mathpar}
  \inferrule[K-Sub]{
    \kind{\li!tau! :: \li!kappa!} \\
    \li!kappa! \sqsubseteq \li!kappa'!
  }{
    \kind{\li!tau! :: \li!kappa'!}
  }

  \inferrule[K-Unit]{
  }{
    \kind{\li!unit! :: \li!*@A!}
  }

  \inferrule[K-Bool]{
  }{
    \kind{\li!bool! :: \li!*@P!}
  }

  \inferrule[K-OBool]{
  }{
    \kind{\li!@bool! :: \li!*@O!}
  }

  \inferrule[K-ADT]{
    \li!data T =\ !\overline{\li!C tau!} \in \gctx
  }{
    \kind{\li!T! :: \li!*@P!}
  }

  \inferrule[K-OADT]{
    \li!obliv @T (x:tau) = tau'! \in \gctx \\
    \type{\li!e! :: \li!tau!}
  }{
    \kind{\li!@T e! :: \li!*@O!}
  }

  \inferrule[K-Pi]{
    \kind{\li!tau_1! :: \li!*@*!} \\
    \kind{\extctx{\li!x! :: \li!tau_1!} |- \li!tau_2! :: \li!*@*!}
  }{
    \kind{\li!Pix:tau_1,tau_2! :: \li!*@M!}
  }

  \inferrule[K-Prod]{
    \kind{\li!tau_1! :: \li!kappa!} \\
    \kind{\li!tau_2! :: \li!kappa!}
  }{
    \kind{\li!tau_1*tau_2! :: \li!kappa!}
  }

  \inferrule[K-OSum]{
    \kind{\li!tau_1! :: \li!*@O!} \\
    \kind{\li!tau_2! :: \li!*@O!}
  }{
    \kind{\li!tau_1@+tau_2! :: \li!*@O!}
  }

  \inferrule[K-Psi]{
    \li!obliv @T (x:tau) = tau'! \in \gctx
  }{
    \kind{\li!&@T! :: \li!*@M!}
  }

  \inferrule[K-Let]{
    \type{\li!e! :: \li!tau!} \\
    \kind{\extctx{\li!x! :: \li!tau!} |- \li!tau'! :: \li!*@O!}
  }{
    \kind{\li!let x = e in tau'! :: \li!*@O!}
  }

  \inferrule[K-If]{
    \type{\li!e_0! :: \li!bool!} \\\\
    \kind{\li!tau_1! :: \li!*@O!} \\
    \kind{\li!tau_2! :: \li!*@O!}
  }{
    \kind{\li!if e_0 then tau_1 else tau_2! :: \li!*@O!}
  }

  \inferrule[K-Match]{
    \li!data T =\ !\overline{\li!C tau!} \in \gctx \\
    \type{\li!e_0! :: \li!T!} \\\\
    \forall i.\; \kind{\extctx{\li!x! :: \li!tau_i!} |- \li!tau_i'! :: \li!*@O!}
  }{
    \kind{\li!match e_0 with\ !\overline{\li!C x=>tau'!} :: \li!*@O!}
  }
\end{mathpar}
  \caption{\loadtpsi kinding rules}
  \label{ap:fig:kinding}
\end{figure}

\begin{figure}[h]
\footnotesize
\jbox{\type{\li!e! :: \li!tau!}}
\begin{mathpar}
  \inferrule[T-Conv]{
    \type{\li!e! :: \li!tau!} \\
    \typequiv{\li!tau! == \li!tau'!} \\
    \kind{\li!tau'! :: \li!*@*!}
  }{
    \type{\li!e! :: \li!tau'!}
  }

  \inferrule[T-Unit]{
  }{
    \type{\li!()! :: \li!unit!}
  }

  \inferrule[T-Lit]{
  }{
    \type{\li!b! :: \li!bool!}
  }

  \inferrule[T-Var]{
    \hastype{\li!x! :: \li!tau!} \in \tctx
  }{
    \type{\li!x! :: \li!tau!}
  }

  \inferrule[T-Fun]{
    \li!fn x:tau = e! \in \gctx
  }{
    \type{\li!x! :: \li!tau!}
  }

  \inferrule[T-Abs]{
    \type{\extctx{\li!x! :: \li!tau_1!} |- \li!e! :: \li!tau_2!} \\
    \kind{\li!tau_1! :: \li!*@*!}
  }{
    \type{\li!?x:tau_1=>e! :: \li!Pix:tau_1,tau_2!}
  }

  \inferrule[T-App]{
    \type{\li!e_2! :: \li!Pix:tau_1,tau_2!} \\
    \type{\li!e_1! :: \li!tau_1!}
  }{
    \type{\li!e_2 e_1! :: \li![e_1/x]tau_2!}
  }

  \inferrule[T-Let]{
    \type{\li!e_1! :: \li!tau_1!} \\
    \type{\extctx{\li!x! :: \li!tau_1!} |- \li!e_2! :: \li!tau_2!}
  }{
    \type{\li!let x = e_1 in e_2! :: \li![e_1/x]tau_2!}
  }

  \inferrule[T-Pair]{
    \type{\li!e_1! :: \li!tau_1!} \\
    \type{\li!e_2! :: \li!tau_2!}
  }{
    \type{\li!(e_1,e_2)! :: \li!tau_1*tau_2!}
  }

  \inferrule[T-Proj]{
    \type{\li!e! :: \li!tau_1*tau_2!}
  }{
    \type{\li!pi_b e! :: \li!ite(b,tau_1,tau_2)!}
  }

  \inferrule[T-If]{
    \type{\li!e_0! :: \li!bool!} \\\\
    \type{\li!e_1! :: \li![true/y]tau!} \\
    \type{\li!e_2! :: \li![false/y]tau!}
  }{
    \type{\li!if e_0 then e_1 else e_2! :: \li![e_0/y]tau!}
  }

  \inferrule[T-Ctor]{
    \li!data T =\ !\overline{\li!C tau!} \in \gctx \\\\
    \type{\li!e! :: \li!tau_i!}
  }{
    \type{\li!C_i e! :: \li!T!}
  }

  \inferrule[T-Match]{
    \li!data T =\ !\overline{\li!C tau!} \in \gctx \\
    \type{\li!e_0! :: \li!T!} \\\\
    \forall i.\;
      \type{\extctx{\li!x! :: \li!tau_i!} |- \li!e_i! :: \li![C_i x/y]tau'!}
  }{
    \type{\li!match e_0 with\ !\overline{\li!C x=>e!} :: \li![e_0/y]tau'!}
  }

  \inferrule[T-PsiPair]{
    \li!obliv @T (x:tau) = tau'! \in \gctx \\\\
    \type{\li!e_1! :: \li!tau!} \\
    \type{\li!e_2! :: \li!@T e_1!}
  }{
    \type{\li!&(e_1,e_2)&! :: \li!&@T!}
  }

  \inferrule[T-PsiProj$_{1}$]{
    \li!obliv @T (x:tau) = tau'! \in \gctx \\\\
    \type{\li!e! :: \li!&@T!}
  }{
    \type{\li!pi_1 e! :: \li!tau!}
  }

  \inferrule[T-PsiProj$_{2}$]{
    \li!obliv @T (x:tau) = tau'! \in \gctx \\\\
    \type{\li!e! :: \li!&@T!}
  }{
    \type{\li!pi_2 e! :: \li!@T (pi_1 e)!}
  }

  \inferrule[T-Mux]{
    \type{\li!e_0! :: \li!@bool!} \\
    \kind{\li!tau! :: \li!*@O!} \\\\
    \type{\li!e_1! :: \li!tau!} \\
    \type{\li!e_2! :: \li!tau!}
  }{
    \type{\li!mux e_0 e_1 e_2! :: \li!tau!}
  }

  \inferrule[T-OInj]{
    \type{\li!e! :: \li!ite(b,tau_1,tau_2)!} \\
    \type{\li!tau_1@+tau_2! :: \li!*@O!}
  }{
    \type{\li!@inj_b<tau_1@+tau_2> e! :: \li!tau_1@+tau_2!}
  }

  \inferrule[T-OMatch]{
    \type{\li!e_0! :: \li!tau_1@+tau_2!} \\
    \kind{\li!tau! :: \li!*@O!} \\\\
    \type{\extctx{\li!x! :: \li!tau_1!} |- \li!e_1! :: \li!tau!} \\
    \type{\extctx{\li!x! :: \li!tau_2!} |- \li!e_2! :: \li!tau!}
  }{
    \type{\li!@match e_0 with x=>e_1|x=>e_2! :: \li!tau!}
  }

  \\

  \inferrule[T-Sec]{
    \type{\li!e! :: \li!bool!}
  }{
    \type{\li!@int\#s e! :: \li!@bool!}
  }

  \inferrule[T-BoxedLit]{
  }{
    \type{\li![b]! :: \li!@bool!}
  }

  \inferrule[T-BoxedInj]{
    \ovalty{\li![inj_b<@w> @v]! <~ \li!@w!}
  }{
    \type{\li![inj_b<@w> @v]! :: \li!@w!}
  }
\end{mathpar}
  \caption{\loadtpsi typing rules}
  \label{ap:fig:typing}
\end{figure}

\begin{figure}[h]
\footnotesize
\jbox{\dtype{\li!D!}}
\begin{mathpar}
  \inferrule[DT-Fun]{
    \kind{\empctx |- \li!tau! :: \li!*@*!} \\
    \type{\empctx |- \li!e! :: \li!tau!}
  }{
    \dtype{\li!fn x:tau = e!}
  }

  \inferrule[DT-ADT]{
    \forall i.\; \kind{\empctx |- \li!tau_i! :: \li!*@P!}
  }{
    \dtype{\li!data T =\ !\overline{\li!C tau!}}
  }

  \inferrule[DT-OADT]{
    \kind{\empctx |- \li!tau! :: \li!*@P!} \\
    \kind{\hastype{\li!x! :: \li!tau!} |- \li!tau'! :: \li!*@O!}
  }{
    \dtype{\li!obliv @T (x:tau) = tau'!}
  }
\end{mathpar}
  \caption{\loadtpsi global definitions typing}
  \label{ap:fig:gctx-typing}
\end{figure}

\FloatBarrier

\section{\psistructs and Logical Refinement}

\subsection{OADT structure}

Every OADT (defined with keyword \lstinline|obliv|) \lstinline|@T| must be
equipped with an \emph{OADT-structure}, defined in \Cref{ap:def:oadt-struct}. We
denote the context of OADT-structures by \sctxoadt.

\begin{definition}[OADT-structure]
  \label{ap:def:oadt-struct}

  An OADT-structure of an OADT \lstinline|@T|, with public view
  type \lstinline|tau|, consists of the following (\taypsi) type and functions:
  \begin{itemize}
    \item A public type \haskind{\li!T! :: \li!*@P!}, which is the public
      counterpart of \lstinline|@T|. We say \lstinline|@T| is an OADT
      for \lstinline|T|.
    \item A section function \hastype{\li!s! :: \li!Pik:tau,T->@T k!}, which
      converts a public type to its oblivious counterpart.
    \item A retraction function \hastype{\li!r! :: \li!Pik:tau,@T k->T!}, which
      converts an oblivious type to its public version.
    \item A public view function \hastype{\li!VIEW! :: \li!T->tau!}, which creates
      a valid view of the public type.
    \item A binary relation \hasview{} over values of types \lstinline!T! and
    \lstinline!tau!; \fmath{\li!v! \hasview \li!k!} reads as \lstinline|v| has
    public view \lstinline|k|, or \lstinline|k| is a valid public view of
    \lstinline|v|.
  \end{itemize}

  These operations are required to satisfy the following axioms:
  \begin{itemize}
    \item (\textsc{A-O$_{1}$}) \lstinline|s| and \lstinline|r| are
      a valid section and retraction, i.e., \lstinline|r| is a left-inverse for
      \lstinline|s|, given a valid public view: for any values \hastype{\li!v! ::
      \li!T!}, \hastype{\li!k! :: \li!tau!} and \hastype{\li!@v! :: \li!@T k!},
      if \fmath{\li!v! \hasview \li!k!} and \step*{\li!s k v! ~> \li!@v!}, then
      \step*{\li!r k @v! ~> \li!v!}.
    \item (\textsc{A-O$_{2}$}) the result of \lstinline|r| always has valid
      public view: \step*{\li!r k @v! ~> \li!v!} implies \fmath{\li!v! \hasview
      \li!k!} for all values \hastype{\li!k! :: \li!tau!}, \hastype{\li!@v!
      :: \li!@T k!} and \hastype{\li!v! :: \li!T!}.
    \item (\textsc{A-O$_{3}$}) \lstinline|VIEW| produces a valid public
      view: \step*{\li!VIEW v! ~> \li!k!} implies \fmath{\li!v! \hasview
      \li!k!}, given any values \hastype{\li!v! :: \li!T!} and \hastype{\li!k!
      :: \li!tau!}.
  \end{itemize}
\end{definition}

Note that we do not require these operations to be total as long as they satisfy
the axioms, although it is a desirable property.

\subsection{Logical Refinement}

\Cref{ap:fig:types} defines \emph{simple types}, used for implementing
functionality, and \emph{specification types}, used for specifying policy for
the secure versions. They are required to be well-kinded under empty local
context, i.e., all ADTs and OADTs appear in them are defined. We call a simple
type or specification type \emph{atomic} if it is not formed by the polynomial
type formers, i.e., product and arrow.

\begin{figure}[h]
\raggedright

\textsc{Simple types}

\begin{tabular}{RCLl}
\lstinline|eta| &\production& \lstinline!unit! \mid \lstinline!bool! \mid \lstinline!T! \mid \lstinline!eta*eta! \mid \lstinline!eta->eta! & \\
\end{tabular}

\textsc{specification types}

\begin{tabular}{RCLl}
\lstinline|theta| &\production& \lstinline!unit! \mid \lstinline!bool! \mid \lstinline!@bool! \mid \lstinline!T! \mid \lstinline!&@T! \mid \lstinline!theta*theta! \mid \lstinline!theta->theta! & \\
\end{tabular}

  \caption{Simple types and specification types}
  \label{ap:fig:types}
\end{figure}

\Cref{ap:fig:erasure} defines an erasure function from specification types to
simple types. This function is defined at all specification types, i.e., it is
total. It induces an equivalence relation: the erasure \lstinline|&[theta]&| is
the representative of the equivalence class \eqcls{\li!theta!}, called
\emph{compatibility class}. Two types are said to be \emph{compatible} if they
belong to the same compatibility class. This erasure operation can be naturally
extended to typing context, \erase{\tctx}, by erasing every specification type
in \tctx{} but leaving other types untouched.

\begin{figure}[h]
\footnotesize
\jbox{\lstinline!&[theta]&!}
\begin{mathpar}
  \li!&[unit]&! = \li!unit! \and%
  \li!&[bool]&! = \li!&[@bool]&! = \li!bool! \and%
  \li!&[T]&! = \li!T! \quad\text{where \li!T! is an ADT} \\
  \li!&[&@T]&! = \li!T! \quad\text{where \li!@T! is an OADT of \li!T!} \\
  \li!&[theta*theta]&! = \li!&[theta]&*&[theta]&! \and%
  \li!&[theta->theta]&! = \li!&[theta]&->&[theta]&!
\end{mathpar}

  \caption{Erase specification types to simple types}
  \label{ap:fig:erasure}
\end{figure}

\Cref{ap:fig:log-rel} defines a logical refinement from expressions of
specification types to those of simple types, as a step-indexed logical
relation. Like standard logical relations, this definition is a pair of
set-valued denotation of types: value interpretation \interpV{\li!theta!} and
expression interpretation \interpE{\li!theta!}. We say an expression
\lstinline|e'| of type \lstinline|theta| refines \lstinline|e| of type
\lstinline|&[theta]&| (within $n$ steps) if \fmath{(\li!e!, \li!e'!) \in
\interpE{\li!theta!}}.

All pairs in the relations must be closed and well-typed, i.e., the
interpretations have the forms:
{\footnotesize
  \begin{gather*}
    \interpV{\li!theta!} = \Set{(\li!v!, \li!v'!) |%
      \type{\empctx |- \li!v! :: \li!&[theta]&!} \land%
      \type{\empctx |- \li!v'! :: \li!theta!} \land%
      \ldots} \\
    \interpE{\li!theta!} = \Set{(\li!e!, \li!e'!) |%
      \type{\empctx |- \li!e! :: \li!&[theta]&!} \land%
      \type{\empctx |- \li!e'! :: \li!theta!} \land%
      \ldots}
  \end{gather*}
}
For brevity, we leave this implicit in \Cref{ap:fig:log-rel}.

\Cref{ap:fig:log-rel} also defines an interpretation of a typing context that
maps names to specification types, \interpG{\tctx}. Its codomain is
substitutions (\subst) of pairs of related values. We write \fmath{\subst_1} and
\fmath{\subst_2} for the substitutions that only use the first or the second
component of the pairs.

\begin{figure}[t]
\footnotesize
\jbox{\interpV{\li!theta!}}
\begin{mathpar}
  \interpV{\li!unit!} =%
  \interpV{\li!bool!} =%
  \interpV{\li!T!} = \Set{(\li!v!, \li!v'!) |%
    0 < n \implies \li!v! = \li!v'!} \and
  \interpV{\li!@bool!} =%
  \Set{(\li!b!, \li![b']!) |%
    0 < n \implies \li!b! = \li!b'!} \and
  \interpV{\li!&@T!} =%
  \Set{(\li!v!, \li!&(k,@v)&!) |%
    0 < n \implies \step*{\li!r k @v! ~> \li!v!}} \and
  \interpV{\li!theta_1*theta_2!} =%
  \Set{(\li!(v_1,v_2)!, \li!(v_1',v_2')!) |%
    (\li!v_1!, \li!v_1'!) \in \interpV{\li!theta_1!} \land%
    (\li!v_2!, \li!v_2'!) \in \interpV{\li!theta_2!}} \and
  \interpV{\li!theta_1->theta_2!} =%
  \Set{(\li!?x:&[theta_1]&=>e!, \li!?x:theta_1=>e'!) |%
    \forall i < n.%
    \forall (\li!v!, \li!v'!) \in \interpV[i]{\li!theta_1!}.%
      (\li![v/x]e!, \li![v'/x]e'!) \in \interpE[i]{\li!theta_2!}}
\end{mathpar}

\jbox{\interpE{\li!theta!}}
\begin{mathpar}
  \interpE{\li!theta!} = \Set{(\li!e!, \li!e'!) |%
    \forall i < n.%
    \forall \li!v'!.\;%
    \step[i]{\li!e'! ~> \li!v'!} \implies%
    \exists \li!v!.\; \step*{\li!e! ~> \li!v!} \land%
    (\li!v!, \li!v'!) \in \interpV[n-i]{\li!theta!}}
\end{mathpar}

\jbox{\interpG{\tctx}}
\begin{mathpar}
  \interpG{\empctx} = \Set{\emptyset} \and
  \interpG{\extctx{\li!x! :: \li!theta!}} =%
  \Set{\subst[\li!x! \mapsto (\li!v!, \li!v'!)] |%
    \subst \in \interpG{\tctx} \land%
    (\li!v!, \li!v'!) \in \interpV{\li!theta!}}
\end{mathpar}

  \caption{Refinement as logical relation}
  \label{ap:fig:log-rel}
\end{figure}

\paragraph{Remark} The interpretations \interpV[0]{\li!theta!} and
\interpE[0]{\li!theta!} are total relations of closed values and expressions,
respectively, of type \lstinline|theta|.

\subsection{Join Structures}

An OADT \lstinline|@T| can optionally be equipped with a \emph{join-structure},
shown in \Cref{ap:def:join-struct}. We denote the context of join-structures by
\sctxjoin.

\begin{definition}[join-structure]
  \label{ap:def:join-struct}

  A join-structure of an OADT \lstinline|@T| for \lstinline|T|, with public view
  type \lstinline|tau|, consists of the following operations:
  \begin{itemize}
  \item A binary relation \ple{} on \lstinline|tau|, used to compare two public
    views.
  \item A join function \hastype{\li!JOIN! :: \li!tau->tau->tau!},
    which computes an upper bound of two public views.
  \item A reshape function %
    \hastype{\li!RESHAPE! :: \li!Pik:tau,Pik':tau,@T k->@T k'!}, which converts
    an OADT to one with a different public view.
  \end{itemize}
  such that:
  \begin{itemize}
    \item (\textsc{A-R$_{1}$}) \ple{} is a partial order on \lstinline|tau|.
    \item (\textsc{A-R$_{2}$}) the join function produces an upper bound: given
      values \lstinline!k_1!, \lstinline!k_2! and \lstinline!k! of type
      \lstinline!tau!, if \step*{\li!k_1JOINk_2! ~> \li!k!}, then
      \fmath{\li!k_1! \ple \li!k!} and \fmath{\li!k_2! \ple \li!k!}.
    \item (\textsc{A-R$_{3}$}) the validity of public views is monotone with
      respect to the binary relation \ple: for any values \hastype{\li!v! ::
      \li!T!}, \hastype{\li!k! :: \li!tau!} and \hastype{\li!k'! :: \li!tau!},
      if \fmath{\li!v! \hasview \li!k!} and \fmath{\li!k! \ple \li!k'!}, then
      \fmath{\li!v! \hasview \li!k'!}.
    \item (\textsc{A-R$_{4}$}) the reshape function produces equivalent value,
      as long as the new public view is valid: for any values \hastype{\li!v! ::
      \li!T!}, \hastype{\li!k! :: \li!tau!}, \hastype{\li!k'! :: \li!tau!},
      \hastype{\li!@v! :: \li!@T k!} and \hastype{\li!@v'! :: \li!@T k'!}, if
      \step*{\li!r k @v! ~> \li!v!} and \fmath{\li!v! \hasview \li!k'!} and
      \step*{\li!RESHAPE k k' @v! ~> \li!@v'!}, then \step*{\li!r k' @v'! ~>
      \li!v!}.
  \end{itemize}
\end{definition}

\paragraph{Remark} It is a bit misleading to call the operation \li!JOIN!
``join'', as it only computes an upper bound, not necessarily the lowest one.
However, it \emph{should} compute a supremum for performance reasons:
intuitively, larger public view means more padding.

\Cref{ap:fig:mergeable} defines a relation on specification types, induced by
join-structures. We say \lstinline|theta| is \emph{mergeable} if
\mergeable{\li!theta! |> \li!@ite!}, with witness \lstinline|@ite| of type
\lstinline|@bool->theta->theta->theta|. We may write \mergeable{\li!theta!} when
we do not care about the witness.

\begin{figure}[t]
\footnotesize
\jbox{\mergeable{\li!theta! |> \li!@ite!}}
\begin{mathpar}
  \inferrule{
    \li!theta! \in \set{\li!unit!, \li!@bool!}
  }{
    \mergeable{\li!theta! |> \li!?@b x y=>mux @b x y!}
  }

  \inferrule{
    (\li!@T!, \li!JOIN!, \li!RESHAPE!) \in \sctxjoin
  }{
    \mergeable{\li!&@T! |>%
      \begin{tabular}{@{}l}
        \li!?@b x y=>let k = pi_1 x JOIN pi_1 y in! \\
        \li!\ \ \ \ \ \ \ \ \&(k,mux @b (RESHAPE (pi_1 x) k (pi_2 x))! \\
        \li!\ \ \ \ \ \ \ \ \ \ \ \ \ \ \ \ \ (RESHAPE (pi_1 y) k (pi_2 y)))\&!
      \end{tabular}
    }
  }

  \inferrule{
    \mergeable{\li!theta_1! |> \li!@ite_1!} \\
    \mergeable{\li!theta_2! |> \li!@ite_2!}
  }{
    \mergeable{\li!theta_1*theta_2! |>%
      \li!?@b x y=>(@ite_1 @b (pi_1 x) (pi_1 y),@ite_2 @b (pi_2 x) (pi_2 y))!}
  }

  \inferrule{
    \mergeable{\li!theta_2! |> \li!@ite_2!}
  }{
    \mergeable{\li!theta_1->theta_2! |> \li!?@b x y=>?z=>@ite_2 @b (x z) (y z)!}
  }
\end{mathpar}
  \caption{Mergeable}
  \label{ap:fig:mergeable}
\end{figure}

\subsection{Introduction and Elimination Structures}

An OADT \lstinline|@T| can optionally be equipped with an
\emph{introduction-structure} (intro-structure) and an
\emph{elimination-structure} (elim-structure), shown in
\Cref{ap:def:intro-struct} and \Cref{ap:def:elim-struct} respectively. We denote
the contexts of these two structures by \sctxintro{} and \sctxelim.

\begin{definition}[intro-structure]
  \label{ap:def:intro-struct}

  An intro-structure of an OADT \lstinline|@T| for ADT \lstinline!T!, with
  global definition \lstinline!data T = !$\overline{\li!C eta!}$, consists of a
  set of functions \lstinline|@C_i|, each corresponding to a constructor
  \lstinline|C_i|. The type of \lstinline|@C_i| is \lstinline|theta_i->&@T|,
  where \fmath{\li!&[theta_i]&! = \li!eta_i!} (note that \textsc{DT-ADT}
  guarantees that \lstinline|eta_i| is a simple type). The particular
  \lstinline|theta_i| an intro-structure uses is determined by the author of
  that structure.

  Each \lstinline|@C_i| is required to logically refine the corresponding
  constructor (\textsc{A-I$_1$}): given any values \hastype{\li!v! ::
  \li!&[theta]&!} and \hastype{\li!v'! :: \li!theta!}, if \fmath{(\li!v!,
  \li!v'!) \in \interpV{\li!theta!}}, then \fmath{(\li!C_i v!, \li!@C_i v'!) \in
  \interpE{\li!&@T!}}.
\end{definition}

\begin{definition}[elim-structure]
  \label{ap:def:elim-struct}

  An elim-structure of an OADT \lstinline|@T| for ADT \lstinline!T!, with global
  definition \lstinline!data T = !$\overline{\li!C eta!}$, consists of a family
  of functions \lstinline|@Imatch_alpha|, indexed by the possible return types.
  The type of \lstinline|@Imatch_alpha| is
  \lstinline|&@T->|$\overline{\li!(theta->alpha)!}$\lstinline|->alpha|, where
  \fmath{\li!&[theta_i]&! = \li!eta_i!} for each \lstinline|theta_i| in the
  function arguments corresponding to alternatives.

  Each \lstinline|@Imatch_alpha| is required to logically refine the
  pattern matching expression, specialized with ADT \lstinline|T| and
  return type \lstinline|alpha|. The sole axiom of this structure
  (\textsc{A-E$_1$}) only considers return type \lstinline|alpha|
  being a specification type: given values \hastype{\li!v_i! ::
    \li!eta_i!}, %
  \hastype{\li!&(k,@v)&! :: \li!@T k!}, %
  \hastype{\li!?x=>e_i! :: \li!&[theta_i]&->&[alpha]&!} and %
  \hastype{\li!?x=>e_i'! :: \li!theta_i->alpha!}, %
  if \step*{\li!r k @v! ~> \li!C_i v_i!} and %
  \fmath{(\li!?x=>e_i!, \li!?x=>e_i'!) \in
    \interpV{\li!theta_i->alpha!}} then %
  \fmath{(\li![v_i/x]e_i!,%
    \li!@Imatch &(k,@v)&\ !\overline{\li!(?x=>e')!}) \in
    \interpE{\li!alpha!}}.
\end{definition}

\subsection{Coercion Structure}

Two compatible OADTs may form a \emph{coercion-structure}, shown in
\Cref{ap:def:coer-struct}. We denote the context of this structure by \sctxcoer.

\begin{definition}[coercion-structure]
  \label{ap:def:coer-struct}

  A coercion-structure of a pair of compatible OADTs \lstinline|@T|
  and \lstinline|@T'| for \lstinline|T|, with public view type \lstinline|tau|
  and \lstinline|tau'| respectively, consists of a coercion
  function \lstinline|COER| of type \lstinline|&@T->&@T'|. The coercion should
  produce an equivalent value (\textsc{A-C$_1$}): given values \hastype{\li!v!
    :: \li!T!}, \hastype{\li!&(k,@v)&! :: \li!&@T!} and \hastype{\li!&(k',@v')&!
    :: \li!&@T'!}, if \step*{\li!r k @v! ~> \li!v!} and \step*{\li!COER&(k,@v)&!
    ~> \li!&(k',@v')&!}, then \step*{\li!r k' @v'! ~> \li!v!}.
\end{definition}

\Cref{ap:fig:coercion} defines a relation on specification types, induced by the
coercion-structures. We say \lstinline|theta| is \emph{coercible} to
\lstinline|theta'| if \coercible{\li!theta! ~> \li!theta'! |> \li!COER!}, with
witness \lstinline|COER| of type \lstinline|theta->theta'|. We may write
\coercible{\li!theta! ~> \li!theta'!} when we do not care about the witness.

\begin{figure}[t]
\jbox{\coercible{\li!theta! ~> \li!theta'! |> \li!COER!}}
\footnotesize
\begin{mathpar}
  \inferrule{
  }{
    \coercible{\li!theta! ~> \li!theta! |> \li!?x=>x!}
  }

  \inferrule{
  }{
    \coercible{\li!bool! ~> \li!@bool! |> \li!?x=>@bool\#s x!}
  }

  \inferrule{
    (\li!@T!, \li!T!, \li!s!, \li!r!, \li!VIEW!, \hasview) \in \sctxoadt
  }{
    \coercible{\li!T! ~> \li!&@T! |> \li!?x=>&(VIEW x,s (VIEW x) x)&!}
  }

  \inferrule{
    \hastype{\li!COER! :: \li!&@T->&@T'!} \in \sctxcoer
  }{
    \coercible{\li!&@T! ~> \li!&@T'! |> \li!COER!}
  }

  \inferrule{
    \coercible{\li!theta_1! ~> \li!theta_1'! |> \li!COER_1!} \\
    \coercible{\li!theta_2! ~> \li!theta_2'! |> \li!COER_2!}
  }{
    \coercible{\li!theta_1*theta_2! ~> \li!theta_1'*theta_2'! |>%
      \li!?x=>(COER_1(pi_1 x),COER_2(pi_2 x))!}
  }

  \inferrule{
    \coercible{\li!theta_1'! ~> \li!theta_1! |> \li!COER_1!} \\
    \coercible{\li!theta_2! ~> \li!theta_2'! |> \li!COER_2!}
  }{
    \coercible{\li!theta_1->theta_2! ~> \li!theta_1'->theta_2'! |>%
      \li!?x=>?y=>COER_2(x (COER_1y))!}
  }
\end{mathpar}
  \caption{Coercion}
  \label{ap:fig:coercion}
\end{figure}

\FloatBarrier

\section{Lifting}

\subsection{Declarative Lifting}

\Cref{ap:fig:liftingR} presents the full rules of the declarative lifting
relation. The judgment \liftR{\sctx; \lctx; \gctx; \tctx |- \li!e! :: \li!theta!
|> \li!Oe!} reads as the expression \lstinline|e| of type \lstinline|&[theta]&|
is lifted to the expression \lstinline|Oe| of target type \lstinline|theta|,
under various contexts. The \psistruct context \sctx{} consists of the set of
OADT-structures (\sctxoadt{}), join-structures (\sctxjoin{}), intro-structures
(\sctxintro{}), elim-structures (\sctxelim{}) and coercion-structures
(\sctxcoer{}), respectively. The local typing context \tctx{} maps local
variables to the target type. The lifting context \lctx{} consists of entries of
the form \hastype{\li!x! :: \li!theta! |> \li!Ox!}, which associates the global
function \lstinline|x| of type \lstinline|&[theta]&| with a target function
\lstinline|Ox| of the target type \lstinline|theta|. We elide most contexts as
they are fixed, and simply write \liftR{\li!e! :: \li!theta! |> \li!Oe!} for
brevity.

\begin{figure}[t]
\footnotesize
\jbox{\liftR{\li!e! :: \li!theta! |> \li!Oe!}}
\begin{mathpar}
  \inferrule[L-Unit]{
  }{
    \liftR{\li!()! :: \li!unit! |> \li!()!}
  }

  \inferrule[L-Lit]{
  }{
    \liftR{\li!b! :: \li!bool! |> \li!b!}
  }

  \inferrule[L-Var]{
    \hastype{\li!x! :: \li!theta!} \in \tctx
  }{
    \liftR{\li!x! :: \li!theta! |> \li!x!}
  }

  \inferrule[L-Fun]{
    \hastype{\li!x! :: \li!theta! |> \li!Ox!} \in \lctx
  }{
    \liftR{\li!x! :: \li!theta! |> \li!Ox!}
  }

  \inferrule[L-Abs]{
    \liftR{\extctx{\li!x! :: \li!theta_1!} |- \li!e! :: \li!theta_2! |> \li!Oe!}
  }{
    \liftR{\li!?x:&[theta_1]&=>e! :: \li!theta_1->theta_2! |>%
      \li!?x:theta_1=>Oe!}
  }

  \inferrule[L-App]{
    \liftR{\li!e_2! :: \li!theta_1->theta_2! |> \li!Oe_2!} \\
    \liftR{\li!e_1! :: \li!theta_1! |> \li!Oe_1!}
  }{
    \liftR{\li!e_2 e_1! :: \li!theta_2! |> \li!Oe_2 Oe_1!}
  }

  \inferrule[L-Let]{
    \liftR{\li!e_1! :: \li!theta_1! |> \li!Oe_1!} \\
    \liftR{\extctx{\li!x! :: \li!theta_1!} |-%
      \li!e_2! :: \li!theta_2! |> \li!Oe_2!}
  }{
    \liftR{\li!let x = e_1 in e_2! :: \li!theta_2! |> \li!let x = Oe_1 in Oe_2!}
  }

  \inferrule[L-Pair]{
    \liftR{\li!e_1! :: \li!theta_1! |> \li!Oe_1!} \\
    \liftR{\li!e_2! :: \li!theta_2! |> \li!Oe_2!}
  }{
    \liftR{\li!(e_1,e_2)! :: \li!theta_1*theta_2! |>%
      \li!(Oe_1,Oe_2)!}
  }

  \inferrule[L-Proj]{
    \liftR{\li!e! :: \li!theta_1*theta_2! |> \li!Oe!}
  }{
    \liftR{\li!pi_b e! :: \li!ite(b,theta_1,theta_2)! |> \li!pi_b Oe!}
  }

  \inferrule[L-If$_1$]{
    \liftR{\li!e_0! :: \li!bool! |> \li!Oe_0!} \\\\
    \liftR{\li!e_1! :: \li!theta! |> \li!Oe_1!} \\
    \liftR{\li!e_2! :: \li!theta! |> \li!Oe_2!}
  }{
    \liftR{\li!if e_0 then e_1 else e_2! :: \li!theta! |>%
      \li!if Oe_0 then Oe_1 else Oe_2!}
  }

  \inferrule[L-If$_2$]{
    \liftR{\li!e_0! :: \li!@bool! |> \li!Oe_0!} \\
    \mergeable{\li!theta! |> \li!@ite!} \\\\
    \liftR{\li!e_1! :: \li!theta! |> \li!Oe_1!} \\
    \liftR{\li!e_2! :: \li!theta! |> \li!Oe_2!} \\
  }{
    \liftR{\li!if e_0 then e_1 else e_2! :: \li!theta! |>%
      \li!@ite Oe_0 Oe_1 Oe_2!}
  }

  \inferrule[L-Ctor$_1$]{
    \li!data T =\ !\overline{\li!C eta!} \in \gctx \\
    \liftR{\li!e! :: \li!eta_i! |> \li!Oe!}
  }{
    \liftR{\li!C_i e! :: \li!T! |> \li!C_i Oe!}
  }

  \inferrule[L-Ctor$_2$]{
    \hastype{\li!@C_i! :: \li!theta_i->&@T!} \in \sctxintro \\
    \liftR{\li!e! :: \li!theta_i! |> \li!Oe!}
  }{
    \liftR{\li!C_i e! :: \li!&@T! |> \li!@C_i Oe!}
  }

  \inferrule[L-Match$_1$]{
    \li!data T =\ !\overline{\li!C eta!} \in \gctx \\
    \liftR{\li!e_0! :: \li!T! |> \li!Oe_0!} \\
    \forall i.\; \liftR{\extctx{\li!x! :: \li!eta_i!} |-%
      \li!e_i! :: \li!theta'! |> \li!Oe_i!}
  }{
    \liftR{\li!match e_0 with\ !\overline{\li!C x=>e!} :: \li!theta'! |>%
      \li!match Oe_0 with\ !\overline{\li!C x=>Oe!}}
  }

  \inferrule[L-Match$_2$]{
    \hastype{\li!@Imatch! ::%
      \li!&@T->!\overline{\li!(theta->theta')!}\li!->theta'!} \in \sctxelim \\
    \liftR{\li!e_0! :: \li!&@T! |> \li!Oe_0!} \\
    \forall i.\; \liftR{\extctx{\li!x! :: \li!theta_i!} |-%
      \li!e_i! :: \li!theta'! |> \li!Oe_i!}
  }{
    \liftR{\li!match e_0 with\ !\overline{\li!C x=>e!} :: \li!theta'! |>%
      \li!@Imatch Oe_0\ !\overline{\li!(?x:theta=>Oe)!}}
  }

  \inferrule[L-Coerce]{
    \liftR{\li!e! :: \li!theta! |> \li!Oe!} \\
    \coercible{\li!theta! ~> \li!theta'! |> \li!COER!}
  }{
    \liftR{\li!e! :: \li!theta'! |> \li!COEROe!}
  }
\end{mathpar}
  \caption{Declarative lifting rules}
  \label{ap:fig:liftingR}
\end{figure}

\subsection{Algorithmic Lifting}

\Cref{ap:fig:ppx} defines a set of typed macros that take types as parameters
and elaborate to expressions accordingly, under the contexts \sctx, \lctx{} and
\gctx{} implicitly.

\begin{figure}[t]
\footnotesize
\jbox{\ppx{\li!Mite(theta_0,theta;e_0,e_1,e_2)! |> \li!Oe!}}
\begin{mathpar}
  \inferrule{
  }{
    \ppx{\li!Mite(bool,theta;e_0,e_1,e_2)! |> \li!if e_0 then e_1 else e_2!}
  }

  \inferrule{
    \mergeable{\li!theta! |> \li!@ite!}
  }{
    \ppx{\li!Mite(@bool,theta;e_0,e_1,e_2)! |> \li!@ite e_0  e_1 e_2!}
  }
\end{mathpar}

\jbox{\ppx{\li!MC(theta,theta';e)! |> \li!Oe!}}
\begin{mathpar}
  \inferrule{
    \li!data T =\ !\overline{\li!C eta!} \in \gctx
  }{
    \ppx{\li!MC_i(eta_i,T;e)! |> \li!C_i e!}
  }

  \inferrule{
    \hastype{\li!@C_i! :: \li!theta_i->&@T!} \in \sctxintro
  }{
    \ppx{\li!MC_i(theta_i,&@T;e)! |> \li!@C_i e!}
  }
\end{mathpar}

\jbox{\ppx{\li!Mmatch(theta_0,!\overline{\li!theta!}\li!,theta';%
    e_0,!\overline{\li!e!}\li!)! |> \li!Oe!}}
\begin{mathpar}
  \inferrule{
    \li!data T =\ !\overline{\li!C eta!} \in \gctx
  }{
    \ppx{\li!Mmatch(T,!\overline{\li!eta!}\li!,theta';%
      e_0,!\overline{\li!e!}\li!)! |>%
      \li!match e_0 with\ !\overline{\li!C x=>e!}}
  }

  \inferrule{
    \hastype{\li!@Imatch! ::%
      \li!&@T->!\overline{\li!(theta->theta')!}\li!->theta'!} \in \sctxelim
  }{
    \ppx{\li!Mmatch(&@T,!\overline{\li!theta!}\li!,theta';%
      e_0,!\overline{\li!e!}\li!)! |>%
      \li!@Imatch e_0\ !\overline{\li!(?x:theta=>e)!}}
  }
\end{mathpar}

\jbox{\ppx{\li!MCOER(theta,theta';e)! |> \li!Oe!}}
\begin{mathpar}
  \inferrule{
    \coercible{\li!theta! ~> \li!theta'! |> \li!COER!}
  }{
    \ppx{\li!MCOER(theta,theta';e)! |> \li!COERe!}
  }
\end{mathpar}

\jbox{\ppx{\li!Mx(theta)! |> \li!Oe!}}
\begin{mathpar}
  \inferrule{
    \hastype{\li!x! :: \li!theta! |> \li!Ox!} \in \lctx
  }{
    \ppx{\li!Mx(theta)! |> \li!Ox!}
  }
\end{mathpar}
  \caption{Typed macros}
  \label{ap:fig:ppx}
\end{figure}

\Cref{ap:fig:constraints} defines the constraints generated by the algorithm,
where \lstinline|thetaV| is the specification types extended with type
variables.

\Cref{ap:fig:liftingA} presents the full rules of the lifting algorithm. The
judgment \liftA{\gctx; \tctx |- \li!e! :: \li!eta! ~ \li!X! |> \li!Oe! ~>
\cstrs} reads as the source expression \lstinline|e| of type \lstinline|eta| is
lifted to the target expression \lstinline|Oe| whose type is a \emph{type
variable} \lstinline|X|, and generates constraints \cstrs. The source expression
\lstinline|e| is required to be in \emph{administrative normal form} (ANF). The
typing context \tctx{} has the form \hastype{\li!x! :: \li!eta! ~ \li!X!},
meaning that local variable \lstinline|x| has type \lstinline|eta| in the source
program and \lstinline|X| in the target program.

\begin{figure}[t]
\raggedright

\textsc{Constraints}

\begin{tabular}{RC>{\footnotesize\(}l<{\)}l}
\cstr &\production& \incls{\li!X! ~ \li!eta!} \mid%
  \li!thetaV! = \li!thetaV! \mid \li!Mite(thetaV,thetaV)! \mid%
  \li!MC(thetaV,thetaV)! \mid%
  \li!Mmatch(thetaV,!\overline{\li!thetaV!}\li!,thetaV)! \mid%
  \li!MCOER(thetaV,thetaV)! \mid \li!Mx(thetaV)! & \\
\end{tabular}

  \caption{Constraints}
  \label{ap:fig:constraints}
\end{figure}

\begin{figure}[t]
\footnotesize
\jbox{\liftA{\li!e! :: \li!eta! ~ \li!X! |> \li!Oe! ~> \cstrs}}
\begin{mathpar}
  \inferrule[A-Unit]{
  }{
    \liftA{\li!()! :: \li!unit! ~ \li!X! |> \li!()! ~> \li!X! = \li!unit!}
  }

  \inferrule[A-Lit]{
  }{
    \liftA{\li!b! :: \li!bool! ~ \li!X! |> \li!b! ~> \li!X! = \li!bool!}
  }

  \inferrule[A-Var]{
    \hastype{\li!x! :: \li!eta! ~ \li!X!} \in \tctx
  }{
    \liftA{\li!x! :: \li!eta! ~ \li!X'! |> \li!MCOER(X,X';x)! ~> \li!MCOER(X,X')!}
  }

  \inferrule[A-Fun]{
    \li!fn x:eta = e! \in \gctx
  }{
    \liftA{\li!x! :: \li!eta! ~ \li!X! |> \li!Mx(X)! ~> \li!Mx(X)!}
  }

  \inferrule[A-Abs]{
    \fresh{\li!X_1!; \li!X_2!} \\
    \liftA{\extctx{\li!x! :: \li!eta_1! ~ \li!X_1!} |-%
      \li!e! :: \li!eta_2! ~ \li!X_2! |> \li!Oe! ~> \cstrs}
  }{
    \liftA{\li!?x:eta_1=>e! :: \li!eta_1->eta_2! ~ \li!X! |>%
      \li!?x:X_1=>Oe! ~>%
      \incls{\li!X_1! ~ \li!eta_1!}, \incls{\li!X_2! ~ \li!eta_2!},%
      \li!X! = \li!X_1->X_2!, \cstrs}
  }

  \inferrule[A-App]{
    \fresh{\li!X_1!} \\
    \hastype{\li!x_2! :: \li!eta_1->eta_2! ~ \li!X!} \in \tctx \\
    \liftA{\li!x_1! :: \li!eta_1! ~ \li!X_1! |> \li!Oe_1! ~> \cstrs}
  }{
    \liftA{\li!x_2 x_1! :: \li!eta_2! ~ \li!X_2! |> \li!x_2 Oe_1! ~>%
      \incls{\li!X_1! ~ \li!eta_1!}, \li!X! = \li!X_1->X_2!, \cstrs}
  }

  \inferrule[A-Let]{
    \fresh{\li!X_1!} \\
    \liftA{\li!e_1! :: \li!eta_1! ~ \li!X_1! |> \li!Oe_1! ~> \cstrs_1} \\
    \liftA{\extctx{\li!x! :: \li!eta_1! ~ \li!X_1!} |-%
      \li!e_2! :: \li!eta_2! ~ \li!X_2! |> \li!Oe_2! ~> \cstrs_2}
  }{
    \liftA{\li!let x:eta_1 = e_1 in e_2! :: \li!eta_2! ~ \li!X_2! |>%
      \li!let x:X_1 = Oe_1 in Oe_2! ~>%
      \incls{\li!X_1! ~ \li!eta_1!}, \cstrs_1, \cstrs_2}
  }

  \inferrule[A-Pair]{
    \hastype{\li!x_1! :: \li!eta_1! ~ \li!X_1!} \in \tctx \\
    \hastype{\li!x_2! :: \li!eta_2! ~ \li!X_2!} \in \tctx
  }{
    \liftA{\li!(x_1,x_2)! :: \li!eta_1*eta_2! ~ \li!X! |>%
      \li!(x_1,x_2)! ~> \li!X! = \li!X_1*X_2!}
  }

  \inferrule[A-Proj]{
    \fresh{\li!ite(b,X_2,X_1)!} \\
    \hastype{\li!x! :: \li!eta_1*eta_2! ~ \li!X!} \in \tctx
  }{
    \liftA{\li!pi_b x! :: \li!ite(b,eta_1,eta_2)! ~ \li!ite(b,X_1,X_2)! |>%
      \li!pi_b x! ~>%
      \incls{\li!ite(b,X_2,X_1)! ~ \li!ite(b,eta_2,eta_1)!}, \li!X! = \li!X_1*X_2!}
  }

  \inferrule[A-If]{
    \hastype{\li!x_0! :: \li!bool! ~ \li!X_0!} \in \tctx \\
    \liftA{\li!e_1! :: \li!eta! ~ \li!X! |> \li!Oe_1! ~> \cstrs_1} \\
    \liftA{\li!e_2! :: \li!eta! ~ \li!X! |> \li!Oe_2! ~> \cstrs_2} \\
  }{
    \liftA{\li!if x_0 then e_1 else e_2! :: \li!eta! ~ \li!X! |>%
      \li!Mite(X_0,X;x_0,Oe_1,Oe_2)! ~> \li!Mite(X_0,X)!, \cstrs_1, \cstrs_2}
  }

  \inferrule[A-Ctor]{
    \li!data T =\ !\overline{\li!C eta!} \in \gctx \\
    \fresh{\li!X_i!} \\
    \liftA{\li!x! :: \li!eta_i! ~ \li!X_i! |> \li!Oe! ~> \cstrs}
  }{
    \liftA{\li!C_i x! :: \li!T! ~ \li!X! |> \li!MC_i(X_i,X;Oe)! ~>%
      \incls{\li!X_i! ~ \li!eta_i!}, \li!MC_i(X_i,X)!, \cstrs}
  }

  \inferrule[A-Match]{
    \li!data T =\ !\overline{\li!C eta!} \in \gctx \\
    \fresh{\overline{\li!X!}} \\
    \hastype{\li!x_0! :: \li!T! ~ \li!X_0!} \in \tctx \\
    \forall i.\; \liftA{\extctx{\li!x! :: \li!eta_i! ~ \li!X_i!} |-%
      \li!e_i! :: \li!eta'! ~ \li!X'! |> \li!Oe_i! ~> \cstrs_i}
  }{
    \liftA{\li!match x_0 with\ !\overline{\li!C x=>e!} :: \li!eta'! ~ \li!X'! |>%
      \li!Mmatch(X_0,!\overline{\li!X!}\li!,X';%
      x_0,!\overline{\li!Oe!}\li!)! ~>%
      \overline{\incls{\li!X! ~ \li!eta!}},%
      \li!Mmatch(X_0,!\overline{\li!X!}\li!,X')!,%
      \overline{\cstrs}}
  }
\end{mathpar}

  \caption{Algorithmic lifting rules}
  \label{ap:fig:liftingA}
\end{figure}

\FloatBarrier

\subsection{Metatheory}

\begin{definition}[Well-typedness, derivability and validity of lifting context]
  A lifting context \lctx{} is well-typed (under \gctx) if and only if, for any
  \fmath{\hastype{\li!x! :: \li!theta! |> \li!Ox!} \in \lctx}, \type{\gctx;
    \empctx |- \li!x! :: \li!&[theta]&!} and \type{\gctx; \empctx |- \li!Ox! ::
    \li!theta!}.

  A lifting context is derivable, denoted by \lctxderiv{\lctx}, if and only if,
  for any \fmath{\hastype{\li!x! :: \li!theta! |> \li!Ox!} \in \lctx}, %
  \fmath{\li!fn x:&[theta]& = e! \in \gctx} and %
  \fmath{\li!fn Ox:theta = Oe! \in \gctx} for some \lstinline|e|
  and \lstinline|Oe|, such that \liftR{\sctx; \lctx; \gctx; \empctx |- \li!e! ::
    \li!theta! |> \li!Oe!}.

  A lifting context is $n$-valid, denoted by \lctxvalid[n]{\lctx}, if and only
  if, for any \fmath{\hastype{\li!x! :: \li!theta! |> \li!Ox!} \in \lctx},
  \fmath{(\li!x!, \li!Ox!) \in \interpE{\li!theta!}}. If \lctxvalid[n]{\lctx}
  for any $n$, we say \lctx{} is valid, denoted by \lctxvalid{\lctx}.

  Obviously, derivability or validity implies well-typedness.
\end{definition}

The theorems in this section assume a well-typed global context \gctx. We also
(explicitly or implicitly) apply some standard results about \loadtpsi that are
proved in Coq: weakening lemma (\verb!weakening.v!), substitution lemma
(\verb!preservation.v!), preservation theorem (\verb!preservation.v!) and
canonical forms of values (\verb!progress.v!). We do not state them formally
here, as they are all standard.

\begin{lemma}
  \label{ap:thm:erase-eta-eq}
  \fmath{\li!&[eta]&! = \li!eta!}.
\end{lemma}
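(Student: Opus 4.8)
The plan is to proceed by straightforward structural induction on the simple type \li!eta!. First I would observe that every simple type is in particular a specification type (the grammar for \li!theta! in \Cref{ap:fig:types} subsumes every production for \li!eta!), so \erase{\li!eta!} is well-defined and the claim is meaningful. The induction then follows the five productions of \li!eta!.

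For the base cases \li!unit!, \li!bool!, and \li!T! (with \li!T! an ADT), the relevant clauses of the erasure function in \Cref{ap:fig:erasure} give \erase{\li!unit!} = \li!unit!, \erase{\li!bool!} = \li!bool!, and \erase{\li!T!} = \li!T! directly, so each is immediate. Note that the \li!&@T! clause of the erasure is never reached here, since a simple type never mentions a \tpsi-type; this is the only place one must be slightly careful to invoke the right defining equation.

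For the inductive cases, when \li!eta! is \li!eta_1*eta_2! we have \erase{\li!eta_1*eta_2!} = \li!&[eta_1]&*&[eta_2]&! by definition, and applying the induction hypotheses to \li!eta_1! and \li!eta_2! rewrites this to \li!eta_1*eta_2!; the arrow case \li!eta_1->eta_2! is identical, using the congruence clause \erase{\li!theta->theta'!} = \erase{\li!theta!}\li!->!\erase{\li!theta'!}. There is no real obstacle here: the statement is essentially that \li!&[-]&! restricted to simple types is the identity, which is forced by the shape of its defining equations. The only thing worth stating explicitly, and the closest thing to a subtlety, is the containment of simple types in specification types that makes the lemma type-check in the first place.
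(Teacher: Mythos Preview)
Your proposal is correct and matches the paper's own proof, which simply reads ``By routine induction on \li!eta!.'' You have merely spelled out the cases of that induction explicitly.
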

\begin{proof}
  By routine induction on \li!eta!.
\end{proof}

\begin{lemma}[Regularity of mergeability]
  \label{ap:thm:mergeable-reg}
  \mergeable{\li!theta! |> \li!@ite!} implies\/ \type{\empctx |- \li!@ite! ::
    \li!@bool->theta->theta->theta!}.
\end{lemma}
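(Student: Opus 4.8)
The plan is to prove the statement by induction on the derivation of $\mergeable{\li!theta! |> \li!@ite!}$. Because every rule of \Cref{ap:fig:mergeable} is determined by the head constructor of $\theta$, this induction coincides with a structural induction on the specification type $\theta$, with one case per mergeability rule. In each case I will expand the witness term that the rule supplies and type it directly against $\li!@bool->theta->theta->theta!$ using the \loadtpsi typing rules of \Cref{ap:fig:typing}, appealing to the induction hypothesis in the two compound cases. Throughout I will use, without further comment, that $\theta$ is a well-formed specification type over the fixed global context and is therefore well-kinded under the empty local context (any specification type can at least be given the mixed kind $\li!*@M!$), so the well-kindedness premises of \textsc{T-Abs} on the bound variables $\li!@b!$, $\li!x!$, $\li!y!$ (and $\li!z!$ in the function case) are discharged immediately; I will also use the standard structural lemmas (weakening and substitution) mentioned in the excerpt.

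In the base case $\theta \in \{\li!unit!,\li!@bool!\}$ the witness is $\li!?@b x y=>mux @b x y!$. Three uses of \textsc{T-Abs} reduce the goal to typing $\li!mux @b x y!$ under the context $\li!@b!:\li!@bool!$, $\li!x!:\li!theta!$, $\li!y!:\li!theta!$, which is an instance of \textsc{T-Mux}; its remaining premise $\haskind{\li!theta! :: \li!*@O!}$ holds because $\li!unit!$ has the bottom kind $\li!*@A!$ (\textsc{K-Unit}, then \textsc{K-Sub}) and $\li!@bool!$ already has kind $\li!*@O!$ (\textsc{K-OBool}). For $\theta = \li!theta_1*theta_2!$, the induction hypothesis gives $\type{\empctx |- \li!@ite_i! :: \li!@bool->theta_i->theta_i->theta_i!}$ for $i \in \{1,2\}$, and typing the witness $\li!?@b x y=>(@ite_1 @b (pi_1 x) (pi_1 y),@ite_2 @b (pi_2 x) (pi_2 y))!$ is then a routine chain of \textsc{T-Abs}, \textsc{T-Proj}, \textsc{T-App}, and \textsc{T-Pair}. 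The case $\theta = \li!theta_1->theta_2!$ is analogous: from $\type{\empctx |- \li!@ite_2! :: \li!@bool->theta_2->theta_2->theta_2!}$ one types $\li!?@b x y=>?z=>@ite_2 @b (x z) (y z)!$ with one extra \textsc{T-Abs} for $\li!z!$ and two more \textsc{T-App}s.

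The one case that needs real care is $\theta = \li!&@T!$ with $(\li!@T!,\li!JOIN!,\li!RESHAPE!) \in \sctxjoin$. Well-typedness of $\gctx$ together with \Cref{ap:def:join-struct} gives $\hastype{\li!JOIN! :: \li!tau->tau->tau!}$ and $\hastype{\li!RESHAPE! :: \li!Pik:tau,Pik':tau,@T k->@T k'!}$, where $\li!tau!$ is the closed public view type of $\li!@T!$. Working under $\li!@b!:\li!@bool!$, $\li!x!:\li!&@T!$, $\li!y!:\li!&@T!$, rules \textsc{T-PsiProj$_1$} and \textsc{T-PsiProj$_2$} give $\hastype{\li!pi_1 x! :: \li!tau!}$ and $\hastype{\li!pi_2 x! :: \li!@T (pi_1 x)!}$, and similarly for $\li!y!$; hence $\hastype{\li!pi_1 x JOIN pi_1 y! :: \li!tau!}$, so after \textsc{T-Let} the bound variable satisfies $\hastype{\li!k! :: \li!tau!}$. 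Instantiating the dependent type of $\li!RESHAPE!$ in the obvious way then gives $\hastype{\li!RESHAPE (pi_1 x) k (pi_2 x)! :: \li!@T k!}$ and likewise for $\li!y!$; since $\haskind{\li!@T k! :: \li!*@O!}$ by \textsc{K-OADT}, \textsc{T-Mux} types the $\li!mux!$ subterm at $\li!@T k!$, and \textsc{T-PsiPair} (whose premise $\li!obliv @T (x:tau) = tau'! \in \gctx$ holds because $\li!@T!$ is a genuine OADT) packages $\li!k!$ with that $\li!mux!$ into a term of type $\li!&@T!$. As $\li!&@T!$ does not mention $\li!k!$, the type returned by \textsc{T-Let} is simply $\li!&@T!$, and closing off with the three outer \textsc{T-Abs}es yields $\type{\empctx |- \li!@ite! :: \li!@bool->&@T->&@T->&@T!}$, as required.

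I do not expect a genuine obstacle here: the whole argument is bookkeeping inside a typing derivation. The only part that rewards attention is keeping the dependent typing in the $\li!&@T!$ case consistent — in particular, verifying that the second-component obligation $\li!@T e_1!$ of \textsc{T-PsiPair} is met once $\li!k!$ has been brought into scope by the $\li!let!$, and discharging the $\li!*@O!$ kinding premises of the various $\li!mux!$es, via \textsc{K-OADT} in the $\li!&@T!$ case and via \textsc{K-Unit}/\textsc{K-Sub} and \textsc{K-OBool} in the base case.
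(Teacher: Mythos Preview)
Your proposal is correct and follows exactly the approach the paper takes: the paper's proof is simply ``by routine induction on the derivation of \mergeable{\li!theta! |> \li!@ite!} and applying typing rules as needed,'' and you have faithfully expanded that routine induction case by case. The only difference is level of detail; your handling of the dependent typing in the $\li!&@T!$ case is accurate and matches what the paper leaves implicit.
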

\begin{proof}
  By routine induction on the derivation of\/ \mergeable{\li!theta! |>
    \li!@ite!} and applying typing rules as needed.
\end{proof}

\begin{lemma}[Regularity of coercibility]
  \label{ap:thm:coercible-reg}
  \coercible{\li!theta! ~> \li!theta'! |> \li!COER!}\/ implies\/
  \fmath{\li!&[theta]&! = \li!&[theta']&!} and \type{\empctx |- \li!COER! ::
    \li!theta->theta'!}.
\end{lemma}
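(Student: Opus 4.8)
The plan is to induct on the derivation of \coercible{\li!theta! ~> \li!theta'! |> \li!COER!} using the rules of \Cref{ap:fig:coercion}, establishing the compatibility claim \fmath{\li!&[theta]&! = \li!&[theta']&!} and the typing claim \type{\empctx |- \li!COER! :: \li!theta->theta'!} together, one rule at a time. This parallels the proof of \Cref{ap:thm:mergeable-reg}.

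For the reflexivity rule both claims are immediate: \li!?x=>x! has type \li!theta->theta! by \textsc{T-Abs}/\textsc{T-Var}, and the erasures trivially coincide. For the \li!bool! to \li!@bool! rule, \fmath{\li!&[bool]&! = \li!bool! = \li!&[@bool]&!} by \Cref{ap:fig:erasure}, and \type{\empctx |- \li!?x=>@bool\#s x! :: \li!bool->@bool!} follows from \textsc{T-Sec}. For the rule coercing a public type \li!T! to \li!&@T!, the side condition supplies the OADT-structure for \li!@T! from \sctxoadt{}; since that structure is a structure \emph{for} \li!T! (\Cref{ap:def:oadt-struct}), \fmath{\li!&[&@T]&! = \li!T! = \li!&[T]&!}, and from the prescribed signatures \hastype{\li!VIEW! :: \li!T->tau!} and \hastype{\li!s! :: \li!Pik:tau,T->@T k!} we get \type{\empctx |- \li!s (VIEW x) x! :: \li!@T (VIEW x)!}, hence \type{\empctx |- \li!&(VIEW x,s (VIEW x) x)&! :: \li!&@T!} by \textsc{T-PsiPair}; abstracting over \li!x! yields the claimed type. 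For the rule coercing \li!&@T! to \li!&@T'! via a coercion-structure, \Cref{ap:def:coer-struct} guarantees \li!@T! and \li!@T'! are compatible OADTs for the same \li!T!, so \fmath{\li!&[&@T]&! = \li!T! = \li!&[&@T']&!}, and \li!COER! is by construction the structure's coercion function, of type \li!&@T->&@T'!.

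For the product rule, the induction hypotheses give \fmath{\li!&[theta_1]&! = \li!&[theta_1']&!}, \fmath{\li!&[theta_2]&! = \li!&[theta_2']&!} and \type{\empctx |- \li!COER_b! :: \li!theta_b->theta_b'!} for both components; congruence of erasure on products then gives the compatibility claim, and the witness \li!?x=>(COER_1(pi_1 x),COER_2(pi_2 x))! type-checks via \textsc{T-Abs}, \textsc{T-Proj}, \textsc{T-App}, \textsc{T-Pair}. For the (contravariant) arrow rule, the hypotheses are \fmath{\li!&[theta_1']&! = \li!&[theta_1]&!}, \type{\empctx |- \li!COER_1! :: \li!theta_1'->theta_1!}, \fmath{\li!&[theta_2]&! = \li!&[theta_2']&!}, and \type{\empctx |- \li!COER_2! :: \li!theta_2->theta_2'!}; erasure is congruent on arrows, so compatibility follows, and \li!?x=>?y=>COER_2(x (COER_1y))! checks by tracking \li!y :: theta_1'!, \li!COER_1 y :: theta_1!, \li!x (COER_1 y) :: theta_2!, \li!COER_2(x (COER_1 y)) :: theta_2'!.

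The only delicate point is the appeal, in the two \psistruct-lookup rules, to the well-formedness of the structure contexts: we need that an entry in \sctxoadt{} for \li!@T! is genuinely an OADT-structure for its public counterpart with the signatures of \Cref{ap:def:oadt-struct}, and that an entry in \sctxcoer{} pairs compatible OADTs. These are standing assumptions of the surrounding development; once granted, the remaining obligations are routine typing-derivation bookkeeping. I therefore expect this context-well-formedness invariant — rather than any of the inductive calculations — to be the main (and only minor) obstacle.
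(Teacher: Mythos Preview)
Your proposal is correct and follows exactly the paper's approach: a routine induction on the derivation of \coercible{\li!theta! ~> \li!theta'! |> \li!COER!}, applying typing rules as needed. The paper's proof is stated in a single sentence at that level of detail, and your case analysis is simply a faithful unfolding of it.
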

\begin{proof}
  By routine induction on the derivation of \coercible{\li!theta! ~> \li!theta'!
    |> \li!COER!} and applying typing rules as needed.
\end{proof}

\begin{theorem}[Regularity of declarative lifting]
  \label{ap:thm:liftingR-reg}
  Suppose\/ \lctx\/ is well-typed and\/ \liftR{\sctx; \lctx; \gctx; \tctx |-
    \li!e! :: \li!theta! |> \li!Oe!}. We have\/ \type{\gctx; \erase{\tctx} |-
    \li!e! :: \li!&[theta]&!} and\/ \type{\gctx; \tctx |- \li!Oe! ::
    \li!theta!}.
\end{theorem}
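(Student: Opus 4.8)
The plan is to proceed by induction on the derivation of \liftR{\li!e! :: \li!theta! |> \li!Oe!} (the rules of \Cref{ap:fig:liftingR}), establishing both conjuncts at once: for each rule I would derive \type{\gctx; \erase{\tctx} |- \li!e! :: \li!&[theta]&!} from the ``source halves'' of the induction hypotheses and \type{\gctx; \tctx |- \li!Oe! :: \li!theta!} from the ``target halves''. Before grinding through the cases I would record four bookkeeping facts. (i) Erasure commutes with the type formers that occur in the lifting rules, so \fmath{\li!&[theta_1*theta_2]&! = \li!&[theta_1]&*&[theta_2]&!}, \fmath{\li!&[theta_1->theta_2]&! = \li!&[theta_1]&->&[theta_2]&!}, and \fmath{\erase{\extctx{\li!x! :: \li!theta!}} = \extctx{\li!x! :: \li!&[theta]&!}} (the latter extended pointwise to contexts). (ii) By \Cref{ap:thm:erase-eta-eq}, \fmath{\li!&[eta]&! = \li!eta!}, so erasing a context whose entries are simple types is the identity; this is exactly what is needed in the rules that introduce ADT-shaped bindings \li!x :: eta_i! (e.g. \textsc{L-Match$_1$}, \textsc{L-Ctor$_1$}). (iii) Every simple type and every specification type is well-kinded under the empty context by hypothesis, hence (by weakening) under an arbitrary \tctx{}, which discharges the kinding side conditions in \textsc{T-Abs} and the oblivious-kinding conditions implicit in \textsc{T-Mux}/\textsc{T-OInj}. (iv) Since \gctx{} is well-typed, the \psistruct methods carry the types ascribed to them in their definitions: in particular \textsc{DT-Fun} gives \type{\empctx |- \li!@C_i! :: \li!theta_i->&@T!} and \type{\empctx |- \li!@Imatch! :: \li!&@T->!\overline{\li!(theta->theta')!}\li!->theta'!} for the methods used in \textsc{L-Ctor$_2$}/\textsc{L-Match$_2$}, and the accompanying erasure constraint \fmath{\li!&[theta_i]&! = \li!eta_i!} in those structure definitions aligns the source side with the ADT definition \li!data T = !\overline{\li!C eta!}.

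With this setup the bulk of the cases are mechanical. The congruence rules (\textsc{L-Unit}, \textsc{L-Lit}, \textsc{L-Pair}, \textsc{L-Proj}, \textsc{L-App}, \textsc{L-Let}, \textsc{L-If$_1$}, \textsc{L-Abs}, \textsc{L-Ctor$_1$}, \textsc{L-Match$_1$}) follow by applying the induction hypotheses and then the matching typing rule from \Cref{ap:fig:typing} on each side, using fact (i) to line up the source type \li!&[theta]&!. For \textsc{L-Var} the source side uses that \erase{\tctx} rewrites the entry \li!x :: theta! to \li!x :: &[theta]&!; for \textsc{L-Fun} both sides instead appeal to well-typedness of \lctx{}, which gives \type{\gctx; \empctx |- \li!x! :: \li!&[theta]&!} and \type{\gctx; \empctx |- \li!Ox! :: \li!theta!}, weakened into the ambient contexts. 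The three genuinely non-congruential cases use the regularity lemmas for the auxiliary relations: in \textsc{L-If$_2$}, \Cref{ap:thm:mergeable-reg} gives \type{\empctx |- \li!@ite! :: \li!@bool->theta->theta->theta!}, so the target term \li!@ite Oe_0 Oe_1 Oe_2! type-checks as an iterated application; in \textsc{L-Coerce}, \Cref{ap:thm:coercible-reg} supplies both \type{\empctx |- \li!COER! :: \li!theta->theta'!} (weakened, for the target side) and the erasure equality \fmath{\li!&[theta]&! = \li!&[theta']&!}, so the unchanged source expression keeps its type; and \textsc{L-Ctor$_2$}/\textsc{L-Match$_2$} use fact (iv).

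The main obstacle I anticipate is not any single case but the uniform threading of the erasure bookkeeping together with the well-kindedness side conditions: the \loadtpsi{} typing rules carry kinding premises that must be re-established in the current context, and every source-side obligation is phrased in terms of \li!&[theta]&! rather than \li!theta!, so each case requires carefully pushing erasure through the type constructors and through the local context (and, in \textsc{L-Proj}, through the metalevel \li!ite!). A secondary subtlety is \textsc{L-Match$_1$}: the source branches are typed under \extctx{\li!x! :: \li!eta_i!}, which must be recognized as \fmath{\erase{\extctx{\li!x! :: \li!eta_i!}}} via \Cref{ap:thm:erase-eta-eq}, and \textsc{T-Match}'s branch-dependent return type \li![C_i x/y]tau'! must be handled — but since the lifting rule fixes a single non-dependent \li!theta'! for all branches, the motive substitution is vacuous and the case collapses to a direct application of \textsc{T-Match}. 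I expect \textsc{T-Conv} to be needed only to perform these erasure rewrites, so once the four bookkeeping facts above are in place the induction should go through cleanly.
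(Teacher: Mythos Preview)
Your proposal is correct and follows essentially the same route as the paper's proof: induction on the lifting derivation, with the congruence cases discharged directly, \textsc{L-If$_2$} via \Cref{ap:thm:mergeable-reg}, \textsc{L-Coerce} via both halves of \Cref{ap:thm:coercible-reg}, \textsc{L-Ctor$_2$}/\textsc{L-Match$_2$} via the well-typedness of the structure methods, and \textsc{L-Ctor$_1$}/\textsc{L-Match$_1$} via \Cref{ap:thm:erase-eta-eq}. One small remark: the erasure rewrites you mention are definitional equalities on metalevel type expressions, not \loadtpsi{} type equivalences, so you will not actually need \textsc{T-Conv} to perform them.
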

\begin{proof}
  By induction on the derivation of the declarative lifting judgment.

  The cases on \textsc{L-Unit}, \textsc{L-Lit}, \textsc{L-Var} and
  \textsc{L-Fun} are trivial.

  The cases on \textsc{L-Abs}, \textsc{L-App}, \textsc{L-Let}, \textsc{L-Pair},
  \textsc{L-Proj}, \textsc{L-If$_1$}, \textsc{L-Ctor$_1$} and
  \textsc{L-Match$_1$} are straightforward to prove, as they are simply
  congruence cases. \textsc{L-Ctor$_1$} and \textsc{L-Match$_1$} also rely on
  \Cref{ap:thm:erase-eta-eq}. Here we show only the proof on \textsc{L-Abs}.
  Other cases are similar.

  To prove \type{\li!?x:theta_1=>Oe! :: \li!theta_1->theta_2!}, by
  \textsc{T-Abs}, we need to show \type{\extctx{\li!x! :: \li!theta_1!} |-
    \li!Oe! :: \li!theta_2!}, which follows immediately by induction hypothesis.
  The side condition of \li!theta_1! being well-kinded under \tctx{} is
  immediate from the fact that specification types are well-kinded and the
  weakening lemma. The other part of this case, \type{\li!?x:&[theta_1]&=>e! ::
    \li!&[theta_1->theta_2]&! = \li!&[theta_1]&->&[theta_2]&!} proceeds
  similarly.

  Case \textsc{L-If$_2$}: by \Cref{ap:thm:mergeable-reg}, we know \type{\empctx
    |- \li!@ite! :: \li!@bool->theta->theta->theta!}. Applying \textsc{T-App}
  and induction hypothesis, we have \type{\li!@ite Oe_0 Oe_1 Oe_2! ::
    \li!theta!}, as desired. The other half is straightforward.

  Cases \textsc{L-Ctor$_2$} and \textsc{L-Match$_2$} rely on the properties of
  structures \sctxintro{} and \sctxelim{}, but otherwise proceed similarly to
  other cases.

  Case \textsc{L-Coerce}: by induction hypothesis, we have \type{\erase{\tctx}
    |- \li!e! :: \li!&[theta]&!} and \type{\li!Oe! :: \li!theta!}. Since
  \fmath{\li!&[theta]&! = \li!&[theta']&!} by \Cref{ap:thm:coercible-reg},
  \type{\erase{\tctx} |- \li!e! :: \li!&[theta']&!} as required. On the other
  hand, \type{\empctx |- \li!COER! :: \li!theta->theta'!} by
  \Cref{ap:thm:coercible-reg}. It follows immediately that \type{\li!COEROe! ::
    \li!theta'!}.
\end{proof}

The regularity theorem ensures that the lifted expression is well-typed, and
provides the security guarantee, as well-typed programs are oblivious by the
obliviousness theorem.

\begin{lemma}[Multi-substitution]
  \label{ap:thm:msubst}
  Let \fmath{\subst \in \interpG{\tctx}}. If\/ \type{\li!e! :: \li!theta!},
  then\/ \type{\empctx |- \subst_2(\li!e!) :: \li!theta!}, and if\/
  \type{\erase{\tctx} |- \li!e! :: \li!eta!}, then\/ \type{\empctx |-
    \subst_1(\li!e!) :: \li!eta!}.
\end{lemma}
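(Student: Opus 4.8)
The plan is to prove this by straightforward induction on the structure of the typing context \tctx{}, unfolding the definition of \interpG{\cdot} at each step and appealing to the standard (single) substitution lemma for \loadtpsi. The two halves of the statement — the claim about \fmath{\subst_2} using the specification-typed view, and the claim about \fmath{\subst_1} using the erased view — are handled by the same induction, so I would prove them simultaneously to avoid duplicating the bookkeeping.

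First, the base case: when \fmath{\tctx = \empctx}, we have \interpG{\empctx} = \Set{\emptyset}, so \subst{} is the empty substitution and both \fmath{\subst_1(\li!e!)} and \fmath{\subst_2(\li!e!)} are just \lstinline!e!; since \fmath{\erase{\empctx} = \empctx}, the conclusions are exactly the hypotheses. For the inductive step, suppose \fmath{\tctx = \extctx[\tctx']{\li!x! :: \li!theta!}} and \fmath{\subst \in \interpG{\tctx}}. Unfolding the definition of \interpG{\cdot} on an extended context, \subst{} has the form \fmath{\subst'[\li!x! \mapsto (\li!v!, \li!v'!)]} with \fmath{\subst' \in \interpG{\tctx'}} and \fmath{(\li!v!, \li!v'!) \in \interpV{\li!theta!}}. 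The implicit well-typedness side condition built into the logical relation (displayed just before \Cref{ap:fig:log-rel}) gives us \type{\empctx |- \li!v! :: \li!&[theta]&!} and \type{\empctx |- \li!v'! :: \li!theta!}, i.e. both components are closed and well-typed at the appropriate types. Now, given \type{\tctx |- \li!e! :: \li!theta_e!} (for the \fmath{\subst_2} half), apply the single substitution lemma to replace \lstinline!x! by \lstinline!v'!: since \lstinline!v'! is closed of type \lstinline!theta!, we obtain \type{\tctx' |- [v'/x]e :: theta_e} (the type \lstinline!theta_e! is unchanged because \lstinline!x! does not occur free in \lstinline!theta! — or, if one allows dependency, one notes the substitution acts on the type too and argue accordingly; in the non-dependent specification-type fragment this is trivial). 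Then the induction hypothesis applied to \fmath{\subst'} and \fmath{[v'/x]e} yields \type{\empctx |- \subst_2'([v'/x]e) :: theta_e}, and since \fmath{\subst_2(\li!e!) = \subst_2'([v'/x]e)} by definition of how the substitution is threaded, we are done. The \fmath{\subst_1} half is symmetric, using \erase{\extctx[\tctx']{\li!x! :: \li!theta!}} = \extctx[\erase{\tctx'}]{\li!x! :: \li!&[theta]&!}, substituting the closed value \lstinline!v! of type \lstinline!&[theta]&!, and invoking the induction hypothesis on the erased context.

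The only mildly delicate point — and what I expect to be the main obstacle — is correctly tracking the order in which the simultaneous substitution \subst{} is applied versus the order in which the context is consumed, and making sure the single substitution lemma is invoked with a genuinely closed replacement value at each stage. Because the values \lstinline!v! and \lstinline!v'! supplied by \interpV{\cdot} are closed, there is no capture or ordering subtlety in substance, but stating the composition \fmath{\subst_2(\li!e!) = \subst_2'([v'/x]e)} precisely requires being careful about the convention for iterated substitution; this is exactly the kind of routine-but-fiddly detail that the Coq development settles and that I would simply cite rather than belabor. Everything else is a direct appeal to weakening, the substitution lemma, and the definitional unfolding of \interpG{\cdot} and \interpV{\cdot}.
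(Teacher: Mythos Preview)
Your proposal is correct and matches the paper's approach: induction on the structure of \tctx{}, unfolding \interpG{\cdot} and applying the single substitution lemma at each step. The paper's proof is a one-liner (``routine induction on the structure of \tctx{}, and applying substitution lemma when needed''), and the one subtlety it calls out---that \lstinline!theta! and \lstinline!eta! contain no local variables so substitution on the type is a no-op---is exactly the point you flag in your parenthetical.
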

\begin{proof}
  By routine induction on the structure of \tctx, and applying substitution
  lemma when needed. Note that \li!theta! and \li!eta! do not contain any local
  variables, so substitution on these types does nothing.
\end{proof}

\begin{lemma}
  \label{ap:thm:public-type-interpV}
  Suppose\/ \kind{\empctx |- \li!eta! :: \li!*@P!}. We have\/ \fmath{(\li!v!,
    \li!v!) \in \interpV{\li!eta!}} for any\/ \hastype{\li!v! :: \li!eta!}.
  Conversely, if\/ \fmath{(\li!v!, \li!v'!) \in \interpV{\li!eta!}} for $n > 0$,
  then\/ \fmath{\li!v! = \li!v'!}.
\end{lemma}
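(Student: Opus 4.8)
The plan is to prove both directions of \Cref{ap:thm:public-type-interpV} by simultaneous induction on the structure of the simple type \li!eta!, exploiting the fact that a public simple type is built only from \li!unit!, \li!bool!, ADT names \li!T!, products, and arrows—but that the hypothesis \kind{\empctx |- \li!eta! :: \li!*@P!} rules out the arrow case, since \textsc{K-Pi} only ever assigns mixed kind \li!*@M! (and there is no subkinding rule making \li!*@M! \ple \li!*@P!). So the induction really only has to treat the base cases \li!unit!, \li!bool!, \li!T!, and the product case.

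First I would establish \fmath{\li!&[eta]&! = \li!eta!} for public simple types (it holds for all simple types by \Cref{ap:thm:erase-eta-eq}), so that the well-typedness side-conditions baked into \interpV{\cdot} are automatically met: given \hastype{\li!v! :: \li!eta!}, both \type{\empctx |- \li!v! :: \li!&[eta]&!} and \type{\empctx |- \li!v! :: \li!eta!} hold. For the base cases \li!unit!, \li!bool!, and \li!T!, the definition of \interpV{\li!eta!} in \Cref{ap:fig:log-rel} is exactly \fmath{\Set{(\li!v!, \li!v'!) | 0 < n \implies \li!v! = \li!v'!}}, so \fmath{(\li!v!, \li!v!)} is trivially a member (the implication is vacuously or reflexively true), and conversely \fmath{(\li!v!, \li!v'!) \in \interpV{\li!eta!}} with \fmath{n > 0} gives \fmath{\li!v! = \li!v'!} directly. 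For the product case \li!eta_1*eta_2!, I would invoke the canonical-forms lemma to write \hastype{\li!v! :: \li!eta_1*eta_2!} as a pair \fmath{\li!(v_1,v_2)!} with \hastype{\li!v_1! :: \li!eta_1!} and \hastype{\li!v_2! :: \li!eta_2!}; since \li!eta_1! and \li!eta_2! are themselves public (by inverting \textsc{K-Prod}, possibly after \textsc{K-Sub}), the induction hypotheses give \fmath{(\li!v_1!, \li!v_1!) \in \interpV{\li!eta_1!}} and \fmath{(\li!v_2!, \li!v_2!) \in \interpV{\li!eta_2!}}, hence \fmath{((\li!v_1!,\li!v_2!), (\li!v_1!,\li!v_2!)) \in \interpV{\li!eta_1*eta_2!}}; the converse direction likewise unpacks the pair structure demanded by \interpV{\li!theta_1*theta_2!} and applies the two induction hypotheses componentwise.

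The one subtle point—and the main obstacle—is justifying that the arrow case genuinely cannot arise: I need the fact that no public simple type can contain an arrow, which follows because \textsc{K-Pi} forces mixed kind and the subkinding order \ple{} has \li!*@M! as a top element with nothing below it among \set{\li!*@P!,\li!*@O!}, so \kind{\empctx |- \li!eta_1->eta_2! :: \li!*@P!} is underivable. (If one prefers a more syntactic argument, it also suffices to invert the kinding derivation and observe that \textsc{K-Prod} propagates kinds through products but there is no rule deriving \li!*@P! for a \li!Pi!-type.) I would similarly need to be careful that in the product case the sub-derivations for \li!eta_1! and \li!eta_2! are public: \textsc{K-Prod} requires them to share a kind \li!kappa! with \fmath{\li!kappa! \ple \li!*@P!}, which by the semi-lattice in \Cref{ap:fig:kind} forces \fmath{\li!kappa! \in \set{\li!*@A!, \li!*@P!}}, and in either case \li!eta_i! is well-kinded at \li!*@P! via \textsc{K-Sub}. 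Apart from this kind-level bookkeeping, every case is a direct unfolding of the logical-relation definition, so the proof is short once the structural restriction on public types is pinned down.
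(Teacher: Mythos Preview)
Your proposal is correct and follows exactly the approach the paper sketches: the paper's own proof is the single line ``By routine induction on the structure of \li!eta!'', and what you have written is precisely that routine spelled out, including the observation that the arrow case is excluded by \textsc{K-Pi} assigning only \li!*@M! and the product case propagates publicness to the components via inversion of \textsc{K-Prod} and \textsc{K-Sub}.
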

\begin{proof}
  By routine induction on the structure of \li!eta!.
\end{proof}

\begin{lemma}[Anti-monotonicity]
  \label{ap:thm:anti-mono}
  If\/ $m \le n$, then\/ \fmath{\interpV[n]{\li!theta!} \subseteq
    \interpV[m]{\li!theta!}}, and \fmath{\interpE[n]{\li!theta!} \subseteq
    \interpE[m]{\li!theta!}}, and \fmath{\interpG[n]{\tctx} \subseteq
    \interpG[m]{\tctx}}, and \fmath{\lctxvalid[n]{\lctx} \implies
    \lctxvalid[m]{\lctx}}.
\end{lemma}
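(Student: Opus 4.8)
The plan is to prove the four inclusions in the order stated, since each one feeds the next. The foundation is monotonicity of the value interpretation, $\fmath{m \le n \implies \interpV[n]{\li!theta!} \subseteq \interpV[m]{\li!theta!}}$, which I would establish by a plain structural induction on the specification type \li!theta!. The implicit well-typedness conditions in the interpretations of \Cref{ap:fig:log-rel} (the ``$\ldots$'') never mention the step index, so they transfer with no work and I would not track them. For the atomic cases \li!unit!, \li!bool!, \li!T!, \li!@bool! and \li!&@T!, the body of the relation is guarded by $0 < n$: if $0 < m$ then $0 < n$ because $m \le n$, and the resulting equality (or the reduction \step*{\li!r k @v! ~> \li!v!} for \li!&@T!) is step-index-free, so it is inherited. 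The product case is a direct appeal to the two induction hypotheses, and the arrow case needs nothing but shrinking a universally quantified index: membership in $\interpV[n]{\li!theta_1->theta_2!}$ ranges over all $i < n$, every $i < m$ is such an $i$, and the inner obligation about $\interpE[i]{\li!theta_2!}$ is verbatim the same at index $i$. Note that this step does not use monotonicity of the expression interpretation, so there is no circularity to untangle.

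Given value monotonicity, monotonicity of the expression interpretation, $\fmath{\interpE[n]{\li!theta!} \subseteq \interpE[m]{\li!theta!}}$, follows without any further induction on \li!theta!. Suppose $\fmath{(\li!e!,\li!e'!) \in \interpE[n]{\li!theta!}}$ and $m \le n$; pick $i < m$ and \li!v'! with \step[i]{\li!e'! ~> \li!v'!}. Since $i < m \le n$, the definition yields a value \li!v! with \step*{\li!e! ~> \li!v!} and $\fmath{(\li!v!,\li!v'!) \in \interpV[n-i]{\li!theta!}}$; and because $m \le n$ gives $m - i \le n - i$, value monotonicity at \li!theta! lifts this to $\fmath{(\li!v!,\li!v'!) \in \interpV[m-i]{\li!theta!}}$, which is exactly what $\interpE[m]{\li!theta!}$ requires.

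The remaining two clauses are immediate consequences. For $\fmath{\interpG[n]{\tctx} \subseteq \interpG[m]{\tctx}}$ I would induct on the shape of \tctx, using value monotonicity to move each bound pair from $\interpV[n]$ into $\interpV[m]$. And $\fmath{\lctxvalid[n]{\lctx} \implies \lctxvalid[m]{\lctx}}$ unfolds directly: $\lctxvalid[n]{\lctx}$ says $\fmath{(\li!x!,\li!Ox!) \in \interpE[n]{\li!theta!}}$ for every $\fmath{\hastype{\li!x! :: \li!theta! |> \li!Ox!} \in \lctx}$, and expression monotonicity turns each of these memberships into one at index $m$.

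This lemma is essentially bookkeeping; the one place that warrants attention is the expression-interpretation case, where one must verify the index arithmetic — that $i < m$ forces $i < n$ so the $\interpE[n]$ hypothesis applies, and that $m \le n$ yields $m - i \le n - i$ so value monotonicity can be invoked at the residual index $m-i$. Beyond that I anticipate no real obstacle.
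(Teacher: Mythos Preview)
Your proposal is correct and follows exactly the approach the paper takes: structural induction on \li!theta! for the value interpretation, with the remaining three clauses derived as straightforward consequences. The paper's own proof is terse (``routine induction on \li!theta!'' for $\mathcal{V}$, ``straightforward'' for the rest), and your write-up is simply a faithful expansion of those details, including the correct index arithmetic in the expression case.
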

\begin{proof}
  To prove the case of value interpretation, proceed by routine induction on
  \li!theta!. The cases of other interpretations are straightforward.
\end{proof}

\begin{lemma}[Correctness of mergeability]
  \label{ap:thm:mergeable-correct}
  Suppose\/ \mergeable{\li!theta! |> \li!@ite!}. If\/ \fmath{(\li!v_1!,
    \li!Ov_1!) \in \interpV{\li!theta!}} and\/ \fmath{(\li!v_2!, \li!Ov_2!) \in
    \interpV{\li!theta!}}, and\/ \step*{\li!@ite [b] Ov_1 Ov_2! ~> \li!Ov!},
  then\/ \fmath{(\li!ite(b,v_1,v_2)!, \li!Ov!) \in \interpV{\li!theta!}}.
\end{lemma}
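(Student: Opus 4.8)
The plan is to induct on the derivation of $\mergeable{\li!theta! |> \li!@ite!}$, following the four rules of \Cref{ap:fig:mergeable}. First I would dispatch the degenerate case $n = 0$: the conclusion is then immediate, since $\interpV[0]{\li!theta!}$ is total on closed well-typed values, well-typedness of $\li!Ov!$ at $\li!theta!$ follows from \Cref{ap:thm:mergeable-reg} (which gives $\type{\empctx |- \li!@ite! :: \li!@bool->theta->theta->theta!}$) together with preservation, and $\li!ite(b,v_1,v_2)!$ is literally $\li!v_1!$ or $\li!v_2!$. So assume $0 < n$. For the base case $\li!theta! \in \set{\li!unit!,\li!@bool!}$, the witness is $\li!?@b x y=>mux @b x y!$, so $\li!@ite [b] Ov_1 Ov_2!$ reduces (by the $\li!@ite!$ $\beta$-steps and \textsc{S-Mux}) to $\li!ite(b,Ov_1,Ov_2)!$, whence $\li!Ov! = \li!ite(b,Ov_1,Ov_2)!$; a case split on $b$ then collapses the goal to one of the hypotheses $(\li!v_i!,\li!Ov_i!) \in \interpV{\li!theta!}$. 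The product case is equally mechanical: canonical forms split the four values into pairs, the witness reduces $\li!@ite [b] Ov_1 Ov_2!$ componentwise to $\li!(@ite_1 [b] Ov_{11} Ov_{21}, @ite_2 [b] Ov_{12} Ov_{22})!$ and on to a pair of values, and the induction hypotheses for $\li!theta_1!$ and $\li!theta_2!$ applied to the two components finish it, using $\li!ite(b,(v_{11},v_{12}),(v_{21},v_{22}))! = \li!(ite(b,v_{11},v_{21}),ite(b,v_{12},v_{22}))!$.

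The $\tpsi$-type case, where $\li!theta! = \li!&@T!$ with $(\li!@T!,\li!JOIN!,\li!RESHAPE!) \in \sctxjoin$, is where the structure axioms carry the argument. Canonical forms give $\li!Ov_1! = \li!&(k_1,@v_1)&!$, $\li!Ov_2! = \li!&(k_2,@v_2)&!$, and unfolding $\interpV{\li!&@T!}$ (using $n > 0$) yields $\step*{\li!r k_1 @v_1! ~> \li!v_1!}$ and $\step*{\li!r k_2 @v_2! ~> \li!v_2!}$. Tracing the $\tpsi$-witness, $\li!pi_1 Ov_1 JOIN pi_1 Ov_2!$ --- i.e.\ $\li!k_1 JOIN k_2!$ --- must evaluate to some $\li!k!$, each $\li!RESHAPE k_i k @v_i!$ to some $\li!@v_i'!$, and the inner $\li!mux [b]!$ selects one of the two, so $\li!Ov! = \li!&(k,ite(b,@v_1',@v_2'))&!$; the remaining goal is then $\step*{\li!r k (ite(b,@v_1',@v_2'))! ~> \li!ite(b,v_1,v_2)!}$. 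Splitting on $b$ --- say $b = \li!true!$, the other case symmetric --- this is $\step*{\li!r k @v_1'! ~> \li!v_1!}$, which I would obtain by chaining the OADT/join axioms: \textsc{A-O$_{2}$} on $\step*{\li!r k_1 @v_1! ~> \li!v_1!}$ gives $\fmath{\li!v_1! \hasview \li!k_1!}$; \textsc{A-R$_{2}$} on $\step*{\li!k_1 JOIN k_2! ~> \li!k!}$ gives $\fmath{\li!k_1! \ple \li!k!}$; \textsc{A-R$_{3}$} then gives $\fmath{\li!v_1! \hasview \li!k!}$; and \textsc{A-R$_{4}$}, instantiated with $\step*{\li!r k_1 @v_1! ~> \li!v_1!}$, $\fmath{\li!v_1! \hasview \li!k!}$, and $\step*{\li!RESHAPE k_1 k @v_1! ~> \li!@v_1'!}$, delivers $\step*{\li!r k @v_1'! ~> \li!v_1!}$ as needed.

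For the function case the witness is $\li!?@b x y=>?z=>@ite_2 @b (x z) (y z)!$, so $\li!@ite [b] Ov_1 Ov_2!$ $\beta$-reduces to the value $\li!?z:theta_1=>@ite_2 [b] (Ov_1 z) (Ov_2 z)!$, which is $\li!Ov!$; canonical forms give $\li!v_1! = \li!?z:&[theta_1]&=>e_1!$, $\li!v_2! = \li!?z:&[theta_1]&=>e_2!$, $\li!Ov_1! = \li!?z:theta_1=>Oe_1!$, $\li!Ov_2! = \li!?z:theta_1=>Oe_2!$. Taking $b = \li!true!$ (the other case symmetric), $\li!ite(b,v_1,v_2)! = \li!v_1!$, and showing $(\li!v_1!,\li!Ov!) \in \interpV{\li!theta_1->theta_2!}$ amounts to showing, for every $i < n$ and $(\li!v!,\li!v'!) \in \interpV[i]{\li!theta_1!}$, that $(\li![v/z]e_1!, \li!@ite_2 [b] (Ov_1 v') (Ov_2 v')!) \in \interpE[i]{\li!theta_2!}$. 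I would unfold $\interpE$ and use a standard decomposition of reductions under the call-by-value evaluation contexts: any $j$-step reduction of $\li!@ite_2 [b] (Ov_1 v') (Ov_2 v')!$ to a value $\li!Ow!$ factors through values $\li!Ow_1!$, $\li!Ow_2!$ for $\li!Ov_1 v'!$, $\li!Ov_2 v'!$ followed by $\step*{\li!@ite_2 [b] Ow_1 Ow_2! ~> \li!Ow!}$, with step counts summing to at most $j$. Since $\li!Ov_i v'!$ $\beta$-steps to $\li![v'/z]Oe_i!$, invoking $(\li!v_i!,\li!Ov_i!) \in \interpV{\li!theta_1->theta_2!}$ on $(\li!v!,\li!v'!)$ (legal because $i < n$) and then its $\interpE[i]{\li!theta_2!}$ clause yields values $\li!w_1!$, $\li!w_2!$ with $\step*{\li![v/z]e_1! ~> \li!w_1!}$, $\step*{\li![v/z]e_2! ~> \li!w_2!}$ and $(\li!w_1!,\li!Ow_1!),(\li!w_2!,\li!Ow_2!) \in \interpV[i-j]{\li!theta_2!}$ after \Cref{ap:thm:anti-mono} aligns the indices; the induction hypothesis for $\li!theta_2!$ and $\li!@ite_2!$ at index $i-j$ then gives $(\li!ite(b,w_1,w_2)!,\li!Ow!) = (\li!w_1!,\li!Ow!) \in \interpV[i-j]{\li!theta_2!}$, so $\li!w_1!$ is the value that $\interpE[i]{\li!theta_2!}$ demands. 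I expect the step-index bookkeeping in this function case to be the main obstacle --- decomposing the target-side reduction under evaluation contexts and threading anti-monotonicity so that every appeal to $\interpV$, $\interpE$, and the induction hypothesis sits at an admissible index; the remaining cases are routine computation with the operational semantics plus, in the $\tpsi$-case, the short chain of join-structure axioms above.
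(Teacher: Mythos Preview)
Your proposal is correct and follows essentially the same approach as the paper: induction on the mergeability derivation, with the base cases trivial, the product case by componentwise application of the hypotheses, the $\tpsi$-case by the chain \textsc{A-O$_2$}--\textsc{A-R$_2$}--\textsc{A-R$_3$}--\textsc{A-R$_4$}, and the function case by decomposing the target-side reduction under the CBV evaluation contexts, invoking the arrow-type value interpretation for each $\li!Ov_i!$, and finishing with the induction hypothesis at the residual step index via \Cref{ap:thm:anti-mono}. The only cosmetic difference is that you split explicitly on $b$ in the function and $\tpsi$-cases, whereas the paper carries the meta-level $\li!ite(b,\cdot,\cdot)!$ notation through and shows $(\li!ite(b,w_1,w_2)!,\li!Ow!) \in \interpV[i-j]{\li!theta_2!}$ directly; the arguments are otherwise identical.
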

\begin{proof}
  By induction on the derivation of mergeability. We assume $n > 0$ because
  otherwise it is trivial. The cases when \lstinline|theta| is \lstinline|unit|
  or \lstinline|@bool| is trivial. The case of product type is routine, similar
  to the case of function type.

  Case on function type \lstinline|theta_1->theta2|: by assumption,
  \begin{itemize}
    \item \fmath{\li!v_1! = \li!?z=>e_1!} for some \lstinline|e_1|
    \item \fmath{\li!Ov_1! = \li!?z=>Oe_1!} for some \lstinline|Oe_1|
    \item \fmath{\li!v_2! = \li!?z=>e_2!} for some \lstinline|e_2|
    \item \fmath{\li!Ov_2! = \li!?z=>Oe_2!} for some \lstinline|Oe_2|
  \end{itemize}
  Suppose \step*{\li!@ite [b] Ov_1 Ov_2! ~> \li!Ov!}, hence \fmath{\li!Ov! =%
    \li!?z=>@ite_2 [b] (Ov_1 z) (Ov_2 z)!}. Suppose $i < n$, and \fmath{(\li!u!,
    \li!Ou!) \in \interpV[i]{\li!theta_1!}}. We want to show
  \fmath{(\li!ite(b,[u/z]e_1,[u/z]e_2)!, \li!@ite_2 [b] (Ov_1 Ou) (Ov_2 Ou)!)
    \in \interpE[i]{\li!theta_2!}}. Suppose $j < i$. We have the following
  trace:

  {\footnotesize\setlength{\abovedisplayskip}{0pt}
    \begin{align*}
      \li!@ite_2 [b] (Ov_1 Ou) (Ov_2 Ou)!%
      &\steparrow[1] \li!@ite_2 [b] (Ov_1 Ou) ([Ou/z]Oe_2)! \\
      &\steparrow[j_2] \li!@ite_2 [b] (Ov_1 Ou) Ow_2! \\
      &\steparrow[1] \li!@ite_2 [b] ([Ou/z]Oe_1) Ow_2! \\
      &\steparrow[j_1] \li!@ite_2 [b] Ow_1 Ow_2! \\
      &\steparrow[j_3] \li!Ow!
    \end{align*}
  }%
  where $j = j_1 + j_2 + j_3 + 2$, with \step[j_2]{\li![Ou/z]Oe_2! ~> \li!Ow_2!}
  and \step[j_1]{\li![Ou/z]Oe_1! ~> \li!Ow_1!}. We instantiate the assumptions
  with $i$ and \lstinline|u| and \lstinline|Ou| to get:
  \begin{itemize}
    \item \fmath{(\li![u/z]e_1!, \li![Ou/z]Oe_1!) \in \interpE[i]{\li!theta_2!}}
    \item \fmath{(\li![u/z]e_2!, \li![Ou/z]Oe_2!) \in \interpE[i]{\li!theta_2!}}
  \end{itemize}
  It then follows that:
  \begin{itemize}
    \item \step*{\li![u/z]e_1! ~> \li!w_1!} for some value \lstinline!w_1!
    \item \fmath{(\li!w_1!, \li!Ow_1!) \in \interpV[i-j_1]{\li!theta_2!}}
    \item \step*{\li![u/z]e_2! ~> \li!w_2!} for some value \lstinline!w_2!
    \item \fmath{(\li!w_2!, \li!Ow_2!) \in \interpV[i-j_2]{\li!theta_2!}}
  \end{itemize}
  Hence \step*{\li!ite(b,[u/z]e_1,[u/z]e_2)! ~> \li!ite(b,w_1,w_2)!}. It remains
  to show \fmath{(\li!ite(b,w_1,w_2)!, \li!Ow!) \in
    \interpV[i-j]{\li!theta_2!}}, but it follows immediately by induction
  hypothesis and \Cref{ap:thm:anti-mono}.

  Case on \tpsi-type \lstinline|&@T|: by assumption,
  \begin{itemize}
    \item \fmath{\li!Ov_1! = \li!&(k_1,@v_1)&!} for some values \lstinline|k_1|
      and \lstinline|@v_1|
    \item \step*{\li!r k_1 @v_1! ~> \li!v_1!}
    \item \fmath{\li!Ov_2! = \li!&(k_2,@v_2)&!} for some values \lstinline|k_2|
      and \lstinline|@v_2|
    \item \step*{\li!r k_2 @v_2! ~> \li!v_2!}
  \end{itemize}
  Suppose \step*{\li!@ite [b] Ov_1 Ov_2! ~> \li!Ov!}. We get the following trace:

  {\footnotesize\setlength{\abovedisplayskip}{0pt}
    \begin{align*}
      \li!@ite [b] Ov_1 Ov_2!%
      &\steparrow*%
        \begin{tabular}{@{}l}
          \li!let k = pi_1 Ov_1 JOIN pi_1 Ov_2 in! \\
          \li!\&(k,mux [b] (RESHAPE (pi_1 Ov_1) k (pi_2 Ov_1))! \\
          \li!\ \ \ \ \ \ \ \ \ \ \ (RESHAPE (pi_1 Ov_2) k (pi_2 Ov_2)))\&!
        \end{tabular} \\
      &\steparrow* \li!let k = k_1JOINk_2 in ...! \\
      &\steparrow*%
          \li!\&(k,mux [b] (RESHAPE (pi_1 Ov_1) k (pi_2 Ov_1)) (...))\&! \\
      &\steparrow*%
        \li!\&(k,mux [b] (RESHAPE k_1 k @v_1) (...))\&! \\
      &\steparrow*%
        \li!\&(k,mux [b] @v_1' (RESHAPE (pi_1 Ov_2) k (pi_2 Ov_2)))\&! \\
      &\steparrow* \li!\&(k,mux [b] @v_1' (RESHAPE k_2 k @v_2))\&! \\
      &\steparrow* \li!\&(k,mux [b] @v_1' @v_2')\&! \\
      &\steparrow* \li!\&(k,ite(b,@v_1',@v_2'))\&! = \li!Ov!
    \end{align*}
  }%
  with \step*{\li!k_1JOINk_2! ~> \li!k!}, \step*{\li!RESHAPE k_1 k @v_1! ~>
    \li!@v_1'!} and \step*{\li!RESHAPE k_2 k @v_2! ~> \li!@v_2'!}. Because
  \fmath{\li!v_1! \hasview \li!k_1!} by \textsc{A-O$_2$}, and \fmath{\li!k_1!
    \ple \li!k!} by \textsc{A-R$_2$}, we have \fmath{\li!v_1! \hasview \li!k!}
  by \textsc{A-R$_3$}. It follows that \step*{\li!r k @v_1'! ~> \li!v_1!} by
  \textsc{A-R$_4$}. Similarly, we get \step*{\li!r k @v_2'! ~> \li!v_2!}. Hence,
  \step*{\li!r k ite(b,@v_1',@v_2')! ~> \li!ite(b,v_1,v_2)!}. That is to say
  \fmath{(\li!ite(b,v_1,v_2)!, \li!Ov!) \in \interpV{\li!&@T!}}, as desired.
\end{proof}

\begin{lemma}[Correctness of coercibility]
  \label{ap:thm:coercible-correct}
  Suppose\/ \coercible{\li!theta! ~> \li!theta'! |> \li!COER!}. If\/
  \fmath{(\li!v!, \li!Ov!) \in \interpV{\li!theta!}} and\/ \step*{\li!COEROv! ~>
    \li!Ov'!}, then\/ \fmath{(\li!v!, \li!Ov'!) \in \interpV{\li!theta'!}}.
\end{lemma}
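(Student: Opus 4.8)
The plan is to induct on the derivation of $\coercible{\li!theta! ~> \li!theta'! |> \li!COER!}$, following the six rules of \Cref{ap:fig:coercion}. Throughout, \Cref{ap:thm:coercible-reg} keeps things well-typed: it gives $\fmath{\li!&[theta]&! = \li!&[theta']&!}$ and $\type{\empctx |- \li!COER! :: \li!theta->theta'!}$, so from the (implicit) well-typedness of the pair $(\li!v!,\li!Ov!)$ together with preservation we learn $\type{\empctx |- \li!Ov'! :: \li!theta'!}$ and $\type{\empctx |- \li!v! :: \li!&[theta']&!}$. Consequently the step-index $n = 0$ case is immediate, since $\interpV[0]{\li!theta'!}$ is the total relation on well-typed closed values (the remark after \Cref{ap:fig:log-rel}); so I assume $n > 0$ henceforth.

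In the five ``structural'' rules I unfold the concrete witness, push $\li!COEROv!$ forward through the evaluation contexts, and invoke the matching axiom. For reflexivity ($\li!?x=>x!$), $\li!Ov'! = \li!Ov!$ and there is nothing to do. For $\li!bool! ~> \li!@bool!$, membership forces $\li!v! = \li!Ov!$, the witness reduces $\li!Ov!$ to $\li![v]!$, and $(\li!v!,\li![v]!) \in \interpV{\li!@bool!}$ by definition. For $\li!T! ~> \li!&@T!$, again $\li!v! = \li!Ov!$; the witness reduces to $\li!&(k,@v)&!$ where $\step*{\li!VIEW v! ~> \li!k!}$ and $\step*{\li!s k v! ~> \li!@v!}$, so \textsc{A-O$_{3}$} gives $\fmath{\li!v! \hasview \li!k!}$ and then \textsc{A-O$_{1}$} gives $\step*{\li!r k @v! ~> \li!v!}$, which is exactly $(\li!v!,\li!&(k,@v)&!) \in \interpV{\li!&@T!}$; the typing side-conditions of the axioms are discharged by preservation against the declared types of $\li!VIEW!$, $\li!s!$, $\li!r!$. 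For $\li!&@T! ~> \li!&@T'!$ via a coercion structure from $\sctxcoer$, membership gives $\li!Ov! = \li!&(k,@v)&!$ with $\step*{\li!r k @v! ~> \li!v!}$, the witness reduces $\li!COEROv!$ to some $\li!&(k',@v')&!$, and \textsc{A-C$_{1}$} yields $\step*{\li!r k' @v'! ~> \li!v!}$. The product rule is a plain congruence: split $\li!v! = \li!(v_1,v_2)!$ and $\li!Ov! = \li!(Ov_1,Ov_2)!$, observe $\li!COEROv!$ reduces to $\li!(Ov_1',Ov_2')!$ with $\li!COER_1 Ov_1!$ reaching $\li!Ov_1'!$ and $\li!COER_2 Ov_2!$ reaching $\li!Ov_2'!$, and apply the induction hypotheses componentwise.

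The substantive case is the contravariant function rule \coercible{\li!theta_1->theta_2! ~> \li!theta_1'->theta_2'!} with premises \coercible{\li!theta_1'! ~> \li!theta_1!} and \coercible{\li!theta_2! ~> \li!theta_2'!}. Writing $\li!v! = \li!?z=>e!$ and $\li!Ov! = \li!?z=>Oe!$, the witness reduces $\li!COEROv!$ to $\li!Ov'! = \li!?y=>COER_2(Ov (COER_1y))!$. To establish $(\li!v!,\li!Ov'!) \in \interpV{\li!theta_1'->theta_2'!}$, fix $i < n$ and $(\li!u!,\li!Ou!) \in \interpV[i]{\li!theta_1'!}$; the goal is $(\li![u/z]e!,\li!COER_2(Ov (COER_1 Ou))!) \in \interpE[i]{\li!theta_2'!}$. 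Unfolding $\interpE[i]{\cdot}$, suppose the right-hand side reaches a value $\li!Ow'!$ in $j < i$ steps; the evaluation contexts split this run into three successive phases, each of length $< i$: $\li!COER_1 Ou!$ reaches some $\li!Ou'!$, the application $\li!Ov Ou'!$ (which reduces in one step to $\li![Ou'/z]Oe!$) reaches some $\li!Ow!$, and $\li!COER_2 Ow!$ reaches $\li!Ow'!$. The induction hypothesis for \coercible{\li!theta_1'! ~> \li!theta_1!} at index $i$ converts $\li!Ou!$ to $\li!Ou'!$, giving $(\li!u!,\li!Ou'!) \in \interpV[i]{\li!theta_1!}$; plugging this into the function-relation membership of $(\li!v!,\li!Ov!)$ (valid since $i < n$) yields $(\li![u/z]e!,\li![Ou'/z]Oe!) \in \interpE[i]{\li!theta_2!}$, hence a value $\li!w!$ with $\step*{\li![u/z]e! ~> \li!w!}$ and $(\li!w!,\li!Ow!) \in \interpV{\li!theta_2!}$ at an index $\ge i-j$; the induction hypothesis for \coercible{\li!theta_2! ~> \li!theta_2'!} then converts $\li!Ow!$ to $\li!Ow'!$, giving $(\li!w!,\li!Ow'!) \in \interpV{\li!theta_2'!}$ at an index $\ge i-j$. \Cref{ap:thm:anti-mono} lowers this to $\interpV[i-j]{\li!theta_2'!}$, which is exactly what $\interpE[i]{\li!theta_2'!}$ demands of the witness $\li!w!$ for $\li![u/z]e!$.

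I expect the function case to be the only genuine obstacle. It is the sole point where the argument steps out of the value relation into the expression relation and must track step indices, and it chains three separate refinement facts --- for the argument coercion, the body of the lifted function, and the result coercion --- each of which only offers a ``if the target terminates then the source terminates'' guarantee; so the decomposition of the target run into phases and the accompanying index arithmetic (closed off by \Cref{ap:thm:anti-mono}) must be done with care. The remaining cases amount to ``reduce the witness, cite one axiom'' and carry no surprises.
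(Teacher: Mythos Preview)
Your proposal is correct and follows essentially the same route as the paper's proof: induction on the coercibility derivation, dispatching the $n=0$ case by totality of \interpV[0]{\cdot}, handling the atomic cases by unfolding the witness and citing the appropriate structure axioms (\textsc{A-O$_1$}/\textsc{A-O$_3$} for \li!T! $\rightarrowtail$ \li!&@T!, \textsc{A-C$_1$} for \li!&@T! $\rightarrowtail$ \li!&@T'!), and treating the function case by decomposing the target run into argument-coercion, application, and result-coercion phases, chaining the two induction hypotheses with the function-relation assumption on $(\li!v!,\li!Ov!)$ and closing with \Cref{ap:thm:anti-mono}. The paper tracks the phase lengths explicitly ($j = j_1 + j_2 + j_3 + 1$, landing at index $i - j_2$ before lowering), whereas you summarize with ``an index $\ge i - j$'', but the argument is the same.
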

\begin{proof}
  By induction on the derivation of coercibility. We only consider $n > 0$ since
  it is otherwise trivial. The cases on identity coercion and boolean coercion
  are trivial. The case on \coercible{\li|&@T| ~> \li|&@T'|} is immediate from
  \textsc{A-C$_1$}. The case on product type is routine, similar to the case of
  function type.

  Case on \coercible{\li!T! ~> \li!&@T!}: suppose \fmath{(\li!v!, \li!Ov!) \in
    \interpV{\li!T!}}, i.e., \fmath{\li!v! = \li!Ov!} by definition, and
  \step*{\li!COEROv! = \li!COERv! ~> \li!Ov'!}. We have the following trace:

  {\footnotesize\setlength{\abovedisplayskip}{0pt}
    \begin{align*}
      \li!COERv!%
      &\steparrow \li!\&(VIEW v,s (VIEW v) v)\&! \\
      &\steparrow* \li!\&(k,s k v)\&! \\
      &\steparrow* \li!\&(k,@v)\&! = \li!Ov'!
    \end{align*}
  }%
  with \step*{\li!VIEW v! ~> \li!k!} and \step*{\li!s k v! ~> \li!@v!}. Knowing
  \fmath{\li!v! \hasview \li!k!} by \textsc{A-O$_3$}, we have \step*{\li!r k @v!
    ~> \li!v!} by \textsc{A-O$_1$}. But that is \fmath{(\li!v!, \li!&(k,@v)&!)
    \in \interpV{\li!&@T!}}, as desired.

  Case on \coercible{\li!theta_1->theta_2! ~> \li!theta_1'->theta_2'!}: by
  assumption,
  \begin{itemize}
    \item \fmath{\li!v! = \li!?y=>e!} for some \lstinline|e|
    \item \fmath{\li!Ov! = \li!?y=>Oe!} for some \lstinline|Oe|
  \end{itemize}
  Suppose \step*{\li!COEROv! ~> \li!Ov'!}, i.e., \fmath{\li!Ov'! =
    \li!?y=>COER_2(Ov (COER_1y))!}. We want to show \fmath{(\li!v!, \li!Ov'!)
    \in \interpV{\li!theta_1'->theta_2'!}}. Towards this goal fix $i < n$ and
  \fmath{(\li!v_1!, \li!Ov_1'!) \in \interpV[i]{\li!theta_1'!}}. Now it suffices
  to show \fmath{(\li![v_1/y]e!, \li!COER_2(Ov (COER_1Ov_1'))!) \in
    \interpE[i]{\li!theta_2'!}}. Suppose $j < i$. We have the following trace:

  {\footnotesize\setlength{\abovedisplayskip}{0pt}
    \begin{align*}
      \li!COER_2(Ov (COER_1Ov_1'))!%
      &\steparrow[j_1] \li!COER_2(Ov Ov_1)! \\
      &\steparrow[1] \li!COER_2([Ov_1/y]Oe)! \\
      &\steparrow[j_2] \li!COER_2Ov_2! \\
      &\steparrow[j_3] \li!Ov_2'!
    \end{align*}
  }%
  where $j = j_1 + j_2 + j_3 + 1$, with \step[j_1]{\li!COER_1Ov_1'! ~>
    \li!Ov_1!} and \step[j_2]{\li![Ov_1/y]Oe! ~> \li!Ov_2!}. By induction
  hypothesis, we get \fmath{(\li!v_1!, \li!Ov_1!) \in
    \interpV[i]{\li!theta_1!}}. It follows, by assumption, that
  \fmath{(\li![v_1/y]e!, \li![Ov_1/y]Oe!) \in \interpE[i]{\li!theta_2!}}. Hence
  \step*{\li![v_1/y]e! ~> \li!v_2!} for some \lstinline|v_2| such that
  \fmath{(\li!v_2!, \li!Ov_2!) \in \interpV[i-j_2]{\li!theta_2!}}. It then
  follows by induction hypothesis that \fmath{(\li!v_2!, \li!Ov_2'!) \in
    \interpV[i-j_2]{\li!theta_2'!}}, which concludes the proof by
  \Cref{ap:thm:anti-mono}.
\end{proof}

\begin{theorem}[Correctness of declarative lifting of expressions]
  Suppose\/ \liftR{\sctx; \lctx; \gctx; \tctx |- \li!e! :: \li!theta! |>
    \li!Oe!} and\/ \lctxvalid[n]{\lctx}. Given a substitution\/ \fmath{\subst
    \in \interpG{\tctx}}, we have\/ \fmath{(\subst_1(\li!e!), \subst_2(\li!Oe!))
    \in \interpE{\li!theta!}}.
\end{theorem}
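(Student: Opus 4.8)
The plan is to prove the statement by a single induction on the derivation of the declarative lifting judgment \liftR{\sctx; \lctx; \gctx; \tctx |- \li!e! :: \li!theta! |> \li!Oe!}, handling each rule of \Cref{ap:fig:liftingR} as its own compatibility lemma and establishing the goal \fmath{(\subst_1(\li!e!), \subst_2(\li!Oe!)) \in \interpE{\li!theta!}}; this is the usual ``fundamental lemma'' shape for a step-indexed binary logical relation, and the closed-term version (\Cref{thm:correct-closed-exp}) falls out by instantiating \tctx{} and \subst{} with the empty context and empty substitution. Every well-typedness obligation hidden in the definitions of \interpV{\li!DOT!} and \interpE{\li!DOT!} is discharged by combining regularity of declarative lifting (\Cref{ap:thm:liftingR-reg}) with the multi-substitution lemma (\Cref{ap:thm:msubst}), so I would suppress those obligations throughout. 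I would also assume $n > 0$ at the outset, since \interpV[0]{\li!theta!} and \interpE[0]{\li!theta!} are total and make the goal trivial otherwise.

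The leaf and congruence cases are routine. For \textsc{L-Unit} and \textsc{L-Lit} both substituted terms are equal values, hence related in \interpV{\li!DOT!} and so in \interpE{\li!DOT!} after unfolding; for \textsc{L-Var} the related pair is read directly off \subst \in \interpG{\tctx}; for \textsc{L-Fun} the required pair is literally an entry of \lctx, so \lctxvalid[n]{\lctx} closes it. For \textsc{L-Abs} I would unfold the value interpretation of \li!theta_1->theta_2!: given $i < n$ and a pair in \interpV[i]{\li!theta_1!}, downcast \subst{} to index $i$ with \Cref{ap:thm:anti-mono}, extend it on the bound variable so it lies in \interpG[i]{\extctx{\li!x! :: \li!theta_1!}}, and apply the inductive hypothesis at $i$. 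The cases \textsc{L-App}, \textsc{L-Let}, \textsc{L-Pair}, \textsc{L-Proj}, \textsc{L-If$_1$}, \textsc{L-Ctor$_1$}, \textsc{L-Match$_1$} all follow one template: apply the inductive hypothesis to evaluate the head subexpression (condition, scrutinee, first component) to a target value, consuming some $j$ steps; use the value interpretation to learn its shape (a shared boolean, a shared constructor via \Cref{ap:thm:public-type-interpV}, a pair, or a $\lambda$); perform the matching source reduction; then apply the inductive hypothesis to the continuation subderivation at index $n-j$ and reassemble with \Cref{ap:thm:anti-mono}.

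The structure-specific cases reduce to lemmas already in hand. \textsc{L-If$_2$}: after the inductive hypothesis evaluates the condition to an oblivious boolean \li![b]! and the two branches to values related to the \li!ite(b,DOT,DOT)! images in \interpV{\li!theta!}, correctness of mergeability (\Cref{ap:thm:mergeable-correct}) packages these into \fmath{(\li!ite(b,v_1,v_2)!, \li!Ov!) \in \interpV{\li!theta!}}, matching \textsc{S-If}/\textsc{S-Mux} on the source side. \textsc{L-Ctor$_2$}: an inductive hypothesis on the constructor argument followed by a direct appeal to axiom \textsc{A-I$_1$} of the intro-structure in \sctxintro. \textsc{L-Coerce}: a direct appeal to correctness of coercibility (\Cref{ap:thm:coercible-correct}). \textsc{L-Match$_2$} is the delicate case: the inductive hypothesis on the discriminee first evaluates \subst_1(\li!e_0!) and \subst_2(\li!Oe_0!) to values related in \interpV{\li!&@T!}, say \li!v_0! and \li!&(k,@v)&!, whose membership unfolds to \step*{\li!r k @v! ~> \li!v_0!}; canonical forms force \li!v_0! to be some \li!C_i v_i!; the inductive hypotheses on the branch derivations (extending \subst{} on the pattern variable as in \textsc{L-Abs}) show each source branch abstraction and its lifted counterpart are related in \interpV{\li!theta_i->alpha!}; axiom \textsc{A-E$_1$} of the elim-structure in \sctxelim{} then delivers that \li![v_i/x]e_i! is related in \interpE{\li!alpha!} to the elim method applied to \li!&(k,@v)&! and the lifted branches; and prefixing the \textsc{S-Match} step on the source side together with the discriminee-evaluation steps on the target side closes the case.

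I expect \textsc{L-Match$_2$} to be the principal obstacle: it is the only point where the purely semantic content of the relation (retraction equivalence, packed into \interpV{\li!&@T!}) must be bridged to the syntactic behavior demanded by \textsc{A-E$_1$}, while simultaneously threading step indices through a nested evaluation --- the discriminee, then the elim-method call, then the selected branch --- and appealing to canonical forms to identify the underlying ADT value; \textsc{L-If$_2$} is a milder instance of the same pattern. Beyond that, the difficulty is not conceptual but bookkeeping: each \interpV{\li!DOT!}/\interpE{\li!DOT!} membership must be taken at exactly the index for which \Cref{ap:thm:anti-mono} suffices to stitch the case together, a concern that recurs across \textsc{L-App}, \textsc{L-Let}, and the \textsc{L-Match} and \textsc{L-If} rules.
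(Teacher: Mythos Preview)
Your proposal is correct and follows essentially the same approach as the paper: a single induction on the lifting derivation, with the leaf and congruence cases handled by unfolding the step-indexed relation and threading indices via anti-monotonicity, and the structure-specific cases (\textsc{L-If$_2$}, \textsc{L-Ctor$_2$}, \textsc{L-Match$_2$}, \textsc{L-Coerce}) closed by exactly the auxiliary lemmas and axioms you name. The paper's write-up is terser in places (e.g., \textsc{L-Match$_2$} is dispatched in one line by appeal to \textsc{A-E$_1$}) and cites preservation alongside regularity for the well-typedness obligations, but there is no substantive divergence.
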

\begin{proof}
  By induction on the derivation of the declarative lifting judgment. Note that
  we do not fix step $n$ in the induction. That means we can instantiate it when
  apply induction hypothesis, but we also need to discharge \lctxvalid[n]{\lctx}
  for the $n$ we pick. We will not explicitly show this side-condition for
  brevity, because it is simply a consequence of \Cref{ap:thm:anti-mono} as long
  as we pick the same or a smaller step. The well-typedness side-conditions in
  logical relations are omitted, because they are trivial from
  \Cref{ap:thm:liftingR-reg} and preservation theorem. In addition, we only
  consider step $n > 0$ because the cases when $n = 0$ are always trivial.

  The cases \textsc{L-Unit}, \textsc{L-Lit}, \textsc{L-Var} and \textsc{L-Fun}
  are trivial.

  Case \textsc{T-Abs}: we need to show
  \fmath{(\li!?x:&[theta]&=>!\subst_1(\li!e!), \li!?x:theta=>!\subst_2(\li!Oe!))
    \in \interpE{\li!theta_1->theta_2!}}. Suppose $i < n$. Because lambda
  abstraction can not take step, $i = 0$, and it suffices to show
  \fmath{(\li!?x:&[theta]&=>!\subst_1(\li!e!), \li!?x:theta=>!\subst_2(\li!Oe!))
    \in \interpV{\li!theta_1->theta_2!}}. To this end fix $i < n$ and suppose
  that \fmath{(\li!v!, \li!Ov!) \in \interpV[i]{\li!theta_1!}} for some \li!v!
  and \li!Ov!. We know \fmath{\subst[\li!x! \mapsto (\li!v!, \li!Ov!)] \in
    \interpG[i]{\extctx{\li!x! :: \li!theta_1!}}}, by assumption and
  \Cref{ap:thm:anti-mono}. Now we specialize the induction hypothesis with
  $n = i$ and substitution \fmath{\subst[\li!x! \mapsto (\li!v!, \li!Ov!)]} to
  get \fmath{(\subst_1[\li!x! \mapsto \li!v!](\li!e!), \subst_2[\li!x! \mapsto
    \li!Ov!](\li!Oe!)) = (\li![v/x]!\subst_1(\li!e!),
    \li![Ov/x]!\subst_2(\li!Oe!)) \in \interpE[i]{\li!theta_2!}}, as required.

  Case \textsc{T-App}: we need to show %
  \fmath{(\li!$\subst_1($e_2$)$\ $\subst_1($e_1$)$!,%
    \li!$\subst_2($Oe_2$)$\ $\subst_2($Oe_1$)$!) \in \interpE{\li!theta_2!}}.
  Suppose that $i < n$, and %
  \step[i]{\li!$\subst_2($Oe_2$)$\ $\subst_2($Oe_1$)$! ~> \li!Ov!}. By the
  semantics definition, we have the following reduction trace:

  {\footnotesize\setlength{\abovedisplayskip}{0pt}
    \begin{align*}
      \li!$\subst_2($Oe_2$)$\ $\subst_2($Oe_1$)$!%
      &\steparrow[i_1] \li!$\subst_2($Oe_2$)$ Ov_1! \\
      &\steparrow[i_2] \li!Ov_2 Ov_1! \\
      &\steparrow[1] \li![Ov_1/x]Oe_2'! \\
      &\steparrow[i_3] \li!Ov!
    \end{align*}
  }%
  where \li!Ov_2! is \li!?x=>Oe_2'! for some \li!Oe_2'!, and
  $i = i_1 + i_2 + i_3 + 1$, with \step[i_1]{\subst_2(\li!Oe_1!) ~> \li!Ov_1!}
  and \step[i_2]{\subst_2(\li!Oe_2!) ~> \li!Ov_2!}. Instantiating the two
  induction hypotheses with step $n$ and \subst, we get:
  \begin{itemize}
    \item \fmath{(\subst_1(\li!e_2!), \subst_2(\li!Oe_2!)) \in
      \interpE{\li!theta_1->theta_2!}}
    \item \fmath{(\subst_1(\li!e_1!), \subst_2(\li!Oe_1!)) \in
      \interpE{\li!theta_1!}}
  \end{itemize}
  It follows that:
  \begin{itemize}
    \item \step*{\subst_1(\li!e_2!) ~> \li!v_2!} for some \lstinline|v_2|
    \item \fmath{(\li!v_2!, \li!Ov_2!) \in \interpV[n-i_2]{\li!theta_1->theta_2!}}
    \item \step*{\subst_1(\li!e_1!) ~> \li!v_1!} for some \lstinline|v_1|
    \item \fmath{(\li!v_1!, \li!Ov_1!) \in
      \interpV[n-i_1]{\li!theta_1!}}
  \end{itemize}
  where \li!v_2! is \li!?x=>e_2'! for some \li!e_2'!. It follows that
  \fmath{(\li![v_1/x]e_2'!, \li![Ov_1/x]Oe_2'!) \in
    \interpE[n-i_1-i_2-1]{\li!theta_2!}}, by specializing the value
  interpretation of function type to $n-i_1-i_2-1$. As $i_3 < n - i_1 -i_2 - 1$,
  we know \step*{\li![v_1/x]e_2'! ~> \li!v!} for some \lstinline|v|, such that
  \fmath{(\li!v!, \li!Ov!) \in \interpV[n-i_1-i_2-i_3-1]{\li!theta_2!}}.
  Therefore,

  {\footnotesize\setlength{\abovedisplayskip}{0pt}
    \begin{align*}
      \li!$\subst_1($e_2$)$\ $\subst_1($e_1$)$!%
      &\steparrow* \li!$\subst_1($e_2$)$ v_1! \\
      &\steparrow* \li!v_2 v_1! \\
      &\steparrow \li![v_1/x]e_2'! \\
      &\steparrow* \li!v!
    \end{align*}
  }%
  and \fmath{(\li!v!, \li!Ov!) \in \interpV[n-i]{\li!theta_2!}}, as desired.

  Case \textsc{L-Let}: observe the trace of the lifted expression:

  {\footnotesize\setlength{\abovedisplayskip}{0pt}
    \begin{align*}
      \li!let x =\ $\subst_2($Oe_1$)$ in\ $\subst_2($Oe_2$)$!%
      &\steparrow[i_1] \li!let x = Ov_1 in\ $\subst_2($Oe_2$)$! \\
      &\steparrow[1] \li![Ov_1/x]$\subst_2($Oe_2$)$! \\
      &\steparrow[i_2] \li!Ov!
    \end{align*}
  }%
  The rest of the proof is similar to \textsc{L-App} and \textsc{L-Abs}, with
  the induction hypothesis of the \li!let!-body specialized to step $n - i_1$.

  Case \textsc{L-Pair}: similarly with the trace:

  {\footnotesize\setlength{\abovedisplayskip}{0pt}
    \begin{align*}
      \li!($\subst_2($Oe_1$)$,$\subst_2($Oe_2$)$)!%
      &\steparrow[i_1] \li!(Ov_1,$\subst_2($Oe_2$)$)! \\
      &\steparrow[i_2] \li!(Ov_1,Ov_2)!
    \end{align*}
  }

  Case \textsc{L-Proj}: similarly with the trace:

  {\footnotesize\setlength{\abovedisplayskip}{0pt}
    \begin{align*}
      \li!pi_b\ $\subst_2($Oe$)$!%
      &\steparrow[i_1] \li!pi_b (Ov_1,Ov_2)! \\
      &\steparrow[1] \li!ite(b,Ov_1,Ov_2)!
    \end{align*}
  }

  Case \textsc{L-If$_1$}: similarly with the trace:

  {\footnotesize\setlength{\abovedisplayskip}{0pt}
    \begin{align*}
      \li!if\ $\subst_2($Oe_0$)$ then\ $\subst_2($Oe_1$)$ else\ $\subst_2($Oe_2$)$!%
      &\steparrow[i_0]%
        \li!if b then\ $\subst_2($Oe_1$)$ else\ $\subst_2($Oe_2$)$! \\
      &\steparrow[1] \li!ite(b,$\subst_2($Oe_1$)$,$\subst_2($Oe_2$)$)! \\
      &\steparrow[j] \li!Ov!
    \end{align*}
  }

  Case \textsc{L-Ctor$_1$}: similarly with the trace %
  \step[i]{\li!C_i\ $\subst_2($Oe$)$! ~> \li!C_i Ov!}. We also need to apply
  \Cref{ap:thm:public-type-interpV} to complete the proof.

  Case \textsc{L-Match$_1$}: similarly with the trace:

  {\footnotesize\setlength{\abovedisplayskip}{0pt}
    \begin{align*}
      \li!match\ $\subst_2($Oe_0$)$ with\ !\overline{\li!C x=>$\subst_2($Oe$)$!}%
      &\steparrow[j_0]%
        \li!match C_i Ov_0 with\ !\overline{\li!C x=>$\subst_2($Oe$)$!} \\
      &\steparrow[1] \li![Ov_0/x]$\subst_2($Oe_i$)$! \\
      &\steparrow[j_1] \li!Ov!
    \end{align*}
  }%
  Similar to \textsc{L-Ctor$_1$}, \Cref{ap:thm:public-type-interpV} is used.

  Case \textsc{L-If$_2$}: suppose $i < n$. We have the following trace:

  {\footnotesize\setlength{\abovedisplayskip}{0pt}
    \begin{align*}
      \li!@ite\ $\subst_2($Oe_0$)$\ $\subst_2($Oe_1$)$\ $\subst_2($Oe_2$)$!%
      &\steparrow[i_2]%
        \li!@ite\ $\subst_2($Oe_0$)$\ $\subst_2($Oe_1$)$ Ov_2! \\
      &\steparrow[i_1] \li!@ite\ $\subst_2($Oe_0$)$ Ov_1 Ov_2! \\
      &\steparrow[i_0] \li!@ite [b] Ov_1 Ov_2! \\
      &\steparrow[i_3] \li!Ov!
    \end{align*}
  }%
  where $i = i_0 + i_1 + i_2 + i_3$, with \step[i_0]{\subst_2(\li!Oe_0!) ~>
    \li![b]!}, \step[i_1]{\subst_2(\li!Oe_1!) ~> \li!Ov_1!} and
  \step[i_2]{\subst_2(\li!Oe_2!) ~> \li!Ov_2!}. By induction hypothesis, it
  follows that:
  \begin{itemize}
    \item \step*{\subst_1(\li!e_0!) ~> b}
    \item \step*{\subst_1(\li!e_1!) ~> \li!v_1!} for some \lstinline|v_1|
    \item \fmath{(\li!v_1!, \li!Ov_1!) \in \interpV[n-i_1]{\li!theta!}}
    \item \step*{\subst_1(\li!e_2!) ~> \li!v_2!} for some \lstinline|v_2|
    \item \fmath{(\li!v_2!, \li!Ov_2!) \in \interpV[n-i_2]{\li!theta!}}
  \end{itemize}
  Hence we have:

  {\footnotesize\setlength{\abovedisplayskip}{0pt}
    \begin{align*}
      \li!if\ $\subst_1($e_0$)$ then\ $\subst_1($e_1$)$ else\ $\subst_1($e_2$)$!%
      &\steparrow*%
        \li!if b then\ $\subst_1($e_1$)$ else\ $\subst_1($e_2$)$! \\
      &\steparrow \li!ite(b,$\subst_1($e_1$)$,$\subst_1($e_2$)$)! \\
      &\steparrow* \li!ite(b,v_1,v_2)!
    \end{align*}
  }%
  and \fmath{(\li!ite(b,v_1,v_2)!, \li!Ov!) \in \interpV[n-i]{\li!theta!}} by
  \Cref{ap:thm:mergeable-correct}, as required.

  Case \textsc{L-Ctor$_2$}: similarly with the trace:

  {\footnotesize\setlength{\abovedisplayskip}{0pt}
    \begin{align*}
      \li!@C_i\ $\subst_2($Oe$)$!%
      &\steparrow[j_1] \li!@C_i Ov_i! \\
      &\steparrow[j_2] \li!\&(k,@v)\&! = \li!Ov!
    \end{align*}
  }%
  We complete the proof by \textsc{A-I$_1$}.

  Case \textsc{L-Match$_2$}: similarly with the trace:

  {\footnotesize\setlength{\abovedisplayskip}{0pt}
    \begin{align*}
      \li!@Imatch\ $\subst_2($Oe_0$)$\ !\overline{\li!(?x=>$\subst_2($Oe$)$)!}%
      &\steparrow[j_0]%
        \li!@Imatch Ov_0\ !\overline{\li!(?x=>$\subst_2($Oe$)$)!} \\
      &\steparrow[j_1] \li!Ov!
    \end{align*}
  }%
  We complete the proof by \textsc{A-E$_1$}, whose assumptions are discharged by
  induction hypothesis.

  Case \textsc{L-Coerce}: the goal is \fmath{(\subst_1(\li!e!),
    \li!COER$\subst_2($Oe$)$!) \in \interpE{\li!theta'!}}. Suppose $i < n$. We
  have the trace:

  {\footnotesize\setlength{\abovedisplayskip}{0pt}
    \begin{align*}
      \li!COER$\subst_2($Oe$)$!%
      &\steparrow[i_1] \li!COEROv! \\
      &\steparrow[i_2] \li!Ov'!
    \end{align*}
  }%
  where $i = i_1 + i_2$, with \step[i_1]{\subst_2(\li!Oe!) ~> \li!Ov!}. It
  follows by induction hypothesis that \step*{\subst_1(\li!e!) ~> \li!v!} for
  some \lstinline|v| such that \fmath{(\li!v!, \li!Ov!) \in
    \interpV[n-i_1]{\li!theta!}}. Then, by \Cref{ap:thm:coercible-correct},
  \fmath{(\li!v!, \li!Ov'!) \in \interpV[n-i]{\li!theta'!}}, as desired.
\end{proof}

\begin{corollary}[Correctness of declarative lifting of closed terms]
  \label{ap:thm:correct-closed-exp}
  Suppose\/ \liftR{\sctx; \lctx; \gctx; \empctx |- \li!e! :: \li!theta! |>
    \li!Oe!} and\/ \lctxvalid[n]{\lctx}. We have \fmath{(\li!e!, \li!Oe!) \in
    \interpE{\li!theta!}}.
\end{corollary}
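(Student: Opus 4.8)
The plan is to obtain this corollary as an immediate instantiation of the preceding, more general theorem (Correctness of declarative lifting of expressions), specialized to the empty typing context. That theorem establishes that whenever \liftR{\sctx; \lctx; \gctx; \tctx |- \li!e! :: \li!theta! |> \li!Oe!} and \lctxvalid[n]{\lctx}, then for every \fmath{\subst \in \interpG{\tctx}} we have \fmath{(\subst_1(\li!e!), \subst_2(\li!Oe!)) \in \interpE{\li!theta!}}. Taking \fmath{\tctx = \empctx} turns the hypotheses of that theorem into exactly the hypotheses of this corollary, so the entire argument reduces to discharging the quantifier over substitutions in the degenerate case.

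First I would appeal to the definition of the typing-context interpretation, which gives \fmath{\interpG{\empctx} = \Set{\emptyset}}. Hence the empty substitution \subst{} is an element of \interpG{\empctx} (indeed the only one), and it trivially meets the membership condition required by the theorem. Instantiating the theorem with \fmath{\tctx = \empctx} and this \subst{} then yields \fmath{(\subst_1(\li!e!), \subst_2(\li!Oe!)) \in \interpE{\li!theta!}}. Next I would discharge the substitutions: because the lifting judgment holds under the empty local context, both \lstinline!e! and \lstinline!Oe! are closed terms (consistent with the regularity theorem \Cref{ap:thm:liftingR-reg}, which here types them under \empctx{}), and applying the empty substitution to a closed term is the identity, so \fmath{\subst_1(\li!e!) = \li!e!} and \fmath{\subst_2(\li!Oe!) = \li!Oe!}. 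Rewriting the theorem's conclusion along these two equalities gives exactly \fmath{(\li!e!, \li!Oe!) \in \interpE{\li!theta!}}, as required.

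There is essentially no obstacle at this stage: all of the substantive content—the step-indexed induction on the lifting derivation, together with the correctness lemmas for mergeability (\Cref{ap:thm:mergeable-correct}) and coercibility (\Cref{ap:thm:coercible-correct})—has already been carried out in the open-term theorem. The generalization to arbitrary \fmath{\subst \in \interpG{\tctx}} was introduced precisely so that the closed-term statement would fall out by this trivial specialization, which is why it is phrased as a corollary rather than proved from scratch. The only two points worth stating carefully are that \interpG{\empctx} is the singleton \fmath{\Set{\emptyset}} and that the empty substitution acts as the identity on the closed terms \lstinline!e! and \lstinline!Oe!.
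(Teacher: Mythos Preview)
Your proposal is correct and matches the paper's approach exactly: the paper states this result as a bare corollary of the open-term theorem with no proof, and your specialization to \fmath{\tctx = \empctx} together with the observations that \fmath{\interpG{\empctx} = \Set{\emptyset}} and that the empty substitution acts as the identity on the closed terms \lstinline!e! and \lstinline!Oe! is precisely the intended (and only reasonable) route.
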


\begin{theorem}[Correctness of declarative lifting]
  \lctxderiv{\lctx} implies \lctxvalid{\lctx}.
\end{theorem}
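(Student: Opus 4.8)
The plan is to prove the stronger, step-indexed statement that $\lctxvalid[n]{\lctx}$ holds for every $n\in\NN$; since $\lctxvalid{\lctx}$ is by definition ``$\lctxvalid[n]{\lctx}$ for all $n$'', this is exactly what is needed. I would proceed by induction on $n$. To begin, I would unpack $\lctxderiv{\lctx}$ once: for every entry $\hastype{\li!x! :: \li!theta! |> \li!Ox!}\in\lctx$ there are global definitions $\li!fn x:&[theta]& = e!\in\gctx$ and $\li!fn Ox:theta = Oe!\in\gctx$ with $\liftR{\sctx; \lctx; \gctx; \empctx |- \li!e! :: \li!theta! |> \li!Oe!}$; and since derivability entails well-typedness of $\lctx$, rule \textsc{T-Fun} gives $\type{\empctx |- \li!x! :: \li!&[theta]&!}$ and $\type{\empctx |- \li!Ox! :: \li!theta!}$.

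For the base case $n=0$ I would appeal to the remark following the definition of the logical relation, namely that $\interpE[0]{\li!theta!}$ is the total relation on closed, well-typed terms of the relevant type; combined with the well-typedness of $\li!x!$ and $\li!Ox!$ just noted, this gives $(\li!x!, \li!Ox!)\in\interpE[0]{\li!theta!}$ for every entry, i.e.\ $\lctxvalid[0]{\lctx}$.

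For the inductive step, assume $\lctxvalid[n]{\lctx}$; I want $\lctxvalid[n+1]{\lctx}$. Fix an entry $\hastype{\li!x! :: \li!theta! |> \li!Ox!}\in\lctx$. Applying \Cref{ap:thm:correct-closed-exp} to the derivation $\liftR{\sctx; \lctx; \gctx; \empctx |- \li!e! :: \li!theta! |> \li!Oe!}$ with the induction hypothesis $\lctxvalid[n]{\lctx}$ yields $(\li!e!, \li!Oe!)\in\interpE[n]{\li!theta!}$. To conclude $(\li!x!, \li!Ox!)\in\interpE[n+1]{\li!theta!}$, I would unfold the expression interpretation: given $i<n+1$ and a value $\li!v'!$ with $\step[i]{\li!Ox! ~> \li!v'!}$, note that $\li!Ox!$ is not a value, so $i\ge 1$ and, by \textsc{S-Fun}, its first reduction step is $\li!Ox! \steparrow \li!Oe!$; hence $\step[i-1]{\li!Oe! ~> \li!v'!}$ with $i-1<n$. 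Then $(\li!e!, \li!Oe!)\in\interpE[n]{\li!theta!}$ supplies a value $\li!v!$ with $\step*{\li!e! ~> \li!v!}$ and $(\li!v!, \li!v'!)\in\interpV[n+1-i]{\li!theta!}$ (using $n-(i-1)=n+1-i$); and because $\li!x! \steparrow \li!e!$ by \textsc{S-Fun}, also $\step*{\li!x! ~> \li!v!}$. This is precisely what $(\li!x!, \li!Ox!)\in\interpE[n+1]{\li!theta!}$ requires, so $\lctxvalid[n+1]{\lctx}$, closing the induction.

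This is a standard step-indexed induction, and the one point that needs genuine care is the step-index arithmetic in the inductive step: the argument goes through because invoking \Cref{ap:thm:correct-closed-exp} at index $n$ (rather than $n+1$) already suffices, precisely because unfolding the global definitions of $\li!Ox!$ and $\li!x!$ consumes exactly the one reduction step by which the conclusion's index must drop. I would also explicitly dispatch the boundary case $i=0$ (vacuous, as $\li!Ox!$ is reducible and hence not a value) and confirm there is no hidden circularity --- the recursion truly moves from $n$ to $n+1$, with the correctness corollary applied only at the strictly smaller index. Anti-monotonicity (\Cref{ap:thm:anti-mono}) is on hand in case any incidental index weakening is needed, although as sketched none is.
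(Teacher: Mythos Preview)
Your proposal is correct and follows essentially the same approach as the paper's own proof: induction on the step index $n$, with the base case discharged by totality of $\interpE[0]{\li!theta!}$ and the inductive step by applying \Cref{ap:thm:correct-closed-exp} at index $n$ and absorbing the single \textsc{S-Fun} reduction on each side. Your version is simply more explicit about the step-index arithmetic and the boundary case $i=0$, which the paper leaves implicit in the phrase ``it is easy to see it suffices.''
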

\begin{proof}
  We need to show \lctxvalid[n]{\lctx} for any $n$, i.e., \fmath{(\li!x!,
    \li!Ox!) \in \interpE{\li!theta!}} for any \hastype{\li!x! :: \li!theta! |>
    \li!Ox!}. The proof is a straightforward induction on $n$.

  The base case is trivial, as \interpE[0]{\li!theta!} is a total relation as
  long as \li!x! and \li!Ox! are well-typed, which is immediate from
  \lctxderiv{\lctx}.

  Suppose \lctxvalid[n]{\lctx}. We need to show \fmath{(\li!x!, \li!Ox!) \in
    \interpE[n+1]{\li!theta!}}. From \lctxderiv{\lctx}, we know \li!x! and
  \li!Ox! are defined by \li!e! and \li!Oe! in \gctx, respectively. It is easy
  to see it suffices to prove \fmath{(\li!e!, \li!Oe!) \in
    \interpE{\li!theta!}}, because \li!x! and \li!Ox! take exactly one step to
  \li!e! and \li!Oe!. But that is immediate by \Cref{ap:thm:correct-closed-exp}.
\end{proof}

In the following theorem, we write \fmath{\subst(\tctx_2)} to obtain the second
projection of the typing context with the type variables substituted, i.e., each
\hastype{\li!x! :: \li!eta! ~ \li!X!} in \tctx{} is mapped to \hastype{\li!x! ::
  \subst(\li!X!)}. It may be counterintuitive that we do not
require \lstinline|theta| to be compatible with \lstinline|eta| or each pair in
the typing context to be compatible, but these side conditions are indeed not
needed; if the generated constraints (under empty context) are satisfiable, the
compatibility conditions should hold by construction.

\begin{theorem}[Soundness of algorithmic lifting of open terms]
  Suppose\/ \liftA{\gctx; \tctx |- \li!e! :: \li!eta! ~ \li!X! |> \li!Oe! ~>
    \cstrs}. Given a specification type \li!theta!, if\/ \cstrsat{\sctx; \lctx;
    \gctx; \subst |- \li!X! = \li!theta!, \cstrs}, then\/
  \fmath{\subst(\li!Oe!)} elaborates to an expression \li!Oe'!, such that\/
  \liftR{\sctx; \lctx; \gctx; \subst(\tctx_2) |- \li!e! :: \li!theta! |>
    \li!Oe'!}.
\end{theorem}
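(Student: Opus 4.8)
The plan is to prove this by induction on the derivation of the algorithmic lifting judgment \liftA{\gctx; \tctx |- \li!e! :: \li!eta! ~ \li!X! |> \li!Oe! ~> \cstrs}, keeping the specification type \li!theta! and the assignment \subst{} general so that the induction hypotheses can be instantiated at subderivations. Every case follows the same template: from \cstrsat{\sctx; \lctx; \gctx; \subst |- \li!X! = \li!theta!, \cstrs} we first read off what \subst{} must assign to \li!X! and to the fresh type variables the rule introduces (an equality constraint pins a variable exactly; a compatibility constraint \incls{\li!X! ~ \li!eta!} forces \fmath{\erase{\subst(\li!X!)} = \li!eta!}); then we use the macro-resolvability constraints to conclude that every macro occurring in \fmath{\subst(\li!Oe!)} elaborates via the rules of \Cref{ap:fig:ppx}; and finally we assemble the corresponding declarative derivation of \Cref{ap:fig:liftingR}, invoking the induction hypothesis on each subderivation with the restriction of \subst{} and the choice of target type dictated by the generated constraints. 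A fact used throughout, needed to discharge the well-typedness side conditions on the declarative side, is that the satisfied compatibility constraints attached to the entries of \tctx{} guarantee \fmath{\erase{\subst(\tctx_2)}} agrees with the source typing context, so together with \Cref{ap:thm:erase-eta-eq} the simple types all line up.

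First I would dispatch the easy cases. For \textsc{A-Unit}, \textsc{A-Lit}, \textsc{A-Pair} and \textsc{A-Proj}, the generated equalities pin \fmath{\subst(\li!X!)} completely and no nontrivial macro is emitted, so the matching declarative rule (\textsc{L-Unit}, \textsc{L-Lit}, \textsc{L-Pair}, \textsc{L-Proj}) applies directly. The binding cases \textsc{A-Abs}, \textsc{A-App} and \textsc{A-Let} are congruences: the equalities decompose \fmath{\subst(\li!X!)} into the expected arrow or product shape, the induction hypotheses produce the declarative subderivations after extending \fmath{\subst(\tctx_2)} with the appropriate bindings, and \textsc{L-Abs}/\textsc{L-App}/\textsc{L-Let} assemble the result.

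The interesting cases are those emitting macros. In \textsc{A-Var} the output \li!MCOER(X,X';x)! carries the constraint \li!MCOER(X,X')!; satisfying it yields \coercible{\subst(\li!X!) ~> \subst(\li!X'!) |> \li!COER!} for a witness \li!COER!, so the macro elaborates to \li!COERx! and the declarative derivation is \textsc{L-Var} (reading \hastype{\li!x! :: \subst(\li!X!)} off \fmath{\subst(\tctx_2)}) followed by \textsc{L-Coerce}. \textsc{A-Fun} is analogous but uses \li!Mx(X)!, whose satisfaction provides an entry \hastype{\li!x! :: \subst(\li!X!) |> \li!Ox!} in \lctx, so \li!Mx(X)! elaborates to \li!Ox! and \textsc{L-Fun} applies; that coercions are only inserted at variables is exactly why source programs are required to be in ANF, mirroring the fact that \textsc{L-Coerce} is the only non-congruence declarative rule. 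For \textsc{A-If}, \textsc{A-Ctor} and \textsc{A-Match} the algorithm emits \li!Mite!, \li!MC_i! or \li!Mmatch! together with the corresponding constraint, and the two elaboration alternatives of \Cref{ap:fig:ppx} line up one-for-one with the two declarative alternatives: a satisfied \li!Mite(X_0,X)! with \fmath{\subst(\li!X_0!) = \li!bool!} elaborates to an \li!if! and matches \textsc{L-If$_1$}, while with \fmath{\subst(\li!X_0!) = \li!@bool!} it forces \mergeable{\subst(\li!X!)} and elaborates to an \li!@ite! application matching \textsc{L-If$_2$}; likewise \li!MC_i! splits between \textsc{L-Ctor$_1$} and \textsc{L-Ctor$_2$}, and \li!Mmatch! between \textsc{L-Match$_1$} and \textsc{L-Match$_2$}, according to whether the relevant type variable is instantiated to an ADT name or a \tpsi-type, with the intro/elim-structure side conditions supplied by the macro constraint. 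The closed-term statement \Cref{thm:sound-algo} then follows by specializing \tctx{} to \empctx.

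I expect the main obstacle to be bookkeeping rather than conceptual difficulty. Two points require care: first, arguing that \emph{every} macro inside \fmath{\subst(\li!Oe!)} elaborates, not merely the top-level one, which needs the nested constraint sets (written \fmath{\cstrs_i}, \fmath{\cstrs_1}, \fmath{\cstrs_2} in the rules) to be satisfied by the same \subst{} so that the corresponding induction hypotheses apply — this relies on the compositionality of \cstrsat{\sctx; \lctx; \gctx; \subst |- \cstrs} over concatenation of constraint sets; second, checking that substitution and elaboration commute, i.e.\ that applying \subst{} to a composite \li!Oe! and then elaborating yields the same term as elaborating the pieces (the \li!Oe'! delivered by the induction hypotheses) and recombining them under the same surrounding construct, so that the declarative derivation can actually be built from the subderivations. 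Both follow directly from the definitions but would account for most of the length of a full proof; no step requires a genuinely new idea beyond those already present in \Cref{ap:thm:liftingR-reg} and the correctness development.
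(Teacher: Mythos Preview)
Your proposal is correct and follows essentially the same approach as the paper: induction on the algorithmic lifting derivation, with the macro-emitting cases (\textsc{A-Var}, \textsc{A-Fun}, \textsc{A-If}, \textsc{A-Ctor}, \textsc{A-Match}) discharged by inverting the satisfied macro constraint to pick the matching declarative rule, and the remaining cases handled as congruences. The paper's proof spells out \textsc{A-Var}, \textsc{A-Abs}, \textsc{A-App}, and \textsc{A-If} in detail and defers the rest as similar, exactly as you outline; your additional remarks about the compositionality of constraint satisfaction and the commutation of substitution with elaboration are left implicit in the paper but are indeed the routine bookkeeping that makes the induction go through.
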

\begin{proof}
  By induction on the derivation of the algorithmic lifting judgment. The cases
  of \textsc{A-Unit}, \textsc{A-Lit} and \textsc{A-Fun} are trivial. The cases
  of \textsc{A-Let}, \textsc{A-Pair} and \textsc{A-Proj} are similar to
  \textsc{A-Abs} and \textsc{A-App}, so we only show the proofs of these two
  cases. The cases of \textsc{A-Ctor} and \textsc{A-Match} are similar to
  \textsc{A-If} (and other cases), so we only show the proof of \textsc{A-If}.

  Case \textsc{A-Var}: suppose \cstrsat{\li!X'! = \li!theta!, \li!MCOER(X,X')!}.
  It follows that:
  \begin{itemize}
    \item \fmath{\li!theta! = \subst(\li!X'!)}
    \item \coercible{\subst(\li!X!) ~> \subst(\li!X'!) |> \li!COER!}
    \item \ppx{\li!MCOER($\subst($X$)$,$\subst($X'$)$;x)! |>
      \li!COERx!}
  \end{itemize}
  Therefore, \fmath{\subst(\li!MCOER(X,X';x)!) =
    \li!MCOER($\subst($X$)$,$\subst($X'$)$;x)!} elaborates to \lstinline!COERx!.
  Since \fmath{\hastype{\li!x! :: \li!eta! ~ \li!X!} \in \tctx} by assumption,
  \fmath{\hastype{\li!x! :: \subst(\li!X!)} \in \subst(\tctx_2)}. We can then
  apply \textsc{L-Coerce} and \textsc{L-Var} to derive \liftR{\subst(\tctx_2) |-
    \li!x! :: \subst(\li!X'!) |> \li!COERx!}.

  Case \textsc{A-Abs}: suppose \cstrsat{\li!X! = \li!theta!, \incls{\li!X_1! ~
      \li!eta_1!}, \incls{\li!X_2! ~ \li!eta_2!}, \li!X! = \li!X_1->X_2!,
    \cstrs}. It follows that:
  \begin{itemize}
    \item \fmath{\li!theta! = \subst(\li!X!) =
      \li!$\subst($X_1$)$->$\subst($X_2$)$!}
    \item \fmath{\li!&[$\subst($X_1$)$]&! = \li!eta_1!}
    \item \fmath{\li!&[$\subst($X_2$)$]&! = \li!eta_2!}
  \end{itemize}
  Because \cstrsat{\li!X_2! = \subst(\li!X_2!), \cstrs}, we have by induction
  hypothesis:
  \begin{itemize}
    \item \fmath{\subst(\li!Oe!)} elaborates to some \lstinline|Oe'|
    \item \liftR{\extctx[\subst(\tctx_2)]{\li!x! :: \subst(\li!X_1!)} |- \li!e!
      :: \subst(\li!X_2!) |> \li!Oe'!}
  \end{itemize}
  Therefore, \fmath{\subst(\li!?x:X_1=>Oe!) =
    \li!?x:$\subst($X_1$)$=>$\subst($Oe$)$!} elaborates
  to \lstinline|?x:$\subst($X_1$)$=>Oe'|, and \liftR{\subst(\tctx_2) |-
    \li!?x:eta_1=>e! :: \li!$\subst($X_1$)$->$\subst($X_2$)$! |>
    \li!?x:$\subst($X_1$)$=>Oe'!}, by \textsc{L-Abs}.

  Case \textsc{A-App}: suppose \cstrsat{\li!X_2! = \li!theta!, \incls{\li!X_1! ~
      \li!eta_1!}, \li!X! = \li!X_1->X_2!, \cstrs}. It follows that:
  \begin{itemize}
    \item \fmath{\li!theta! = \subst(\li!X_2!)}
    \item \fmath{\li!&[$\subst($X_1$)$]&! = \li!eta_1!}
    \item \fmath{\subst(\li!X!) = \li!$\subst($X_1$)$->$\subst($X_2$)$!}
  \end{itemize}
  Because \cstrsat{\li!X_1! = \subst(\li!X_1!), \cstrs}, we have by induction
  hypothesis:
  \begin{itemize}
    \item \fmath{\subst(\li!Oe_1!)} elaborates to some \lstinline|Oe_1'|
    \item \liftR{\subst(\tctx_2) |- \li!x_1! :: \subst(\li!X_1!) |> \li!Oe_1'!}
  \end{itemize}
  Therefore, \fmath{\subst(\li!x_2 Oe_1!) = \li!x_2\ $\subst($Oe_1$)$!}
  elaborates to \lstinline|x_2 Oe_1'|, and \liftR{\subst(\tctx_2) |-%
    \li!x_2 x_1! :: \subst(\li!X_2!) |> \li!x_2 Oe_1'!}, by \textsc{L-App} and
  \textsc{L-Var}.

  Case \textsc{A-If}: suppose \cstrsat{\li!X! = \li!theta!, \li!Mite(X_0,X)!,
    \cstrs_1, \cstrs_2}. We know \fmath{\subst(\li!X!) = \li!theta!}. Because
  \cstrsat{\li!X! = \subst(\li!X!), \cstrs_1} and \cstrsat{\li!X! =
    \subst(\li!X!), \cstrs_2}, we have by induction hypothesis:
  \begin{itemize}
    \item \fmath{\subst(\li!Oe_1!)} elaborates to some \lstinline|Oe_1'|
    \item \liftR{\subst(\tctx_2) |- \li!e_1! :: \subst(\li!X!) |> \li!Oe_1'!}
    \item \fmath{\subst(\li!Oe_2!)} elaborates to some \lstinline|Oe_2'|
    \item \liftR{\subst(\tctx_2) |- \li!e_2! :: \subst(\li!X!) |> \li!Oe_2'!}
  \end{itemize}
  Since \cstrsat{\li!Mite(X_0,X)!},
  \ppx{\li!Mite($\subst($X_0$)$,$\subst($X$)$;x_0,Oe_1',Oe_2')! |> \li!Oe!} for
  some \lstinline|Oe|. By inverting this judgment, we consider two cases. The
  first case has condition type \lstinline|bool|:
  \begin{itemize}
    \item \ppx{\li!Mite($\subst($X_0$)$,$\subst($X$)$;x_0,Oe_1',Oe_2')! |>
      \li!if x_0 then Oe_1' else Oe_2'!}
    \item \fmath{\subst(\li!X_0!) = \li!bool!}
  \end{itemize}
  In this case,
  \fmath{\li!Mite($\subst($X_0$)$,$\subst($X$)$;x_0,$\subst($Oe_1$)$,$\subst($Oe_2$)$)!}
  elaborates to \li!if x_0 then Oe_1' else Oe_2'!, and, by \textsc{L-If$_1$} and
  \textsc{L-Var}, \liftR{\subst(\tctx_2) |- \li!if x_0 then e_1 else e_2! ::
    \subst(\li!X!) |> \li!if x_0 then Oe_1' else Oe_2'!}. The second case has
  condition type \lstinline|@bool|:
  \begin{itemize}
    \item \mergeable{\subst(\li!X!) |> \li!@ite!}
    \item \ppx{\li!Mite($\subst($X_0$)$,$\subst($X$)$;x_0,Oe_1',Oe_2')! |>
      \li!@ite x_0 Oe_1' Oe_2'!}
    \item \fmath{\subst(\li!X_0!) = \li!@bool!}
  \end{itemize}
  In this case,
  \fmath{\li!Mite($\subst($X_0$)$,$\subst($X$)$;x_0,$\subst($Oe_1$)$,$\subst($Oe_2$)$)!}
  elaborates to \li!@ite x_0 Oe_1' Oe_2'!, and, by \textsc{L-If$_2$} and
  \textsc{L-Var}, \liftR{\subst(\tctx_2) |- \li!if x_0 then e_1 else e_2! ::
    \subst(\li!X!) |> \li!@ite x_0 Oe_1' Oe_2'!}.
\end{proof}

\begin{corollary}[Soundness of algorithmic lifting]
  Suppose\/ \liftA{\gctx; \empctx |- \li!e! :: \li!eta! ~ \li!X! |> \li!Oe! ~>
    \cstrs}. Given a specification type \li!theta!, if\/ \cstrsat{\sctx; \lctx;
    \gctx; \subst |- \li!X! = \li!theta!, \cstrs}, then\/
  \fmath{\subst(\li!Oe!)} elaborates to an expression \li!Oe'!, such that\/
  \liftR{\sctx; \lctx; \gctx; \empctx |- \li!e! :: \li!theta! |> \li!Oe'!}.
\end{corollary}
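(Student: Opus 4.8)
The plan is to obtain this corollary as an immediate specialization of the preceding \emph{Soundness of algorithmic lifting of open terms} theorem, which establishes exactly the same implication but under an arbitrary algorithmic typing context \tctx{} and with the declarative conclusion derived under \fmath{\subst(\tctx_2)}. The corollary is simply the instance where the local context is empty. Concretely, I would instantiate the open-terms theorem with \fmath{\tctx = \empctx}: its hypotheses \liftA{\gctx; \tctx |- \li!e! :: \li!eta! ~ \li!X! |> \li!Oe! ~> \cstrs} and \cstrsat{\sctx; \lctx; \gctx; \subst |- \li!X! = \li!theta!, \cstrs} then become verbatim the corollary's hypotheses \liftA{\gctx; \empctx |- \li!e! :: \li!eta! ~ \li!X! |> \li!Oe! ~> \cstrs} and \cstrsat{\sctx; \lctx; \gctx; \subst |- \li!X! = \li!theta!, \cstrs}, and the theorem hands back the elaborated \li!Oe'! together with a declarative lifting derivation.

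The only point requiring a check is that the substituted projection of the empty context is again empty. Recall \fmath{\subst(\tctx_2)} is defined by sending each entry \hastype{\li!x! :: \li!eta! ~ \li!X!} of \tctx{} to \hastype{\li!x! :: \subst(\li!X!)}; applied to the empty set of entries this produces no entries, so \fmath{\subst(\tctx_2) = \empctx} when \fmath{\tctx = \empctx}. Consequently the open-terms conclusion \liftR{\sctx; \lctx; \gctx; \subst(\tctx_2) |- \li!e! :: \li!theta! |> \li!Oe'!} collapses to \liftR{\sctx; \lctx; \gctx; \empctx |- \li!e! :: \li!theta! |> \li!Oe'!}, which is precisely the corollary's conclusion.

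There is no genuine obstacle at this final step: all of the substantive reasoning already resides in the open-terms theorem, whose proof proceeds by induction on the algorithmic lifting derivation and discharges each rule (e.g.\ \textsc{A-Var} via \textsc{L-Coerce} and \textsc{L-Var}, \textsc{A-Abs} and \textsc{A-App} congruently, and \textsc{A-If} by inverting the \lstinline!Mite! elaboration into the \textsc{L-If$_1$}/\textsc{L-If$_2$} cases), while tracking that each satisfied constraint makes the corresponding macro resolvable under \ppx{\li!DOT!}. The corollary merely repackages that generalized statement in the closed-context form consumed by the overall lifting algorithm, where top-level functions are lifted with no free variables; hence its proof is a one-line appeal to the open-terms theorem at \fmath{\tctx = \empctx}.
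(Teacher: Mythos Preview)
Your proposal is correct and matches the paper's approach exactly: the corollary is stated immediately after the open-terms theorem with no separate proof, so it is intended as the trivial specialization to \fmath{\tctx = \empctx}, using that \fmath{\subst(\empctx_2) = \empctx}. There is nothing to add.
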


\FloatBarrier

\section{Constraint Solving}

\begin{figure}[t]
  \footnotesize
  \begin{algorithmic}
    \Function{solve}{\fctx, \mctx, \lstinline!f!, \lstinline!theta!}
    \If{\fmath{(\li!f!, \li!theta!) \mapsto \subst \in \mctx}}
      \Return \mctx
    \EndIf

    \State \fmath{(\li!X!, \cstrs, \cstrs') \gets \fctx(\li!f!)}
    \State \fmath{\phi \gets} \Call{lower}{\fmath{\li!X! = \li!theta!, \cstrs}}
    \If{\Call{off-the-shelf}{\fmath{\phi}} returns unsat}
      \textbf{fail}
    \ElsIf{\Call{off-the-shelf}{\fmath{\phi}} returns sat with \subst}
      \If{there is a \fmath{\li!Mg(theta')! \in \cstrs'} s.t.
          \Call{solve}{\fctx,
            \fmath{\mctx[(\li!f!, \li!theta!) \mapsto \subst]},
            \lstinline!g!,
            \fmath{\subst(\li!theta'!)}} fails}
        \State \Call{solve}{\fmath{\fctx[\li!f! \mapsto
          (\li!X!, \cstrs \cup \set{\li!theta'! \neq \subst(\li!theta'!)},
           \cstrs')]},
          \mctx, \lstinline!f!, \lstinline!theta!}
      \Else
        \State \fmath{\bigcup} \Call{solve}{\fctx,
            \fmath{\mctx[(\li!f!, \li!theta!) \mapsto \subst]},
            \lstinline!g!,
            \fmath{\subst(\li!theta'!)}} for all
              \fmath{\li!Mg(theta')! \in \cstrs'}
      \EndIf
    \EndIf
    \EndFunction
  \end{algorithmic}

  \caption{Constraint solving algorithm}
  \label{ap:fig:solver}
\end{figure}

The constraint solving algorithm uses two maps. \fctx{} maps function names to
the generated constraints with a type variable \lstinline!X! as the placeholder
for the potential specification type, resulting from the lifting algorithm:
every entry of \fctx{} has the form \fmath{\li!f! \mapsto (\li!X!, \cstrs,
\cstrs')}. The generated constraints are partitioned into \cstrs,
which consists of all
constraints except for function call constraints, and \fmath{\cstrs'},
which consists of the functional call constraints (\lstinline!Mx!). Initially, \fctx{} consists of all
functions collected from the keyword \lstinline!Mlift! and the functions they depend
on. \mctx{} maps a pair of function name \lstinline!f! and its target type
\lstinline!theta!, called a \emph{goal}, to a type assignment \subst{}: each
entry has the form \fmath{(\li!f!, \li!theta!) \mapsto \subst}. Initially
\mctx{} is empty.

The constraint solving algorithm takes an initial \fctx{} and \mctx{}, and a
goal, and returns an updated \mctx{} that consists of the type assignments for
the goal and all the subgoals this goal depends on. The constraint solver is
applied to all goals from \lstinline!Mlift!, and the final \mctx{} is the union of
all returned \mctx{}. With \mctx{}, we can generate functions from the type
assignments for each goal. The global context is subsequently extended by these
generated functions, and the final lifting context is constructed by pairing each goal
with its corresponding generated function.

\Cref{ap:fig:solver} presents a naive algorithm for constraint
solving. The subroutine \textsc{Lower} is used to reduce all
constraints except for function call constraints to formulas in the
\emph{quantifier-free finite domain theory} (\verb!QF_FD!), then an
off-the-shelf solver (Z3) is then used to solve them. To do so, we
first decompose all compatibility class constraints, i.e.,
\incls{\li!X! ~ \li!eta!}, into \emph{atomic} classes. For example,
\incls{\li!X! ~ \li!list -> int!} is decomposed into \incls{\li!X_1! ~
  \li!list!} and \incls{\li!X_2! ~ \li!int!}, with \lstinline!X!
substituted by \lstinline!X_1->X_2! in all other constraints. Then the
newly generated compatibility class constraints can be reduced to a
disjunction of all possible specification types in this class. For
example, \incls{\li!X_1! ~ \li!list!}  reduces to \fmath{\li!X_1! =
  \li!list! \lor \li!X_1! = \li!&@list<=! \lor \li!X_1! =
  \li!&@list==!}, if \lstinline!@list<=! and \lstinline!@list==! are
the only OADTs for \lstinline!list!. After this step, all type
variables are compatible with an atomic simple type, so we can also
decompose other constraints into a set of base cases according to the
rules of their relations. Note that while mergeability is not one of
the constraints generated by the lifting algorithm, such constraints
still arise from \lstinline!Mite! and \lstinline!Mmatch!. When
multiple assignments are valid for a particular type variable, we
prefer the ``cheaper'', i.e., more permissive, solution. For example,
we prefer public type \lstinline!list! over OADTs if possible, and
prefer \lstinline!@list==! over \lstinline!@list<=!. We encode these
preferences as \emph{soft constraints}, and assign a bigger penalty to
more restrictive types. The penalty is inferred by analyzing
the coercion relations: a more permissive type can be coerced to a
more restrictive type but not the other way around, because the
restrictive type hides more information.

Once other constraints are solved, the type parameters of function call
constraints are concretized. The algorithm is then recursively applied to this
subgoal, i.e., the functional call paired with the concretized type, assuming the
original goal is solved by extending \mctx{} with the original goal and the type
assignments. This handles potential (mutual) recursion. If a subgoal fails,
its target type will be added as a refutation to the corresponding constraints,
and backtrack.

It is easy to see that this algorithm terminates: every recursive call to
\textsc{solve} either adds a refutation that reduces the search space of the
non-function-call constraints, or extends \mctx{} which is finitely bounded by
the number of functions and the number of OADTs, hence reducing the search space
of function-call constraints.

Our actual implementation maintains a more sophisticated state of \mctx{} so
that we do not re-solve the same goal if it has already been solved or rejected.
We also exploit the incremental solver of Z3 to help with performance.

\section{Case Study}

In addition to the applications in \taype's original evaluation, our
evaluation additionally includes the following case study.

\paragraph{Dating application} Consider a functionality of matching potential
soulmates. Each party owns their private profile with personal information, such
as gender, income and education. They also have a private preference for their
partner, encoded as predicates over profiles of \emph{both} parties. These
predicates are expressions with boolean connectives, integer arithmetics and
numeric comparisons. For example, one user may stipulate that the sum of both parties'
income exceeds a particular amount. The peer matching function takes these
private profiles and predicates, and returns a boolean indicating whether they
are a good match, by evaluating the predicate expression on the profiles. The
private predicates have many potential policies. As predicates are essentially
ASTs, participating parties may agree on disclosing only the depth of the
predicates, or revealing the AST nodes but not the operands, or even revealing
only the boolean connectives but keep the integer expressions secret. In
\taypsi, the peer matching function, its auxiliary functions, and the data types
they depend on can be implemented in the conventional way, without knowing the
policies. The private matching function can be obtained by composing with the
desired policy. Updating policy or updating the matching algorithm can be done
independently.

\section{Optimization}

\paragraph{Smart array} An oblivious data value is translated to an
array of oblivious integers (i.e., an oblivious array) as its private
representation. Thus, constructing and extracting oblivious data
values correspond to performing array operations. Our smart array
implementation, parameterized by the underlying cryptographic backend,
aims to reduce the cost of these operations. First, array slicing and
concatenation are zero-cost (even when they happen alternately). Our
implementation does not create new arrays when performing these
operations, and allow the results to follow the original structure in
\taypsi for as long as possible, until a \lstinline!mux! forces them
to be flattened. Conceptually, the smart arrays delay these operations,
performing them all at once when flattening is required. Second, smart
arrays keep track of values that are actually public or arbitrary,
even though they are seemingly ``encrypted''. This happens when we
apply a section function to a known value, and there is no reason to
actually ``forget'' what we already know. For example, instead of
evaluating %
\lstinline!@bool#s true! to the encrypted value \lstinline![true]!,
the smart array can simply marks this boolean \lstinline!true! as
``encrypted'' without actually doing so. This allows some expensive
cryptographic operations to be skipped, especially \lstinline!mux!, by
exploiting this known information. For example, when we know the left
and right branches of a \lstinline!mux! are \lstinline!true! and
\lstinline!false! respectively, then the result is simply the private
condition, without needing to perform the \lstinline!mux!. This
strategy works because \lstinline!mux [b] [true] [false]! is
equivalent to \lstinline![b]!. This kind of scenario appears
frequently in pattern matching, where the constructor ``tag'' of the
ADT returned in each branch may be publicly known. This optimization
can make the commonly used function \lstinline!map! free of
cryptographic operations (except for the higher-order function applied
to data structure's payload of course). Recall that the list
\lstinline!map! function returns \lstinline!Nil! if the input list is
a \lstinline!Nil!, and returns a \lstinline!Cons! if the input list is
a \lstinline!Cons!. Thus the constructor tag of the result is
\lstinline!mux tag [Nil] [Cons]!, where \lstinline!tag! is the
constructor tag of the input list. Similar to the previous example,
this expression can be reduced to \lstinline!tag! without performing
\lstinline!mux!.  Intuitively, \lstinline!map! function does not
modify the structure of the input data, so the constructor tags are
preserved. On the other hand, keeping track of arbitrary values that
are used for padding is also helpful. If we know a branch is arbitrary
in a \lstinline!mux!, we can simply return the other branch.

\paragraph{Reshape guard} Reshaping an OADT to the same public view
should do nothing. Reshaping with the same public view is a quite
common scenario, especially when the partial order defined on the
public view is a total order.  For example, the join of the public
views of \lstinline!@list<=! is simply the maximum one, which is equal
to one of these public views. Therefore, when we reshape the
\lstinline!@list<=!s to this common public view, one of them does not
require any additional work. The reshape guard optimization is very
simple: we instrument every reshape function with a check for public
view equality, and skip reshape operations when the views are the
same.

\paragraph{Private representation size memoization} An oblivious type
definition is translated to a function from a public view to a size
(of the corresponding array), in order to extract private data from
oblivious array or create new one.  Invoking these size functions in a
recursive function can potentially introduce asymptotic slowdown, as
the sizes of the private representation are calculated
repeatedly. Previous work (\taype) solves this problem by applying a
tupling optimization to the section and retraction functions, which
are the only functions that require calculating these sizes. However,
in \taypsi this size calculation can happen to any lifted private
functions, and the tupling optimization is quite brittle to apply to
arbitrary functions. To avoid this slowdown, we memoize the map from
public views to the representation sizes. If the public view is
integer, then the memoization is standard: we create a hash table for
each size function, such as \lstinline!@list<=!, and look up the table
if the size was calculated before, or add the new result to the
table. If the public view is an ADT, we automatically embed the
representation size within the public view type itself, and update the
introduction and elimination forms and the \taypsi programs using
them, accordingly. For example, a Peano number public view
\lstinline!nat! is augmented as follow.
\begin{lstlisting}
data nat = nat_memo * size
data nat_memo = Zero | Succ nat
\end{lstlisting}
Whenever we need to calculate the size, we simply project it from
\lstinline!nat!, without actually computing the size function.

\FloatBarrier

\section{Evaluation}

All runtime experiments report running time in milliseconds. A
\textcolor{red}{\bf failed} entry indicates the benchmark either timed out after
$5$ minutes or exceeded the memory bound of $8$\,GB.

\Cref{ap:fig:list-bench} and \Cref{ap:fig:tree-bench} presents the results of
running the list and tree micro-benchmarks, respectively. The \taypsi columns
reports the running time relative to \taype.

\Cref{ap:fig:list-opt} and \Cref{ap:fig:tree-opt} presents the results of the
ablation tests of list and tree micro-benchmarks, respectively. The table also
shows the slowdown relative the that of the fully optimized version reported in
\Cref{ap:fig:list-bench} and \Cref{ap:fig:tree-bench}.

\Cref{ap:fig:list-memo} and \Cref{ap:fig:tree-memo} presents the results of the
ADT memoization tests of list and tree micro-benchmarks, respectively. The ADT
public view used in list examples is the maximum length as a Peano number, while
the ADT public view used in tree examples is the maximum spine. The ``No
memoization'' column also reports the slowdown relative to ``Base'' column with
memoization enabled.

\Cref{ap:fig:compile-stats} presents the statistics in compilation. It reports
total compilation time (Total), time spent on constraint solving (Solver), and
the number of solver queries (\#Queries). The table also presents the number of
functions (\#Functions) being translated, the number of atomic types (\#Types)
and the total number of atomic components in function types (\#Atoms). The
reported running time is in seconds.

\begin{figure}[h]
  \centering
  \begin{tabular}{llll}
Benchmark & \taype (ms) & \taype-SA (ms) & \taypsi (ms) \\
\midrule
\verb|elem_1000| & 8.15 & 8.11 & 8.02 \hfill(98.47\%, 98.89\%) \\
\verb|hamming_1000| & 15.09 & 15.21 & 14.46 \hfill(95.79\%, 95.04\%) \\
\verb|euclidean_1000| & 67.43 & 67.55 & 67.32 \hfill(99.84\%, 99.66\%) \\
\verb|dot_prod_1000| & 66.12 & 66.19 & 66.41 \hfill(100.43\%, 100.33\%) \\
\verb|nth_1000| & 11.98 & 12.05 & 12.04 \hfill(100.54\%, 99.93\%) \\
\verb|map_1000| & 2139.55 & 5.07 & 5.14 \hfill(0.24\%, 101.44\%) \\
\verb|filter_200| & \textcolor{red}{\bf failed} & \textcolor{red}{\bf failed} & 86.86 \hfill(N/A, N/A) \\
\verb|insert_200| & 5796.69 & 88.92 & 88.07 \hfill(1.52\%, 99.04\%) \\
\verb|insert_list_100| & \textcolor{red}{\bf failed} & \textcolor{red}{\bf failed} & 4667.66 \hfill(N/A, N/A) \\
\verb|append_100| & 4274.7 & 45.09 & 44.18 \hfill(1.03\%, 97.99\%) \\
\verb|take_200| & 169.07 & 3.05 & 3.09 \hfill(1.83\%, 101.15\%) \\
\verb|flat_map_200| & \textcolor{red}{\bf failed} & \textcolor{red}{\bf failed} & 7.3 \hfill(N/A, N/A) \\
\verb|span_200| & 13529.34 & 124.79 & 91.22 \hfill(0.67\%, 73.09\%) \\
\verb|partition_200| & \textcolor{red}{\bf failed} & \textcolor{red}{\bf failed} & 176.49 \hfill(N/A, N/A) \\
\end{tabular}

  \caption{Micro-benchmarks for list examples}
  \label{ap:fig:list-bench}
\end{figure}

\begin{figure}[h]
  \centering
  \begin{tabular}{llll}
Benchmark & \taype (ms) & \taype-SA (ms) & \taypsi (ms) \\
\midrule
\verb|elem_16| & 446.81 & 459.1 & 404.9 \hfill(90.62\%, 88.19\%) \\
\verb|prob_16| & 13082.52 & 12761.7 & 12735.16 \hfill(97.34\%, 99.79\%) \\
\verb|map_16| & 4414.69 & 262.14 & 215.67 \hfill(4.89\%, 82.27\%) \\
\verb|filter_16| & 8644.14 & 452.04 & 433.7 \hfill(5.02\%, 95.94\%) \\
\verb|swap_16| & \textcolor{red}{\bf failed} & \textcolor{red}{\bf failed} & 4251.36 \hfill(N/A, N/A) \\
\verb|path_16| & \textcolor{red}{\bf failed} & 6657.07 & 894.88 \hfill(N/A, 13.44\%) \\
\verb|insert_16| & 83135.81 & 8093.81 & 1438.87 \hfill(1.73\%, 17.78\%) \\
\verb|bind_8| & 21885.65 & 494.98 & 532.86 \hfill(2.43\%, 107.65\%) \\
\verb|collect_8| & \textcolor{red}{\bf failed} & \textcolor{red}{\bf failed} & 143.38 \hfill(N/A, N/A) \\
\end{tabular}

  \caption{Micro-benchmarks for tree examples}
  \label{ap:fig:tree-bench}
\end{figure}

\begin{figure}[h]
  \centering
  \begin{tabular}{llll}
Benchmark & No smart array (ms) & No reshape guard (ms) & No memoization (ms) \\
\midrule
\verb|elem_1000| & 18.37 \hfill(2.29x) & 8.06 \hfill(1.0x) & 17.76 \hfill(2.21x) \\
\verb|hamming_1000| & 51.73 \hfill(3.58x) & 14.53 \hfill(1.01x) & 35.5 \hfill(2.46x) \\
\verb|euclidean_1000| & 79.07 \hfill(1.17x) & 67.31 \hfill(1.0x) & 76.36 \hfill(1.13x) \\
\verb|dot_prod_1000| & 87.77 \hfill(1.32x) & 66.15 \hfill(1.0x) & 77.33 \hfill(1.16x) \\
\verb|nth_1000| & 22.69 \hfill(1.88x) & 12.18 \hfill(1.01x) & 20.53 \hfill(1.7x) \\
\verb|map_1000| & 2106.43 \hfill(409.89x) & 139.91 \hfill(27.23x) & 37.71 \hfill(7.34x) \\
\verb|filter_200| & 5757.28 \hfill(66.29x) & 93.93 \hfill(1.08x) & 114.7 \hfill(1.32x) \\
\verb|insert_200| & 255.43 \hfill(2.9x) & 94.61 \hfill(1.07x) & 89.32 \hfill(1.01x) \\
\verb|insert_list_100| & 22806.87 \hfill(4.89x) & 5186.07 \hfill(1.11x) & 4771.28 \hfill(1.02x) \\
\verb|append_100| & 4226.32 \hfill(95.66x) & 50.79 \hfill(1.15x) & 61.77 \hfill(1.4x) \\
\verb|take_200| & 169.45 \hfill(54.91x) & 12.92 \hfill(4.19x) & 4.68 \hfill(1.52x) \\
\verb|flat_map_200| & 5762.63 \hfill(789.08x) & 16.99 \hfill(2.33x) & 60.03 \hfill(8.22x) \\
\verb|span_200| & 5924.1 \hfill(64.95x) & 99.83 \hfill(1.09x) & 120.09 \hfill(1.32x) \\
\verb|partition_200| & 11528.0 \hfill(65.32x) & 185.16 \hfill(1.05x) & 231.06 \hfill(1.31x) \\
\end{tabular}

  \caption{Ablation tests for list examples}
  \label{ap:fig:list-opt}
\end{figure}

\begin{figure}[h]
  \centering
  \begin{tabular}{llll}
Benchmark & No smart array (ms) & No reshape guard (ms) & No memoization (ms) \\
\midrule
\verb|elem_16| & 433.73 \hfill(1.07x) & 404.05 \hfill(1.0x) & 402.15 \hfill(0.99x) \\
\verb|prob_16| & 13019.56 \hfill(1.02x) & 12746.24 \hfill(1.0x) & 12731.89 \hfill(1.0x) \\
\verb|map_16| & 4410.84 \hfill(20.45x) & 635.18 \hfill(2.95x) & 213.96 \hfill(0.99x) \\
\verb|filter_16| & 8674.71 \hfill(20.0x) & 1131.02 \hfill(2.61x) & 440.16 \hfill(1.01x) \\
\verb|swap_16| & 8671.52 \hfill(2.04x) & 5471.4 \hfill(1.29x) & 4246.39 \hfill(1.0x) \\
\verb|path_16| & 9108.54 \hfill(10.18x) & 1083.21 \hfill(1.21x) & 888.95 \hfill(0.99x) \\
\verb|insert_16| & 19101.36 \hfill(13.28x) & 2151.83 \hfill(1.5x) & 1432.92 \hfill(1.0x) \\
\verb|bind_8| & 19647.83 \hfill(36.87x) & 870.93 \hfill(1.63x) & 534.3 \hfill(1.0x) \\
\verb|collect_8| & 11830.6 \hfill(82.51x) & 152.29 \hfill(1.06x) & 186.92 \hfill(1.3x) \\
\end{tabular}

  \caption{Ablation tests for tree examples}
  \label{ap:fig:tree-opt}
\end{figure}

\begin{figure}[h]
  \centering
  \begin{tabular}{lll}
Benchmark & Base (ms) & No memoization (ms) \\
\midrule
\verb|elem_1000| & 13.45 & 18.2 \hfill(1.35x) \\
\verb|hamming_1000| & 25.98 & 36.35 \hfill(1.4x) \\
\verb|euclidean_1000| & 73.22 & 77.07 \hfill(1.05x) \\
\verb|dot_prod_1000| & 77.65 & 77.92 \hfill(1.0x) \\
\verb|nth_1000| & 17.8 & 20.89 \hfill(1.17x) \\
\verb|map_1000| & 19.95 & 42.53 \hfill(2.13x) \\
\verb|filter_200| & 87.22 & 118.54 \hfill(1.36x) \\
\verb|insert_200| & 89.01 & 89.77 \hfill(1.01x) \\
\verb|insert_list_100| & 4719.13 & 4809.16 \hfill(1.02x) \\
\verb|append_100| & 45.05 & 63.83 \hfill(1.42x) \\
\verb|take_200| & 4.08 & 5.17 \hfill(1.27x) \\
\verb|flat_map_200| & 7.32 & 69.1 \hfill(9.45x) \\
\verb|span_200| & 92.73 & 124.06 \hfill(1.34x) \\
\verb|partition_200| & 176.68 & 238.27 \hfill(1.35x) \\
\end{tabular}

  \caption{ADT memoization tests for list examples}
  \label{ap:fig:list-memo}
\end{figure}

\begin{figure}[h]
  \centering
  \begin{tabular}{lll}
Benchmark & Base (ms) & No memoization (ms) \\
\midrule
\verb|elem_16| & 425.41 & 416.13 \hfill(0.98x) \\
\verb|prob_16| & 12772.3 & 12762.84 \hfill(1.0x) \\
\verb|map_16| & 268.07 & 255.66 \hfill(0.95x) \\
\verb|filter_16| & 494.38 & 485.96 \hfill(0.98x) \\
\verb|swap_16| & 4407.25 & 4329.39 \hfill(0.98x) \\
\verb|path_16| & 940.83 & 944.69 \hfill(1.0x) \\
\verb|insert_16| & 1603.88 & 1770.76 \hfill(1.1x) \\
\verb|bind_8| & 553.0 & 585.06 \hfill(1.06x) \\
\verb|collect_8| & 143.79 & 187.25 \hfill(1.3x) \\
\end{tabular}

  \caption{ADT memoization tests for tree examples}
  \label{ap:fig:tree-memo}
\end{figure}

\begin{figure}[h]
  \centering
  \begin{tabular}{l|rrr|rrr}
Suite & \#Functions & \#Types & \#Atoms & \#Queries & Total (s) & Solver (s) \\
\midrule
List & 20 & 7 & 70 & 84 & 0.47 & 0.081 \\
Tree & 14 & 9 & 44 & 31 & 0.47 & 0.024 \\
List (stress) & 20 & 12 & 70 & 295 & 3.45 & 2.8 \\
Dating & 4 & 13 & 16 & 10 & 0.58 & 0.019 \\
Medical Records & 20 & 19 & 58 & 51 & 0.48 & 0.072 \\
Secure Calculator & 2 & 9 & 6 & 5 & 1.34 & 0.013 \\
Decision Tree & 2 & 13 & 6 & 16 & 0.28 & 0.016 \\
K-means & 16 & 11 & 68 & 86 & 1.62 & 0.95 \\
Miscellaneous & 11 & 7 & 42 & 47 & 0.26 & 0.065 \\
\end{tabular}

  \caption{Compilation overhead}
  \label{ap:fig:compile-stats}
\end{figure}

\end{document}